\documentclass{theoretics}

\title{Optimal Padded Decomposition For Bounded Treewidth Graphs}

\ThCSauthor[BarIlan]{Arnold~Filtser}{arnold.filtser@biu.ac.il}[0000-0001-9578-9304]
\ThCSauthor[HassoPlattner]{Tobias Friedrich}{tobias.friedrich@hpi.de}[0000-0003-0076-6308]
\ThCSauthor[NextSilicon]{Davis Isaac}{davisisaac22@gmail.com}[0000−0001−5559−7471]
\ThCSauthor[Waterloo]{Nikhil Kumar}{nikhil.kumar2@uwaterloo.ca}[0000-0001-8634-6237]
\ThCSauthor[Amherst]{Hung Le}{hungle@cs.umass.edu}[0000-0001-8223-9944]
\ThCSauthor[HassoPlattner]{Nadym Mallek}{nadym.mallek@hpi.de}[0000-0002-4370-5145]
\ThCSauthor[HassoPlattner]{Ziena Zeif}{ziena.zeif@hpi.de}[0000-0003-0378-1458]

\ThCSaffil[BarIlan]{Bar-Ilan University, Ramat Gan, Israel}
\ThCSaffil[HassoPlattner]{Hasso Plattner Institute, University of Potsdam, Germany}
\ThCSaffil[NextSilicon]{NextSilicon, Berlin, Germany}
\ThCSaffil[Waterloo]{University of Waterloo, Waterloo Ontario, Canada}
\ThCSaffil[Amherst]{University of Massachusetts Amherst, USA}

\ThCSthanks{Part of the results in this paper were previously published in conference STOC'23 \cite{FIKMZ23}. \\ Arnold Filtser was supported by the Israel Science Foundation (grant No. 1042/22). Hung Le was supported by the NSF CAREER Award No. CCF-223728, an NSF Grant No. CCF-2121952, and a Google Research Scholar Award.}
\ThCSshortnames{A.\ Filtser, T.\ Friedrich, D.\ Isaac, N.\ Kumar, H.\ Le, N.\ Mallek, Z.\ Zeif }  
\ThCSshorttitle{Optimal Padded Decomposition For Bounded Treewidth Graphs}
\ThCSyear{2025}
\ThCSarticlenum{22}
\ThCSreceived{Nov 11, 2024}
\ThCSrevised{Jul 18, 2025}
\ThCSaccepted{Aug 17, 2025}
\ThCSpublished{Oct 10, 2025}
\ThCSdoicreatedtrue
\ThCSkeywords{padded decompositions, treewidth, tree-ordered nets, sparse covers, padded partition covers, metric embeddings, max-flow/min-multicut}

\allowdisplaybreaks
\usepackage{amsfonts}
\usepackage{thm-restate}
\usepackage{tcolorbox}
\usepackage[capitalise,noabbrev,nameinlink]{cleveref}

\newcommand{\tp}{{\rm tp}}
\newcommand{\tw}{{\rm tw}}
\newcommand{\pw}{{\rm pw}}
\newcommand{\Lip}{{\rm Lip}}
\newcommand{\poly}{\mathrm{poly}}

\newcommand{\R}{\mathbb{R}}
\newcommand{\diam}{\mathrm{diam}}
\newcommand{\supp}{\mathrm{supp}}
\newcommand{\Texp}{\mathsf{Texp}}
\newcommand{\mt}{\mathcal{T}}
\newcommand{\rank}{\mathrm{rank}}
\newcommand{\mP}{\mathcal{P}}
\newcommand{\etal}{{\em et al. \xspace}}
\newcommand{\UST}{\textsf{UST}\xspace}
\newcommand{\OPT}{\text{OPT}}
\newcommand{\UTSP}{\textsf{UTSP}\xspace}
\newcommand{\TSP}{\textsf{TSP}\xspace}
\newcommand{\rt}{\mbox{\rm rt}}
\newcommand{\calD}{\mathcal{D}}
\newcommand{\E}{{\mathbb{E}}}
\newcommand{\ProblemName}[1]{\textsf{#1}}
\newcommand{\ZEX}{\ProblemName{$0$-Extension}\xspace}

\def\cF{\ensuremath{\mathcal{F}}}
\def\coreSet{\mathcal{R}}
\def\uncovset{\mathtt{Uncov}}
\def\cP{\ensuremath{\mathcal{P}}}
\def\ball{\mathbf{B}}
\def\eps{\varepsilon}
\def\cC{\ensuremath{\mathcal{C}}}

\def\core{R}
\def\shadow{shadow}
\def\centre{Y}
\def\C{\mathfrak{C}}
\def\barC{\bar{\C}}

\usepackage{todonotes}

\addbibresource{bibliography.bib}

\crefname{conjecture}{Conjecture}{Conjectures}
\crefname{observation}{Observation}{Observations}
\crefname{claim}{Claim}{Claims}

% For restated theorems
\makeatletter
\newcommand\IfRestateTF{%
  \ifx\label\thmt@gobble@label % or just compared to \@gobble
    \expandafter\@firstoftwo
  \else
    \expandafter\@secondoftwo
  \fi
}
\makeatother
\newcommand{\RestateRemark}{\IfRestateTF{{\normalfont\bfseries (Restated) }}{}}

\begin{document}
\maketitle

\begin{abstract}
  A $(\beta,\delta,\Delta)$-padded decomposition of an edge-weighted graph $G = (V,E,w)$ is a stochastic decomposition into clusters of diameter at most $\Delta$ such that for every vertex $v\in V$, the probability that $\ball_G(v,\gamma\Delta)$ is entirely contained in the cluster containing $v$ is at least $e^{-\beta\gamma}$ for every $\gamma \in [0,\delta]$. Padded decompositions have been studied for decades and have found numerous applications, including metric embedding, multicommodity flow-cut gap, multicut, and zero extension problems, to name a few. In these applications, parameter $\beta$, called the \emph{padding parameter}, is the most important parameter since it decides either the distortion or the approximation ratios. For general graphs with $n$ vertices, the \emph{padding parameter} $\beta$ is known to be in $\Theta(\log n)$.
 
 Klein, Plotkin, and Rao~\cite{KPR93} (KPR) showed that $K_r$-minor-free graphs have padding parameter $\beta = O(r^3)$, which is a significant improvement over general graphs when $r$ is a constant. However, when $r= \Omega(\log n)$, the padding parameter in KPR decomposition can be much worse than $\log n$.  A long-standing conjecture is that constructing a padded
decomposition for $K_r$-minor-free graphs is possible with padding parameter $\beta = O(\log r)$. Despite decades of research, the best-known result is $\beta = O(r)$, even for graphs with treewidth at most $r$.

In this work, we make significant progress toward the aforementioned conjecture by showing that graphs with treewidth $\tw$ admit a padded decomposition with padding parameter $O(\log \tw)$, which is tight. Our padding parameter is strictly better than $O(\log n)$ whenever $\tw = n^{o(1)}$, and is never worse than what is known for general graphs. As corollaries, we obtain an exponential improvement in dependency on treewidth in a host of algorithmic applications: $O(\sqrt{ \log n \cdot \log(\tw)})$ flow-cut gap,  the maxflow-min multicut ratio of $O(\log(\tw))$, an $O(\log(\tw))$ approximation for the 0-extension problem, an $\ell^{O(\log n)}_\infty$ embedding with distortion $O(\log \tw)$, and an $O(\log \tw)$ bound for integrality gap for the uniform sparsest cut.
\end{abstract}

%\maketitle
%\newpage
%\setcounter{secnumdepth}{5}
%\tableofcontents

%\pagenumbering{arabic}

%\newpage
\section{Introduction} \label{sec:intro}
A basic primitive in designing divide-and-conquer graph algorithms is partitioning a graph into clusters such that there are only a few edges between clusters. This type of primitive has been extensively applied in the design of divide-and-conquer algorithms. Since guaranteeing a few edges crossing different clusters in the worst case could be very expensive, we often seek a good guarantee in a probabilistic sense: the probability that two vertices $u$ and $v$ are placed into two different clusters is proportional to $d_G(u,v)/\Delta$ where $d_G(u,v)$ is the distance between $u$ and $v$ in the input graph $G$ and $\Delta$ is the upper bound on the diameter of each cluster. A (stochastic) partition of $V(G)$ with this property is called a \emph{separating decomposition} of $G$~\cite{Fil19Approx}.

In this work, we study a stronger notion of stochastic decomposition, called \emph{padded decomposition}. More formally, given a weighted graph $G=(V,E,w)$, a partition is $\Delta$-\emph{bounded} if the diameter of every cluster is at most $\Delta$. A distribution $\mathcal{D}$ over partitions is called a $(\beta,\delta,\Delta)$-\emph{padded decomposition}, if every partition in the support is $\Delta$-bounded, and for every vertex $v\in V$ and $\gamma\in[0,\delta]$, we have:

\begin{equation}
	\Pr[\ball_G(v,\gamma\Delta)\subseteq P(v)] \ge e^{-\beta\gamma}  \qquad \text{where $P(v)$ is the cluster containing $v$.}
\end{equation}

That is, the probability that the entire ball $\ball_G(v,\gamma\Delta)$ of radius $\gamma\Delta$ around $v$ is clustered together, is at least $e^{-\beta\gamma}$. If $G$ admits a $(\beta,\delta,\Delta)$-padded decomposition for every $\Delta>0$, we say that $G$ admits $(\beta,\delta)$-padded decomposition scheme. The parameter $\beta$ is usually referred to as a padding parameter.

In an influential work, Klein, Plotkin and Rao \cite{KPR93} showed that every $K_r$ minor free graph admits a weak $\left(O(r^3),\Omega(1)\right)$-padded decomposition scheme; the padding parameter is $O(r^3)$.  This result has found numerous algorithmic applications for solving problems in $K_r$-minor-free graphs; a few examples are the flow-cut gap of $O(r^3)$ for uniform multicommodity flow~\cite{KPR93}, extending Lipschitz functions with absolute extendability of $O(r^3)$~\cite{LN05},  the maxflow-min multicut ratio of $O(r^3)$ for the multicommodity flow with maximum total commodities~\cite{TV93}, an $O(r^3 \log\log(n))$ approximation for minimum linear arrangement,  minimum containing interval graphs~\cite{RR05}, an $O(r^3)$ approximation for the 0-extension problem~\cite{CKR04}, and an $O(r^3\log n)$-approximation for the minimum bisection problem~\cite{FK02}. An important takeaway is that key parameters quantifying the quality of these applications depend (linearly) on the padding parameter; any improvement to the padding parameter would imply the same improvement in the applications. 

Fakcharoenphol and Talwar \cite{FT03} improved the padding parameter of $K_r$ minor free graphs to $O(r^2)$.  Abraham, Gavoille, Gupta, Neiman, and Talwar \cite{AGGNT19} (see also \cite{Fil19Approx}) improved the padding parameter to $O(r)$. These improvements imply an $O(r)$ dependency on the minor size of all aforementioned applications. The only lower bound is $\Omega(\log r)$ coming from the fact that $r$-vertex expanders (trivially) exclude $K_r$ as a minor while having padding parameter $\Omega(\log r)$ \cite{Bar96}. Closing the gap between the upper bound of $O(r)$ and the lower bound $\Omega(\log r)$ has been an outstanding problem asked by various authors \cite{FT03,Lee12,AGGNT19,Fil19Approx}.

\begin{conjecture}\label{conj:Lee} There exists a padded decomposition of any $K_r$-minor-free metric with padding parameter $\beta = O(\log r)$.
\end{conjecture}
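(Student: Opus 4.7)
The plan is to reduce the $K_r$-minor-free case to the bounded-treewidth theorem highlighted in the abstract, using the Robertson--Seymour structure theorem as the structural scaffold. A $K_r$-minor-free graph decomposes as a clique-sum of graphs that are almost-embeddable in a surface of genus $\genus = \genus(r)$, with $\apex = \apex(r)$ apex vertices and $\vortex = \vortex(r)$ vortices of bounded width. The first step would be to obtain, for each of these structural features, a padded decomposition with padding parameter $O(\log r)$. For apices an additive $O(\log \apex)$ is essentially free (handle them separately via a trivial union bound). For bounded-width vortices one can peel them into subgraphs whose treewidth is polynomial in $r$, on which the new $O(\log \tw)$ theorem of this paper applies directly. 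For the bounded-genus core, one would hope to mimic the treewidth argument after cutting along $O(\genus)$ non-separating cycles, reducing to the planar/treewidth setting.

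The second step is to glue these partial decompositions along the clique-sum decomposition tree without amplifying $\beta$. The natural approach is recursion over the clique-sum tree: at each node one already has a padded decomposition of the subgraph below, and one combines it with a decomposition of the current piece by synchronizing the random choices on shared cliques. The critical quantitative requirement is that this composition adds at most a constant to $\beta$, independent of the depth of the tree; otherwise one recovers the $O(r)$ bound of Abraham--Gavoille--Gupta--Neiman--Talwar. A promising route is to use a terminal-based decomposition along each shared clique and re-use the fact (central to the treewidth proof) that a small bag can be ``absorbed'' into a neighbouring cluster without spoiling padding, because a clique of size $k$ behaves, metrically, like a star of $k$ leaves.

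The hard part, which I expect to be the main obstacle, is the bounded-genus step: matching the $\Omega(\log \genus)$ expander lower bound is itself an open problem, and it is precisely the step where KPR-style peeling loses a factor of $r$ instead of $\log r$. The treewidth result of this paper opens a possible door: if one can show that a bounded-genus metric embeds (with $O(1)$ distortion in expectation) into a distribution over graphs of treewidth $\poly(\genus)$, then applying the $O(\log \tw)$ decomposition in the target and pulling back along the embedding would yield $O(\log \genus)$ padding, and hence $O(\log r)$ overall after the clique-sum gluing. Absent such an embedding, one would need a direct combinatorial argument that adapts the recursion used in the treewidth proof to account for the handles of the surface --- handling each handle as a ``soft'' cut that forks only a constant number of sub-instances, so that the total recursion depth is $O(\log \genus)$ rather than $\Theta(\genus)$.
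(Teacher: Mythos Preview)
The statement you are attempting to prove is a \emph{conjecture}, not a theorem of the paper. The paper does not prove \Cref{conj:Lee}; it explicitly presents it as an open problem and proves only the special case of bounded-treewidth graphs (\Cref{thm:paddedTW}). There is therefore no ``paper's own proof'' to compare against.

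Your proposal is a reasonable research outline for attacking the open problem, but it is not a proof, and you yourself identify the central gap: the bounded-genus step. Obtaining padding parameter $O(\log \genus)$ for genus-$\genus$ graphs is itself open, and the Robertson--Seymour structural parameters $\genus(r),\apex(r),\vortex(r)$ are not known to be polynomial in $r$; they are only known to be bounded by some function of $r$, which could be enormous. So even if every step of your plan worked, you would obtain $O(\log f(r))$ for some unspecified $f$, not $O(\log r)$. The clique-sum gluing step is also nontrivial: keeping $\beta$ from accumulating along an unbounded-depth clique-sum tree is precisely the difficulty that the known $O(r)$ proof of \cite{AGGNT19} does not circumvent, and your sketch (``absorb a small clique like a star'') does not explain why the accumulation is avoided.

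For context, the paper does discuss a reduction in the spirit of your proposal (see \Cref{lem:AdditiveEmbeddingToDecomposition} in \Cref{sec:apps}): if $K_r$-minor-free graphs admitted stochastic additive embeddings into graphs of treewidth $\poly(r/\eps)$, then \Cref{thm:paddedTW} would yield a stochastic decomposition with parameter $O(\log r)$. This is the same ``embed into low treewidth, then apply the new theorem'' idea you suggest for the genus step, and the paper is explicit that the required embedding is not known.
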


Progress on \Cref{conj:Lee} has been made on very special classes of minor-free graphs. Specifically, graphs with \emph{pathwidth} $\pw$ admit padding parameter $O(\log\pw)$ \cite{AGGNT19}. This result implies that $n$-vertex graphs with treewidth $\tw$ admit padding parameter $O(\log\tw+\log\log n)$, since the pathwidth of graphs of treewidth $\tw$ is $O(\tw\cdot \log(n))$ (also see~\cite{KK17}). However, the padding parameter depends on $n$. Thus, a significant step towards \Cref{conj:Lee} is to show that graphs of treewidth $\tw$ admit a padded decomposition with padding parameter $O(\log\tw)$. The best-known result (without the dependency on $n$) for small treewidth graphs is the same as minor-free graphs, implied by the fact that such graphs of treewidth $\tw$ exclude $K_{\tw+2}$ as a minor.  Our first main result is to prove \Cref{conj:Lee} for the special case of treewidth-$\tw$ graphs:

\begin{theorem}\label{thm:paddedTW}
	Every weighted graph $G$ with treewidth $\tw$ admits a  $\left(O( \log\tw),\Omega(1)\right)$-padded decomposition scheme. Furthermore, such a partition can be sampled efficiently.
\end{theorem}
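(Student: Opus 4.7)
The plan is to construct the padded decomposition by a recursive procedure that carries out Calinescu--Karloff--Rabani (CKR) style ball carving at each node of a tree decomposition of $G$, using bag vertices as potential centers. Let $T$ be a tree decomposition of $G$ with width $\tw$; I would root $T$ and process its nodes top-down, maintaining the invariant that when processing a node $t$ with bag $B_t$, all still-unclustered vertices lie in the subtree $T_t$ and the ``boundary'' separating them from the already-clustered vertices is contained in $B_t$, which has size at most $\tw+1$. At node $t$, I sample a random radius $R$ from a truncated exponential distribution with rate $\beta_0 = \Theta(\log\tw)$ and a uniformly random permutation $\pi$ of $B_t$; each still-unclustered vertex $v$ is assigned to the first $u \in B_t$ (in the order $\pi$) with $d_G(u,v) \le R$. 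Vertices not claimed in this round are passed down to the recursive calls on the children of $t$.

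The key quantitative point is that within a single round only $\tw+1$ vertices compete as centers, so the classical CKR analysis yields padding $O(\log \tw)$ for that round: the probability that $\ball_G(v,\gamma\Delta)$ is split is at most $O(\gamma \log \tw)$, matching $1 - e^{-\beta_0\gamma}$ up to constants. The diameter requirement is enforced by truncating $R \le \Delta/2$, so that every cluster sits inside a $\Delta/2$-ball around some bag vertex and hence has diameter at most $\Delta$.

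The main obstacle, and what I expect to be the technical core of the argument, is preventing the per-round padding failures from accumulating across the recursion. A vertex $v$ may survive many rounds before being captured, and a naive union bound would blow the padding parameter up by the depth of $T$. The resolution has to exploit the separator property of tree decompositions: once $\ball_G(v,\gamma\Delta)$ lies strictly inside a single child subtree of the current bag, no strict ancestor bag can subsequently cut that ball without first capturing $v$ itself, because the ancestor's vertices separate $v$ from the rest of $G$. The remaining danger comes from the ``shallow'' ancestor bags whose vertices fall inside $\ball_G(v,\gamma\Delta)$, and one must argue, using the randomness of $R$ and $\pi$, that in expectation only $O(1)$ such rounds actually interact with the ball. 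Turning this heuristic into a proof that yields exactly $O(\log\tw)$ (and not $O(\log n\cdot \log\tw)$ or $O(\log\tw\cdot \log\log n)$) is the delicate step; I anticipate it requires a carefully designed processing order on the bags---perhaps a shortest-path-based analogue of the KPR path-carving order---together with a bag-by-bag charging scheme so that the $\log\tw$ factor is paid only once along the entire root-to-capture path of $v$.
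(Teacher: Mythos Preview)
Your proposal identifies the right framework (CKR-type ball carving with bag vertices as centers) and the right obstacle (accumulation of cutting probability over the depth of the tree decomposition), but it stops precisely where the actual work begins. You explicitly say the resolution ``is the delicate step'' and that you only ``anticipate'' what it requires; that is not a proof. Moreover, your heuristic is off: the bags that can cut $\ball_G(v,\gamma\Delta)$ are not just those whose vertices lie \emph{inside} that ball, but all ancestor bags with a vertex within roughly $\Delta/2$ of $v$. On a unit-weight path (treewidth $1$) there are $\Theta(\Delta)$ such bags along the root-to-$v$ path, and your separator observation (``once the ball lies in a single child subtree, ancestors cannot cut it'') buys nothing here, because the ball does not lie below those bags. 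Sampling a fresh radius at each of them gives $\Theta(\Delta)$ independent chances to cut, and there is no evident charging scheme that collapses this to $O(1)$.

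The paper's resolution is structurally different from a top-down per-bag recursion. It first constructs, via a separate iterative procedure (\Cref{alg:CoreFinding}), a sparse set of centers $N$ equipped with a partial tree order $\preceq$ on $V$---the \emph{tree-ordered net} of \Cref{def:TreeOrderNet}. This core construction terminates in at most $\tw$ rounds, because each round covers at least one new vertex in every still-uncovered bag (\Cref{lem:numberiterations}); this is exactly where dependence on tree depth is exchanged for dependence on bag size. A nontrivial combinatorial argument (\Cref{lem:packing}) then establishes the key \emph{packing property}: every vertex $v$ has at most $\poly(\tw)$ net points that are $\preceq$-ancestors of $v$ and lie within distance $O(\Delta)$ in the relevant induced subgraph. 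Only after this net is in hand does one perform a \emph{single} CKR round using all of $N$, ordered by $\preceq$ (\Cref{lem:NetToPadded}); since at most $\tau=\poly(\tw)$ centers can threaten any fixed ball, the truncated-exponential analysis gives padding $O(\log\tau)=O(\log\tw)$. The idea missing from your outline is exactly this two-stage decoupling: first build a net whose packing constant is $\poly(\tw)$ rather than depth-dependent, then carve once.
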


Our \Cref{thm:paddedTW} implies that we could replace $r^3$ with $O(\log \tw)$ in the aforementioned problems when the input graphs have treewidth $\tw$: $O(\log(\tw))$ for uniform multicommodity flow-cut gap, extending Lipschitz functions with absolute extendability of $O(\log(\tw))$,  the maxflow-min multicut ratio of $O(\log(\tw))$, an $O(\log(\tw) \log\log(n))$ approximation for minimum linear arrangement,  minimum containing interval graphs, an $O(\log(\tw))$ approximation for the 0-extension problem,  and an $O(\log(\tw)\log n)$-approximation for the minimum bisection problem. Furthermore, we obtain  the first $\ell_1$ embedding of treewidth-$\tw$ metrics with distortion $\mathcal{O}(\sqrt{\log \tw \cdot \log n})$, an $\ell^{O(\log n)}_\infty$ embedding with distortion $O(\log \tw)$, and $O(\log \tw)$ bound for integrality gap for the uniform sparsest cut.  For several of these problems, for example, uniform multicommodity flow-cut gap, maxflow-min multicut ratio, and 0-extension, our results provide the state-of-the-art approximation ratio for an entire range of parameter $\tw$, even when $\tw = \Omega(n)$.  We refer readers to \Cref{sec:apps} for a more comprehensive discussion of these results. 

Here we point out another connection to \Cref{conj:Lee}. Filtser and Le~\cite{FL22} showed that one can embed any $K_r$-minor-free metric of diameter $\Delta$ into a graph with treewidth $O_r(\epsilon^{-2}\cdot(\log\log n)^2)$ and additive distortion $\epsilon\cdot \Delta$. The dependency of $O_r(\cdot)$ on $r$ is currently huge; it is the constant in the Robertson-Seymour decomposition. However, if one could manage to get the same treewidth to be $O(\poly(r/\eps))$, then in combination with our \Cref{thm:paddedTW}, one has a positive answer to \Cref{conj:Lee}. Even an embedding with a treewidth $O(\poly(r/\eps)\poly(\log(n))$ already implies a padding parameter $O(\log(r) + \log\log(n))$, a significant progress towards \Cref{conj:Lee}. 

\paragraph*{Sparse Covers.~}
A related notion to padded decompositions is \emph{sparse cover}.
A collection $\mathcal{C}$ of clusters is a $(\beta,s,\Delta)$-sparse cover if it is $\Delta$-bounded, each ball of radius $\frac\Delta\beta$ is contained in some cluster, and each vertex belongs to at most $s$ different clusters. 
A graph admits $(\beta,s)$-sparse cover scheme if it admits $(\beta,s,\Delta)$-sparse cover for every $\Delta>0$.

Sparse covers have been studied for various classes of graphs, such as general graphs \cite{AP90}, planar graphs~\cite{BLT14}, minor-free graphs~\cite{KLMN04,BLT14,AGMW10},  and doubling metrics~\cite{Fil19Approx}. 

By simply taking the union of many independently drawn copies of padded decomposition, one can construct a sparse cover. Indeed, given $(\beta,\delta,\Delta)$-padded decomposition, by taking the union of $O(e^{\beta\gamma}\log n)$ partitions (for $\gamma\le\delta$) one will obtain w.h.p. a $(\gamma,O(e^{\beta\gamma}\log n),\Delta)$-sparse cover. In particular, using \Cref{thm:paddedTW} one can construct $(O(1),O(e^{\log\tw}\log n))=(O(1),\tw^{O(1)}\log n))$-sparse cover scheme. The main question in this context is whether one can construct sparse covers for bounded treewidth graphs with constant cover parameter ($\beta$) and sparseness ($s$) independent from $n$. If one is willing to sacrifice a (quadratic) dependency on $\tw$ on the cover parameter $\beta$, then a sparse cover with parameters independent of $n$ is known~\cite{KPR93,FT03,AGGM06}. However, in many applications, for example, constructing sparse spanners, it is desirable to have $\beta = O(1)$ as it directly governs the stretch of the spanners.

\begin{restatable}{theorem}{CoverTWTheorem}\label{thm:CoverTW} \RestateRemark Every graph $G$ with treewidth $\tw$ admits a  $\left(6,\poly(\tw)\right)$-sparse cover scheme. 
\end{restatable}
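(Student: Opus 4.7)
The plan is to bootstrap Theorem~\ref{thm:paddedTW} with a boosting scheme that leverages the tree decomposition structure to avoid the $\log n$ overhead incurred by naive probabilistic amplification. Fix a scale $\Delta$ and invoke Theorem~\ref{thm:paddedTW} to obtain a $(\beta,\delta,\Delta)$-padded decomposition with $\beta = O(\log\tw)$ and $\delta = \Omega(1)$. Choosing $\gamma = 1/6$ (which is at most $\delta$ after adjusting constants), each independently sampled partition satisfies $\Pr[\ball_G(v,\Delta/6)\subseteq P(v)] \ge e^{-\beta/6} = \tw^{-c}$ for some constant $c>0$. Taking the union of the clusters across $k$ independent draws yields a $\Delta$-bounded collection in which each vertex belongs to at most $k$ clusters, so if I can bring $k$ down to $\poly(\tw)$ while still guaranteeing that every ball $\ball_G(v,\Delta/6)$ lies in some cluster, the theorem follows.

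To push $k$ below the naive $\tw^{O(1)}\log n$, I would interleave this probabilistic sampling with a deterministic recursive construction along a tree decomposition $(T,\{B_t\})$ of $G$. At each step I find a centroid bag $B^\star$ of $T$ and add, for every $p \in B^\star$, the cluster $\ball_G(p,\Delta/2)$, contributing $|B^\star| \le \tw+1$ clusters of diameter at most $\Delta$. Any vertex $v$ within distance $\Delta/3$ of $B^\star$ has $\ball_G(v,\Delta/6)$ already contained in one of these clusters and is therefore handled at this level; the recursion then continues on the subtrees of $T - B^\star$, and the separator property of tree decompositions guarantees that a vertex $v$ far from $B^\star$ lies entirely in one subproblem. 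For vertices that remain uncovered by the centroid layer, I would additionally inject $\poly(\tw)$ samples of the padded decomposition from Theorem~\ref{thm:paddedTW} applied within each recursive subproblem, relying on the $\tw^{-c}$ single-sample probability to finish coverage.

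The main obstacle is bounding each vertex's cluster multiplicity by $\poly(\tw)$ across the full recursion, independent of its depth. A vertex $v$ lands in a centroid cluster $\ball_G(p,\Delta/2)$ only when $d_G(v,p) \le \Delta/2$, and across $O(\log n)$ naive recursion levels the total count of such $p$'s could be as large as $O(\tw\log n)$. The core technical task is therefore to argue that, for a fixed vertex $v$, only $O(1)$ recursion levels produce centroid bags within distance $\Delta/2$ of $v$. I would attempt this by replacing the purely combinatorial centroid decomposition of $T$ with one guided by $G$-distances, so that once $v$ is buried at depth $> \Delta/2$ inside the current subproblem, all deeper centroids become irrelevant to $v$. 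If this metric-aware recursion works out, the combination of the centroid clusters and the padded samples delivers both the $(6,\cdot)$ diameter-to-radius ratio and the desired $\poly(\tw)$ sparseness, completing the proof.
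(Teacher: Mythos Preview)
Your proposal has a genuine gap that you yourself flag but do not resolve. The paper's proof takes a completely different route that avoids probabilistic amplification altogether: it introduces the notion of a \emph{tree-ordered net} (Definition~\ref{def:TreeOrderNet}), shows that every graph of tree-partition width $\tp$ admits a $(\poly(\tp),3,\Delta)$-tree-ordered net (Lemma~\ref{lem:NetMainLemma}), and then gives a direct, deterministic reduction from such a net to a $\bigl(\tfrac{4\alpha}{\alpha-1},\tau,2\alpha\Delta\bigr)$-sparse cover (Lemma~\ref{lem:NetToCover}). With $\alpha=3$ this yields padding parameter $6$, and the sparseness $\poly(\tw)$ comes straight from the packing bound of the net. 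In particular, Theorem~\ref{thm:CoverTW} is \emph{not} derived from Theorem~\ref{thm:paddedTW}; both are parallel consequences of the tree-ordered net construction.

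In your plan, neither layer is under control. For the centroid layer, your proposed fix (choose centroids by $G$-distance so that once $v$ is ``buried at depth $>\Delta/2$'' all deeper centroids are far from $v$) already fails on a path of $n$ vertices with total weight $\ll\Delta$: every vertex lies within distance $\Delta/2$ of every centroid bag at every one of the $\Theta(\log n)$ recursion levels, and the burial condition never triggers. You might try to short-circuit such instances by outputting small-diameter subproblems as single clusters, but then you need a bound, in terms of $\tw$ alone, on how many recursion levels elapse before the subproblem diameter drops below $\Delta$; no such bound is supplied, and it is exactly the kind of metric-versus-tree-structure interaction that the paper's tree-ordered net is designed to handle. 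For the sampled layer, injecting $\poly(\tw)$ independent draws of a padded decomposition whose per-vertex success probability is only $\tw^{-c}$ cannot give \emph{deterministic} coverage of all remaining vertices: a union bound over the (potentially $\Theta(n)$) uncovered vertices reinstates the $\log n$ factor you are trying to eliminate. Since neither layer achieves $\poly(\tw)$ sparseness on its own and you provide no mechanism by which they compensate for each other, the argument does not close.
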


It is sometimes useful to represent the sparse cover $\cC$ as a union of partitions, for example, in 
metric embeddings \cite{KLMN04}, and the construction of ultrametric covers \cite{FL22,Fil23},
leading to the notion of \emph{padded partition cover scheme}:
\begin{definition}[Padded Partition Cover Scheme]\label{def:PaddedPartitionCover} A collection of partitions $\mathcal{P}_{1},\dots,\mathcal{P}_{\tau}$
	is $(\beta,s,\Delta)$-padded partition cover if (a) $\tau\le s$, (b) every partition $\mathcal{P}_{i}$ is $\Delta$-bounded, and (c) for every point $x$, there is a cluster $C$ in one of the partitions $\mathcal{P}_{i}$ such that $B(x,\frac{\Delta}{\beta})\subseteq C$.\\
	A space $(X,d_{X})$ admits a $(\beta,s)$-\emph{padded partition cover scheme} if for every $\Delta$, it admits a  $(\beta,s,\Delta)$-padded  partition cover.
\end{definition}

While a padded partition cover implies a sparse cover with the same parameters, the reverse direction is not true. For example, graphs with pathwidth $\pw$ admit $(10,5(\pw+1))$-sparse cover scheme \cite{Fil20}, however, they are only known to admit  $(O(\pw^2),2^{\pw+1})$-padded partition cover scheme (this is due to $K_r$-minor free graphs \cite{KPR93,FT03} (see also \cite{KLMN04,Fil20})).\footnote{Note that we do not have matching lower bounds, so there is no provable separation.} That is, the sparseness parameter in the padded partition cover scheme is exponentially worse (in terms of $\pw$) than that of the sparse cover scheme. In this work, we construct a padded partition cover scheme with the same quality as our sparse covers.

\begin{theorem}\label{thm:PaddedPartitionCoverTW}
	Every graph $G$ with treewidth $\tw$ admits a  $\left(12,\poly(\tw)\right)$-padded partition cover scheme. 
\end{theorem}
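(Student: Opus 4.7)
The plan is to derive the padded partition cover from the sparse cover of \Cref{thm:CoverTW} by grouping its clusters into $\poly(\tw)$ families of pairwise vertex-disjoint clusters and extending each family to a partition of $V(G)$.

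I would first apply \Cref{thm:CoverTW} at scale $\Delta/2$ to obtain a $(6,s,\Delta/2)$-sparse cover $\cC$ with $s=\poly(\tw)$: each cluster has diameter at most $\Delta/2$, each vertex of $G$ lies in at most $s$ clusters, and every ball $\ball_G(v,\Delta/12)$ is contained in some cluster of $\cC$. Define the conflict graph $H$ on $\cC$ whose edges join pairs of clusters sharing a vertex of $G$. Suppose we can properly color $H$ with $\tau=\poly(\tw)$ colors; for each color $i\in[\tau]$ the family $\cC_i$ of $i$-colored clusters then consists of pairwise vertex-disjoint clusters. I would extend $\cC_i$ to a partition $\mathcal{P}_i$ of $V(G)$ by making each vertex of $V(G)\setminus\bigcup\cC_i$ its own singleton cluster. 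Each $\mathcal{P}_i$ is $\Delta$-bounded since clusters in $\cC_i$ have diameter at most $\Delta/2\le\Delta$ and singletons have diameter $0$.

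For the padding guarantee, given any vertex $v$, the sparse cover produces a cluster $C\in\cC$ with $\ball_G(v,\Delta/12)\subseteq C$; letting $i$ be the color of $C$, we have $C\in\mathcal{P}_i$ and $\mathcal{P}_i(v)=C\supseteq\ball_G(v,\Delta/12)$. Thus $\mathcal{P}_1,\dots,\mathcal{P}_\tau$ would form the desired $(12,\tau,\Delta)$-padded partition cover with $\tau=\poly(\tw)$. Ranging over all $\Delta>0$ then gives the scheme.

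The main obstacle is establishing the coloring bound $\chi(H)\le\poly(\tw)$. The sparsity bound $s$ only tells us that clusters containing a fixed vertex form a clique of size at most $s$ in $H$; since cluster sizes may be unbounded, this does not directly control the degree, let alone the chromatic number, of $H$. To surmount this I would open up the hierarchical construction underlying \Cref{thm:CoverTW}, which is built from the tree decomposition of $G$. The expectation is that the clusters stratify into $\poly(\tw)$ natural classes (indexed, for instance, by a recursion-level label or by the separator bag that spawned them), and that within any single class at most $\poly(\tw)$ clusters meet at any given vertex. A class-by-class greedy coloring, combined across classes, would then yield a proper coloring of $H$ with $\poly(\tw)$ colors, closing the argument.
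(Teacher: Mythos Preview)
Your reduction to coloring the conflict graph is exactly the right framing, and in fact the paper's proof of \Cref{thm:PaddedPartitionCoverTW} is precisely a greedy coloring of (essentially) that conflict graph. However, you stop at the point where the actual work lies: you do not establish $\chi(H)\le\poly(\tw)$, and your closing paragraph is a hope, not an argument. A black-box appeal to the sparse cover of \Cref{thm:CoverTW} does not yield the coloring bound, because the sparsity parameter $s$ only controls \emph{cliques through a vertex}, not degrees or degeneracy of $H$.

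The paper does not go through \Cref{thm:CoverTW} at all; it works directly from the tree-ordered net $(N,T,\varphi)$ of \Cref{lem:NetMainLemma}. The clusters are $C_x=\ball_{G[V_{\preceq x}]}(x,\tfrac{\alpha}{2}\Delta)$ for $x\in N$, and the coloring is the obvious greedy one: repeatedly form a partition by scanning net points from the root of $T$ downward and adding $C_x$ whenever it is disjoint from the current partition. The key observation is that if $C_x$ is blocked in some round by $C_{x_i}$, then (since $C_x\subseteq V_{\preceq x}$ and $C_{x_i}\subseteq V_{\preceq x_i}$ share a vertex) $x$ and $x_i$ are comparable in $T$, and the root-first order forces $x\preceq x_i$. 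Via the triangle inequality through a point of $C_x\cap C_{x_i}$ one gets $d_{G[V_{\preceq x_i}]}(x,x_i)\le\alpha\Delta$, so $x_i\in N_{x\preceq}^{\alpha\Delta}$. The packing property then caps the number of rounds at $\tau=\poly(\tw)$.

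Note the halved radius $\tfrac{\alpha}{2}\Delta$: this is not cosmetic. With the full radius $\alpha\Delta$ used in the sparse cover of \Cref{lem:NetToCover}, an intersection of $C_x$ and $C_{x_i}$ would only give $d_{G[V_{\preceq x_i}]}(x,x_i)\le 2\alpha\Delta$, which is outside the range of the packing guarantee. This is why your plan of coloring the clusters produced by \Cref{thm:CoverTW} as a black box would not go through without reopening the construction; once you do reopen it, you are essentially redoing the paper's \Cref{lem:NetToCoverUnionPartitions}.
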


\paragraph*{Tree-Ordered Net.} A key new technical insight to all of our aforementioned results is the notion of \emph{tree-ordered net} (\Cref{def:TreeOrderNet}). A tree order net is analogous to the notion of \emph{nets}. \emph{Nets} were used extensively in designing algorithms for metric spaces. More formally, a $\Delta$-net is a set of points $N$ such that every two net points are at a distance at least $\Delta$ (i.e~$\min_{x,y\in N}d_X(x,y)\ge\Delta$), and every point has a net point at a distance at most $\Delta$ (i.e.~$\max_{x\in V}\min_{y\in N}d_X(x,y)\le\Delta$). Filtser \cite{Fil19Approx} showed that if there is a $\Delta$-net such that every ball of radius $3\Delta$ contains at most $\tau$ net points, then the metric admits a $\left(O(\log\tau),\Omega(1),O(\Delta)\right)$-padded decomposition. This result implies that metrics of doubling dimension $d$  have padding parameter $O(d)$ since doubling metrics have sparse nets: $\tau = 2^{O(d)}$. Unfortunately, graphs of small treewidth do not have sparse nets; this holds even in very simple graphs such as star graphs. Nonetheless, we show that small treewidth graphs possess a structure almost as good: a net that is sparse w.r.t. some partial order. We formalize this property via tree-ordered nets.  As we will later show, a sparse tree-ordered net is enough to construct the padded decomposition scheme; see \Cref{sec:centersToPadded}. We believe that the notion of a tree-ordered net is of independent interest.

\begin{figure}
\label{fig:treeOrder}
		\includegraphics[width=0.4\textwidth]{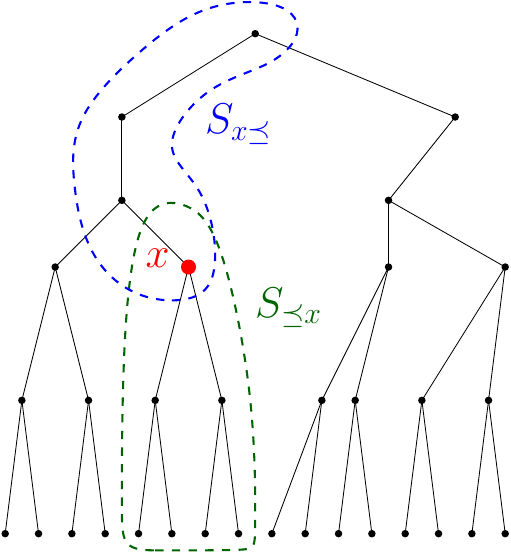}\caption{Illustration of a tree order.}
\end{figure}
\paragraph*{Tree-Order.} Another important insight, is that of tree orders. A tree order $\preceq$ of a set $V$ is a partial order (i.e. transitive, reflexive, and antisymmetric) associated with a rooted tree $T$ and a map $\varphi: V\rightarrow V(T)$ such that $u\preceq v$ iff $\varphi(v)$ is an ancestor of $\varphi(u)$ in $T$. Given a weighted graph $G=(V,E,w)$, a tree order  $\preceq$ w.r.t. tree $T$ is a \emph{valid order} of $G$ if for every edge $\{u,v\}\in E$, it holds that  $u\preceq v$ or $v\preceq u$ or both (i.e. $v$ is an ancestor of $u$ or $u$ is an ancestor of $v$).  A simple consequence of the validity is that every connected subset $C$ in $G$ must contain a maximum element w.r.t $\preceq$; see \Cref{obs:maximal}.
For a vertex $v\in V$, and subset $S\subseteq V$ let $S_{v\preceq}=\{u\in S\mid v\preceq u\}$ be the ancestors of $v$ w.r.t. $T$ in $S$. Similarly, let $S_{\preceq x}=\{u\in S\mid u\preceq x\}$ be all the descendants of $x$ in $S$.

\begin{definition}\label{def:TreeOrderNet}
	Given a weighted graph $G=(V,E,w)$ and parameters $\tau,\alpha, \Delta > 0$, a $(\tau,\alpha,\Delta)$-tree-ordered net is a triple $(N,T,\varphi)$ where $N\subseteq V$, and $T$ and $\varphi$ define a tree order $\preceq$ of $V$ such that for every $v\in V$:
    \begin{itemize}[topsep=0pt]

		\item \textsc{Covering.} There is $x\in N_{v\preceq}$ such that $d_{G[V_{\preceq x}]}(v,x)\le \Delta$. That is, there exists an ancestor $x$ of $v$ in $N$ such that the distance from $v$ to $x$ in the subgraph of $G$ induced by descendants of $x$ is at most $\Delta$.
		\item \sloppy \textsc{Packing.}  Denote by $N_{v\preceq}^{\alpha\Delta}=\left\{ x\in N_{v\preceq}\mid d_{G[V_{\preceq x}]}(v,x)\le\alpha\Delta\right\}$ the set of ancestor centers of $v$ at distance at most $\alpha\Delta$ from $v$ (w.r.t the subgraphs induced by descendants of the ancestors. Then $\left|N_{v\preceq}^{\alpha\Delta}\right|\le\tau$.
	\end{itemize}	
\end{definition}

While we state our main result in terms of treewidth, it will be more convenient to use the notion of  \emph{bounded tree-partition width} \cite{DBLP:journals/dm/DingO96}. We will show that graphs of bounded tree-partition width admit a small tree-ordered net.

\begin{definition}[Tree Partition] A \emph{tree partition} of a graph  $G=(V,E)$ is a  rooted tree $\mathcal{T}$ whose vertices are bijectively associated with the sets of  \emph{partition} $\mathcal{S} = \{S_1,S_2,\ldots,S_m\}$ of $V$, called \emph{bags}, such that for each $(u,v) \in E$, there exists a $S_i,S_j \in \mathcal{S}$ such that $S_j$ is the parent of $S_i$ and $\{u,v\} \subseteq S_i \cup S_j$. The width of $\mathcal{T}$ is $\max_{i\in m}\{|S_i|\}$.
\end{definition}

Unlike a tree decomposition, bags of a tree partition are disjoint.  Therefore, graphs of bounded-tree partition width have a more restricted structure than graphs of bounded treewidth. Indeed, one can show that any graphs of tree-partition width $k$ have treewidth at most $2k-1$. However, from a metric point of view, graphs of bounded treewidth are the same as graphs of bounded tree-partition width: We could convert a tree decomposition into a tree partition by making copies of vertices; see \Cref{lm:TP-from-TD}. We will show in \Cref{sec:treenet-tp} that:

\begin{restatable}{lemma}{NetMainLemma} \label{lem:NetMainLemma}	\RestateRemark
	Every weighted graph  $G=(V,E,w)$ with a tree-partition width $\tp$ admits a $(\poly(\tp),3,\Delta)$-tree-ordered net, for every $\Delta>0$.
\end{restatable}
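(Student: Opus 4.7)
Starting from the given width-$\tp$ tree partition $\mathcal{T}$ of $G$, the plan is to first turn $\mathcal{T}$ into a tree $T$ together with a map $\varphi$ inducing a valid tree order, and then run a greedy net selection that is tailored to this order.

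\emph{From $\mathcal{T}$ to $(T,\varphi)$.} Set $V(T):=V$ and let $\varphi$ be the identity. Fix an arbitrary linear order on the vertices inside each bag of $\mathcal{T}$ and assemble $T$ by (i) turning each bag's linear order into a root-to-leaf path in $T$, and (ii) attaching the top of each child bag's path as a child in $T$ of the bottom of its parent bag's path. The resulting tree order $\preceq$ is valid: every edge of $G$ either lies inside one bag (endpoints comparable via the intra-bag chain) or spans a parent/child bag pair (the parent-bag endpoint is an ancestor in $T$ of the child-bag endpoint).

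\emph{Greedy net.} Process the vertices in BFS order of $T$ (equivalently, top-down along $\preceq$), adding $v$ to $N$ iff there is no $x \in N$ already with $v \prec x$ and $d_{G[V_{\preceq x}]}(v,x) \le \Delta$. Covering is immediate: if $v \in N$ then $v$ witnesses itself at distance $0$, and otherwise some already-chosen $x \succ v$ witnesses covering by the very rule that caused $v$ to be skipped. The entire weight of the argument falls on the packing property.

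\emph{Packing sketch.} Fix $v$ and let $x_1 \prec x_2 \prec \cdots \prec x_k$ enumerate the ancestor net points with $d_{G[V_{\preceq x_j}]}(v,x_j) \le 3\Delta$. I would partition these by the bag $B(x_j)\in\mathcal{T}$ to which $x_j$ belongs; since $|B|\le \tp$, a single bag contributes at most $\tp$ of the $x_j$'s, so it suffices to bound the number of distinct contributing bags by $\poly(\tp)$. Two ingredients drive the bound. First, greedy \emph{spacing}: for consecutive selections, $d_{G[V_{\preceq x_{j+1}}]}(x_j,x_{j+1}) > \Delta$. Second, the tree partition's \emph{separator} property: every bag on the $\mathcal{T}$-chain from $v$'s bag up to $B(x_j)$ separates its subtree from the rest of $G$, and hence any $v$-to-$x_j$ path must cross it through one of its at most $\tp$ vertices. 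Associating each relevant $x_j$ with the ``entry vertex'' at which its shortest path from $v$ (of length $\le 3\Delta$) first meets $B(x_j)$, and then chaining these entries up along the bag chain, one can argue that after encountering at most $O(\tp)$ many successive entry vertices the spacing $>\Delta$ forces the cumulative distance from $v$ to exceed $3\Delta$; this caps the number of contributing bags, and hence $k$, by a polynomial in $\tp$.

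\emph{Main obstacle.} The subtle point is reconciling the $x_j$-dependent induced-subgraph distance $d_{G[V_{\preceq x_j}]}$ with the global separator structure of $\mathcal{T}$, because $V_{\preceq x_j}$ depends both on the position of $x_j$ within its bag's intra-bag chain and on the position of $B(x_j)$ in $\mathcal{T}$. One must carefully bookkeep the ``lower-chain plus descendant-subtree'' shape of $V_{\preceq x}$ so that (a) entry vertices into each bag lie in $V_{\preceq x_j}$, (b) greedy spacing still applies between consecutive net points living in different bags, and (c) these quantities combine to produce a $\poly(\tp)$ bound independent of $|V|$ and of $\Delta$.
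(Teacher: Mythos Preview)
Your construction of the tree order and the greedy net is clean, and covering is indeed immediate. The gap is in the packing sketch, and it is a real one rather than a matter of missing bookkeeping.

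The two ingredients you name do not combine as claimed. Greedy spacing gives $d_{G[V_{\preceq x_{j}}]}(x_i,x_{j})>\Delta$ for $i<j$, but these distances live in \emph{different} induced subgraphs for different $j$; there is no single metric in which the $x_j$ are simultaneously $\Delta$-separated. In the largest subgraph $G[V_{\preceq x_k}]$ all you know is that each $x_j$ lies within $3\Delta$ of $v$ (since $V_{\preceq x_j}\subseteq V_{\preceq x_k}$), and triangle inequality gives only the vacuous $d(x_i,x_j)\le 6\Delta$ there. Your sentence ``after encountering at most $O(\tp)$ many successive entry vertices the spacing $>\Delta$ forces the cumulative distance from $v$ to exceed $3\Delta$'' therefore has no clear meaning: the entry-vertex distances you want to chain along the bag path are measured in incomparable metrics, and the separator bound of $\tp$ vertices per bag does not by itself convert the $>\Delta$ spacing into a lower bound on any single path length. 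Nothing in the sketch rules out $k$ growing with $n$ for fixed $\tp$.

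The paper sidesteps this by \emph{not} using the static tree order you propose. It runs an iterative ball-carving procedure over $\mathcal{T}$ in rounds: each round grows ``cores'' (balls of radius $\Delta$ from the uncovered vertices of a root-most uncovered bag, inside a carefully maintained subgraph with ``attachments''), and defines $\varphi(v)$ to be the center bag of the \emph{first} core that covers $v$ --- generally far above $v$'s own bag. The net $N$ consists of the core centers. Packing is then proved from structural properties of this algorithm: there are at most $\tp$ rounds, cores of the same rank are vertex-disjoint, each bag meets at most $\tp^2$ cores, and a delicate ``shadow domain'' argument shows that at most $\tp^3$ cores can have their center on $v$'s ancestor chain while their center reaches $v$ within $2\Delta$. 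None of this machinery is available under the identity map $\varphi$; the dynamic assignment of $\varphi$ is precisely what makes the packing go through.

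So the obstacle you flag at the end is not a detail to be bookkept; it is exactly the place where the static order fails and where the paper invests essentially all of Section~4.
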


In \Cref{sec:centersToPadded}, we show how to use the tree-ordered net in \Cref{lem:NetMainLemma} to obtain all results stated above.

\paragraph*{Follow-up Work.~} Recently, inspired by our technique, \cite{CCLMST23} constructed a tree cover with stretch $1+\eps$ and $2^{(\tw / \eps)^{O(\tw)}}$ trees. This result has applications to constructing distance oracles and approximate labeling schemes for graphs of small treewidth. Filtser~\cite{Filtser24} showed recently that $K_r$-minor-free graphs admit a $(4+\eps,O(1/\eps)^{r})$-sparse cover scheme for every  $\eps \in (0,1)$. While the stretch is an absolute constant, the dependency on the minor size is exponential and hence is incomparable to our \Cref{thm:CoverTW}.

In a subsequent work \cite{CF25}, Conroy and Filtser answered the main open question left by this paper (see \Cref{subsec:stocdecomp}), and showed that every $K_{r}$-minor-free graph admits a padded decomposition with padding parameter $O(\log r)$. Conroy and Filtser have been inspired by the techniques developed in the current paper. Specifically, they use the buffered cup decomposition of \cite{CCL+24} and, in the spirit of the current paper, recursively remove many supernodes at once to reduce the treewidth of each remaining component. Conroy and Filtser establish their result by showing the existence of good sparse covers for $K_{r}$-minor-free graphs and a general reduction from sparse cover to padded decomposition.

\section{Preliminaries}
\paragraph*{Graphs.}
We consider connected undirected graphs $G=(V,E)$ with edge weights $w: E \to \R_{\ge 0}$. We say that vertices $v$ and $u$ are neighbors if $\{v,u\}\in E$. Let $d_{G}$ denote the shortest path metric in $G$. $\ball_G(v,r)=\{u\in V\mid d_G(v,u)\le r\}$ is the ball of radius $r$ around $v$. For a vertex $v\in V$ and a subset $A\subseteq V$, let $d_{G(x,A)}:=\min_{a\in A}d_G(x,a)$, where $d_{G}(x,\emptyset)= \infty$. For a subset of vertices $A\subseteq V$, let $G[A]$ denote the induced graph on $A$, and let $G\setminus A := G[V\setminus A]$.

The \emph{diameter} of a graph $G$ is $\diam(G)=\max_{v,u\in V}d_G(v,u)$, i.e. the maximal distance between a pair of vertices.
Given a subset $A\subseteq V$, the \emph{weak}-diameter of $A$ is $\diam_G(A)=\max_{v,u\in A}d_G(v,u)$, i.e. the maximal distance between a pair of vertices in $A$, w.r.t. to original distances $d_G$. The \emph{strong}-diameter of $A$ is $\diam(G[A])$, the diameter of the graph induced by $A$. 
A graph $H$ is a \emph{minor} of a graph $G$ if we can obtain $H$ from
$G$ by edge deletions/contractions, and isolated vertex deletions.  A graph
family $\mathcal{G}$ is \emph{$H$-minor-free} if no graph
$G\in\mathcal{G}$ has $H$ as a minor. We will drop the prefix $H$ in $H$-minor-free whenever $H$ is not important or clear from the context. 

Some examples of minor-free graphs are planar graphs ($K_5$- and $K_{3,3}$-minor-free), outer-planar graphs ($K_4$- and $K_{3,2}$-minor-free), series-parallel graphs ($K_4$-minor-free) and trees ($K_3$-minor-free).

\paragraph*{Treewidth.} A \emph{tree decomposition} of a graph $G = (V,E)$ is a  tree $\mathcal{T}$ where each node $x\in \mathcal{T}$ is associated with a subset $S_x$ of $V$, called a \emph{bag}, such that: (i) $\cup_{x\in V(\mathcal{T})} S_x = V$, (ii) for every edge $(u,v) \in E$, there exists a bag $S_x$ for some $x\in V(\mathcal{T})$ such that $\{u,v\}\subseteq S$, and (iii) for every $u\in V$, the bags containing $u$ induces a connected subtree of $\mathcal{T}$. The \emph{width} of $\mathcal{T}$ is $\max_{x\in V(\mathcal{T})}\{|S_x|\}$-1. The \emph{treewidth} of $G$ is the minimum width among all possible tree decompositions of $G$.

\paragraph*{Padded Decompositions.}
Consider a \emph{partition} $\mathcal{P}$ of $V$ into disjoint clusters.
For $v\in V$, we denote by $P(v)$ the cluster $P\in \mathcal{P}$ that contains $v$.
A partition $\mathcal{P}$ is strongly $\Delta$-\emph{bounded} (resp. weakly $\Delta$-bounded ) if the strong-diameter (resp. weak-diameter) of every $P\in\mathcal{P}$ is bounded by $\Delta$.
If the ball $\ball_G(v,\gamma\Delta)$ of radius $\gamma\Delta$ around a vertex $v$ is fully contained in $P(v)$, we say that $v$ is $\gamma$-{\em padded} by $\mathcal{P}$. Otherwise, if $\ball_G(v,\gamma\Delta)\not\subseteq P(v)$, we say that the ball is \emph{cut} by the partition.

\begin{definition}[Padded Decomposition]\label{def:PadDecompostion}
	Consider a weighted graph $G=\left(V,E,w\right)$.
	A distribution $\mathcal{D}$ over partitions of $G$ is strongly (resp. weakly) $(\beta,\delta,\Delta)$-padded decomposition if every $\mathcal{P}\in\supp(\mathcal{D})$ is strongly (resp. weakly) $\Delta$-bounded and for any $0\le\gamma\le\delta$, and $z\in V$,
	$$\Pr[\ball_G(z,\gamma\Delta)\subseteq P(z)] \ge e^{-\beta\gamma}~.$$
	We say that a graph $G$ admits a strong (resp. weak) $(\beta,\delta)$-padded decomposition scheme, if for every parameter $\Delta>0$ it admits a strongly (resp. weakly) $(\beta,\delta,\Delta)$-padded decomposition that can be sampled in polynomial time.  
\end{definition}

\subsection{From Tree Decomposition to Tree Partition} \label{sec:tree-partition}

We first show that any graph of treewidth $\tw$ can be embedded isometrically into a graph of tree-partition width $\tw$ by duplicating vertices. 

\begin{figure}[!ht]
	\begin{center}
		\includegraphics[scale=0.9]{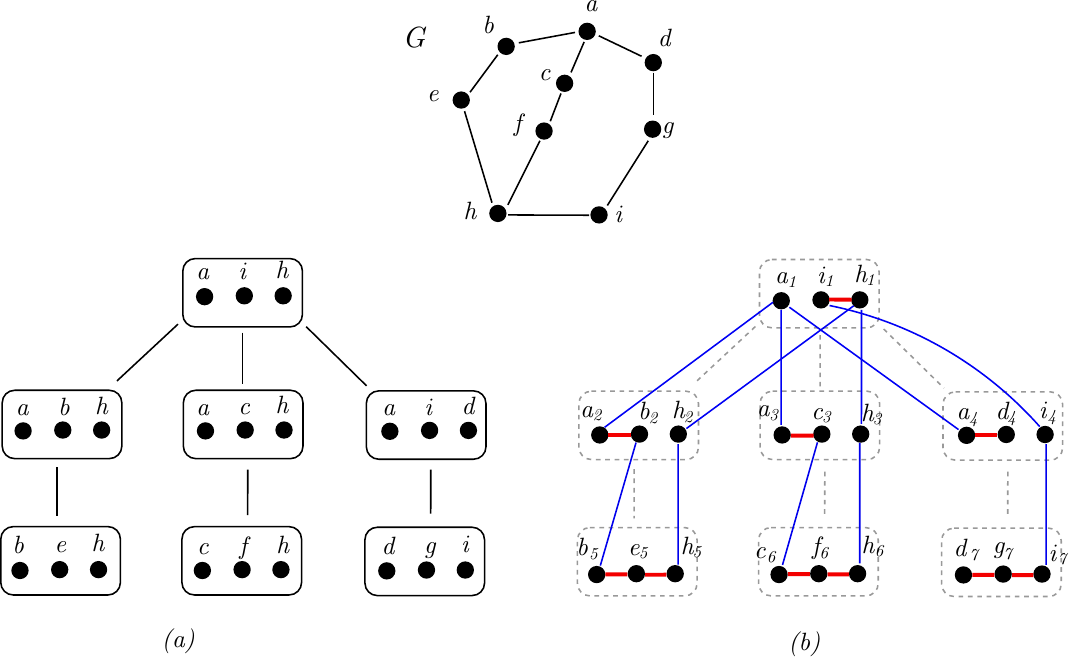} 
	\end{center}
	\caption{Converting a tree decomposition in (a) into a tree partition in (b). The thick red edges are the original edges in $G$, and the thin blue edges are the added edges with weight zero.}
	\label{fig:treeToWidth}
\end{figure}

\begin{lemma}\label{lm:TP-from-TD} 
	Given an edge-weighed graph $G(V,E,w)$ and its tree decomposition of width $\tw$, there is an isometric polynomial time constructible embedding of $G$ into a graph with tree partition of width $\tw+1$. More formally, there is a graph $H=(X,E_H,w_H)$ with tree partition of width $\tw+1$ and a map $\phi: V\rightarrow X$ such that $\forall x,y\in V$, $d_H(\phi(x),\phi(y))=d_G(x,y)$.
\end{lemma}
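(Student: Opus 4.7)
The plan is to construct $H$ by replacing each vertex $v \in V$ with one copy per bag containing it, then wiring the copies together by zero-weight edges along the tree. Formally, given a tree decomposition $\mathcal{T}$ of $G$ with bags $\{S_x\}_{x \in V(\mathcal{T})}$ of size at most $\tw+1$, let
\[
X \;=\; \bigl\{(v,x) : x \in V(\mathcal{T}),\ v \in S_x\bigr\},
\]
and define the bags of the tree partition to be $T_x = \{(v,x) : v \in S_x\}$, which are pairwise disjoint by construction and have size $|T_x| = |S_x| \le \tw+1$. For the edge set $E_H$, I would include two types: (i) for each $\{u,v\} \in E$, using property (ii) of a tree decomposition pick any bag $S_x$ with $\{u,v\} \subseteq S_x$ and add the edge $\{(u,x),(v,x)\}$ with weight $w(u,v)$; (ii) for each vertex $v$ and each tree edge $\{x,y\}$ of $\mathcal{T}$ such that $v \in S_x \cap S_y$, add a zero-weight edge $\{(v,x),(v,y)\}$. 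Finally, define $\phi(v) = (v, x_v)$ for some fixed bag $x_v$ containing $v$ (e.g., the root-most one in $\mathcal{T}$).

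The first thing to verify is that $\mathcal{T}$ (with bags relabeled as $T_x$) is a valid tree partition of $H$ of width $\tw+1$. Type (i) edges live inside a single bag $T_x$, and type (ii) edges go between copies in two tree-adjacent bags $T_x, T_y$, so both satisfy the tree-partition adjacency condition. Since $|T_x| \le \tw+1$, the width bound holds and the construction is clearly polynomial time.

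The isometry claim splits into two inequalities. For the easy direction $d_G(u,v) \le d_H(\phi(u),\phi(v))$, consider the projection $\pi : X \to V$ defined by $\pi(v,x) = v$. Every edge of $H$ is mapped by $\pi$ either to an edge of $G$ of the same weight (for type (i)) or to a loop of weight zero (for type (ii)), so any $\phi(u)$-$\phi(v)$ walk in $H$ projects to a $u$-$v$ walk in $G$ of no greater total weight. For the reverse direction $d_H(\phi(u),\phi(v)) \le d_G(u,v)$, take a shortest path $u = v_0, v_1, \ldots, v_k = v$ in $G$ and, for each edge $\{v_i,v_{i+1}\}$, let $x_i$ be the bag used in step (i) to realize that edge. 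Because the bags containing any particular vertex $v_i$ form a connected subtree of $\mathcal{T}$, the copies $(v_i,x_{i-1})$ and $(v_i,x_i)$ are connected in $H$ by a path of zero-weight type (ii) edges; concatenating these zero-weight segments with the unit type (i) edges $(v_i,x_i) \to (v_{i+1},x_i)$ yields a walk from $\phi(u)$ to $\phi(v)$ of total weight exactly $\sum_{i} w(v_i,v_{i+1}) = d_G(u,v)$.

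The only slightly delicate step is the reverse inequality, where one must ensure the zero-weight "vertical" connectors actually exist between every pair of bags that the path needs to traverse; this is precisely guaranteed by property (iii) of the tree decomposition (the subtree of bags containing any fixed vertex is connected), which ensures we can hop between $(v_i, x_{i-1})$ and $(v_i, x_i)$ for free. Everything else is bookkeeping.
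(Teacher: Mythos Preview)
Your proposal is correct and follows essentially the same construction as the paper: replace each vertex by one copy per bag containing it, connect copies of the same vertex in tree-adjacent bags by zero-weight edges, and realize each original edge inside one bag that contains both endpoints. Your isometry argument is in fact more explicit than the paper's, which simply asserts that zero-weight edges between copies preserve distances; your projection argument for the lower bound and the path-lifting argument (using property (iii) to guarantee the zero-weight connectors exist) for the upper bound are both sound.
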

\begin{proof} Let $\mathcal{B}$ be a tree decomposition of width at most $r$ of $G = (V,E,w)$.  We create a graph $H$ and its tree partition as follows (see \Cref{fig:treeToWidth}). For each $u \in V$, if $u$ appears in $k$ bags of $\mathcal{B}$, say  $B_1,B_2,\ldots,B_k$, then we make $k$ copies of $u$, say $u_1,u_2,\ldots, u_k$ and replace $u$ in bag $B_i$ with its copies $u_i$, $i\in [1,k]$.  This defines the set of vertices $X$ of $H$. We then set $\phi(u) = u_1$. 
	
	If $B_i$ and $B_j$ are two adjacent bags in $\mathcal{B}$, we create an edge $(u_i,u_j)$ and assign a weight $w_H(u_i,u_j) = 0$.  For each $(u,v) \in E$, there exists at least one bag $B_j$ of the tree decomposition of $G$ such that $u,v \in B_j$. We add the edge $(u_j,v_j)$ of weight $w_H(u_j,v_j) = w(u,v)$ to $E_H$. This completes the construction of $H$. 
	
	The tree partition of $H$, say $\mathcal{T}$, has the same structure as the tree decomposition $\mathcal{B}$ of $G$: for each bag $B\in \mathcal{B}$, there is a corresponding bag $\hat{B} \in \mathcal{T}$ containing copies of vertices of $B$. As $|B|\leq r+1$, the width of $\mathcal{T}$ is $r+1$.  For each $u\in V$, we map it to exactly one copy of $u$ in $X$. As edges between copies of the same vertex have 0 weight, the distances in $G$ are preserved exactly in $H$.
\end{proof}

Using \Cref{lm:TP-from-TD} together with the tree-ordered net in \Cref{lem:NetMainLemma} we will construct (weak) padded decomposition and sparse covers for graphs of small treewidth. 

\section{Padded Decomposition and Sparse Cover}\label{sec:centersToPadded}

In this section, we prove three general lemmas on constructing a padded decomposition and a sparse cover from a tree-ordered net. These lemmas, together with \Cref{lem:NetMainLemma}, imply \Cref{thm:paddedTW}, \Cref{thm:CoverTW}, and \Cref{thm:PaddedPartitionCoverTW}. We begin by proving \Cref{thm:CoverTW}, as its proof is the simplest among the three. We observe the following by the definition of a tree order.

\begin{observation}\label{obs:maximal} Let $\preceq$ be a valid tree order of $G = (V,E,w)$ defined by a tree $T$. Every subset of vertices $C$ such that $G[C]$ is connected must contain a single maximum element w.r.t $\preceq$.
\end{observation}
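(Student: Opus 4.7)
The plan is to prove the observation by a path argument in $G[C]$, using the validity of the tree order on each edge. Recall that validity says that for an edge $\{u,v\}$ in $G$ the images $\varphi(u),\varphi(v)$ lie on a common root-to-leaf path in $T$, and recall also that antisymmetry of $\preceq$ forces $\varphi$ to be injective on $V$.

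First I would establish the following path lemma: given any walk $x_0, x_1, \ldots, x_k$ in $G[C]$, an index $j$ minimizing the $T$-depth of $\varphi(x_j)$ satisfies $x_i \preceq x_j$ for every $i$, i.e.\ $\varphi(x_j)$ is an ancestor of every $\varphi(x_i)$. The proof is an induction on $|i-j|$. For the base step, validity gives that $\varphi(x_j)$ and $\varphi(x_{j\pm 1})$ are comparable, and depth-minimality together with injectivity of $\varphi$ rules out $\varphi(x_{j\pm 1})$ being a strict ancestor of $\varphi(x_j)$, so $\varphi(x_j)$ is a strict ancestor of $\varphi(x_{j\pm 1})$. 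For the inductive step, assume $\varphi(x_j)$ is an ancestor of $\varphi(x_m)$ and consider the edge $\{x_m,x_{m+1}\}$. If $\varphi(x_m)$ is an ancestor of $\varphi(x_{m+1})$, transitivity of the ancestor relation closes the step; otherwise $\varphi(x_{m+1})$ lies on the root-to-$\varphi(x_m)$ path in $T$, which also contains $\varphi(x_j)$, and depth-minimality of $\varphi(x_j)$ together with injectivity of $\varphi$ again forces $\varphi(x_{m+1})$ to be a strict descendant of $\varphi(x_j)$.

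Next I would use the path lemma to extract the maximum. Since $C$ is finite and nonempty, it has at least one maximal element $m$. If $m'$ were a second maximal element, I would connect $m$ to $m'$ by a path in $G[C]$, apply the path lemma to obtain a vertex $x_j$ on this path with $x_j \succeq m$ and $x_j \succeq m'$, and use maximality of $m$ and $m'$ to force $x_j = m = m'$, a contradiction. Hence the maximal element is unique. To upgrade ``unique maximal'' to ``maximum'', I would take any $y \in C$, join it to $m$ by a path in $G[C]$, apply the path lemma once more to get some $x_j$ with $x_j \succeq m$, and use unique maximality to conclude $x_j = m$, whence $y \preceq m$.

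The main obstacle is the path lemma itself, and in particular handling the inductive step honestly when the walk moves upward in $T$. Validity is used precisely here: it guarantees that both endpoints of every edge lie on a single ancestor chain of $T$, which is what allows the three-way comparison with $\varphi(x_j)$ to collapse into ``ancestor of $\varphi(x_j)$'' or ``descendant of $\varphi(x_j)$'', the latter being the only option consistent with the depth-minimality of $\varphi(x_j)$ together with injectivity of $\varphi$.
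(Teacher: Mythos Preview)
Your proof is correct but takes a different route from the paper. The paper argues by contradiction via a partition: assuming $t\ge 2$ maximal elements $u_1,\dots,u_t$ in $C$, it sets $A_i=\{v\in C: v\preceq u_i\}$, observes that these sets partition $C$ (any $v$ lies below some maximal element; and if $v\preceq u_i$ and $v\preceq u_j$ then $u_i,u_j$ are comparable ancestors in $T$, hence equal by maximality), and then uses connectivity of $G[C]$ to find an edge $\{x,y\}$ crossing two distinct parts, whereupon validity forces $x\preceq y$ or $y\preceq x$, placing one endpoint in both parts --- a contradiction. Your walk-based argument is longer (the inductive path lemma does the heavy lifting) but is more constructive: it directly identifies the maximum as the vertex of minimum $T$-depth, and the path lemma itself is a reusable statement. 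One small remark: your appeals to injectivity of $\varphi$ are not actually needed, since for the conclusion $x_i\preceq x_j$ it suffices that $\varphi(x_j)$ be an ancestor (not necessarily strict) of $\varphi(x_i)$, and depth-minimality alone already gives this.
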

\begin{proof}
	Suppose that there are $t$ maximal elements $u_1,u_2,\ldots, u_t\in C$ for $t\geq 2$. Let  $A_i = \{v: v\preceq u_i\}$ for every $i\in [t]$ and  $\mathcal{A} = \{A_1,\ldots, A_t\}$. Then $\mathcal{A}$ is a partition of $C$ since $\{u_i\}_{i=1}^t$ are maximal (and $T$ is a tree). As $G[C]$ is connected and $t\geq 2$, there must be some edge $\{x,y\}\in E$ such that $x\in A_i$ and $y\in A_j$ for $i\not= j$. The validity of $T$ implies that either $x\preceq y$ or $y\preceq x$. However, this means either $x$ is also in $A_j$ or $y$ is also in $A_i$, contradicting that $\mathcal{A}$ is a partition of $C$. 
\end{proof}

\subsection{Proof of Theorem~\ref{thm:CoverTW}}\label{subsec:covertw}
The following lemma is a reduction from tree-ordered nets to sparse covers.
\begin{lemma}\label{lem:NetToCover}
	Consider a weighted graph $G=(V,E,w)$ with a $(\tau,\alpha,\Delta)$-tree-ordered net (w.r.t. a tree order $\preceq$, associated with a tree $T$), then $G$ admits a  strong 
	 $\left(\frac{4\alpha}{\alpha-1},\tau,2\alpha\Delta\right)$-sparse cover that can be computed efficiently.
\end{lemma}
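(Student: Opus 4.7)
My plan is to take, for each net-point $x \in N$, the ``descendant ball''
\[
 C_x \;:=\; \bigl\{v \in V_{\preceq x} \,:\, d_{G[V_{\preceq x}]}(v, x) \leq \alpha\Delta\bigr\},
\]
and set the cover to be $\cC := \{C_x : x \in N\}$. Strong diameter and packing will be straightforward. Along any shortest $u$--$x$ path inside $G[V_{\preceq x}]$, each intermediate vertex $w$ is a descendant of $x$ and satisfies $d_{G[V_{\preceq x}]}(w, x) \leq d_{G[V_{\preceq x}]}(u, x) \leq \alpha\Delta$ (the suffix from $w$ to $x$ is itself a shortest path in $G[V_{\preceq x}]$), so the whole path lies in $C_x$; concatenating two such paths for $u, v \in C_x$ gives a walk in $G[C_x]$ of length at most $2\alpha\Delta$, yielding the strong-diameter bound. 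Packing is even easier: the condition $v \in C_x$ is literally $x \in N_{v\preceq}^{\alpha\Delta}$, so the net's packing bound caps the number of clusters containing any given $v$ at $\tau$.

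The main obstacle will be the covering requirement: for the ball $\ball_G(v, r)$ with $r := (\alpha-1)\Delta/2$ (which is exactly $2\alpha\Delta$ divided by the promised parameter $4\alpha/(\alpha-1)$), I need to exhibit a single $x \in N$ with $\ball_G(v, r) \subseteq C_x$. The tree-ordered net's covering property applied at $v$ only gives one net-ancestor $x_0$ of $v$ with $d_{G[V_{\preceq x_0}]}(v, x_0) \leq \Delta$, but a shortest $u$--$v$ path for $u \in \ball_G(v, r)$ can leave $V_{\preceq x_0}$ through an edge jumping ``upward'' past $x_0$ in the tree, so $u$ need not even be a descendant of $x_0$. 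The fix I plan is to apply the covering property not at $v$ but at the $\preceq$-maximum vertex reachable from $v$ by shortest paths into the ball, which by \Cref{obs:maximal} is guaranteed to exist and to lie tree-above every point of $\ball_G(v, r)$.

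Concretely, fix a shortest $u$--$v$ path $P_{uv}$ in $G$ for every $u \in \ball_G(v, r)$ and set $W := \bigcup_u V(P_{uv})$. Since every $P_{uv}$ contains $v$, $G[W]$ is connected, so by \Cref{obs:maximal} it has a unique $\preceq$-maximum $y$, and $\ball_G(v, r) \subseteq W \subseteq V_{\preceq y}$. Because $y \in W$, some $P_{u^*v}$ passes through $y$, and the prefix of $P_{u^*v}$ from $v$ to $y$ stays in $V_{\preceq y}$, so $d_{G[V_{\preceq y}]}(v, y) \leq r$. Together with $d_{G[V_{\preceq y}]}(u, v) \leq r$ (each $P_{uv} \subseteq V_{\preceq y}$), the triangle inequality yields $d_{G[V_{\preceq y}]}(u, y) \leq 2r$ for every $u \in \ball_G(v, r)$. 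Finally, applying the net's covering property at $y$ produces $x \in N$ with $y \preceq x$ and $d_{G[V_{\preceq x}]}(y, x) \leq \Delta$; since $V_{\preceq y} \subseteq V_{\preceq x}$ distances only shrink as the ambient subgraph grows, so one more triangle step gives
\[
 d_{G[V_{\preceq x}]}(u, x) \;\leq\; d_{G[V_{\preceq x}]}(u, y) + d_{G[V_{\preceq x}]}(y, x) \;\leq\; 2r + \Delta \;=\; \alpha\Delta,
\]
placing every $u \in \ball_G(v, r)$ inside $C_x$. This establishes the claimed $\bigl(\tfrac{4\alpha}{\alpha-1}, \tau, 2\alpha\Delta\bigr)$-strong sparse cover, with efficient computability inherited from that of the tree-ordered net and induced-subgraph shortest-path computations.
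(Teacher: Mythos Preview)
Your proof is correct and follows essentially the same approach as the paper: both define the clusters as $C_x=\ball_{G[V_{\preceq x}]}(x,\alpha\Delta)$, derive packing directly from the net's packing bound, and for covering pick the $\preceq$-maximum vertex of the ball (your $y$ coincides with the paper's $v_B$, since $W=\ball_G(v,r)$), apply the net's covering property there, and finish with the triangle inequality. Your treatment of the \emph{strong} diameter bound---showing that shortest $u$--$x$ paths in $G[V_{\preceq x}]$ stay inside $C_x$---is actually more careful than the paper, which just invokes the triangle inequality.
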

\begin{proof}Let $N$ be the $(\tau,\alpha,\Delta)$-tree-ordered net of $G$.  Let $x_1,x_2,\dots$ be an ordering of the centers in $N$ w.r.t distance from the root in $T$. Specifically, $x_1$ is closest to the root, and so on. Note that distances in $T$ are unweighted and unrelated to $d_G$; we break ties arbitrarily. For every vertex $x_i\in N$, create a cluster
	\[
	C_{i}=\ball_{G[V_{\preceq x_{i}}]}(x_i,\alpha\Delta)~,
	\]
	of all the vertices that are at distances at most $\alpha\Delta$ from $x_i$ in the subgraph induced by the descendants of $x_i$. We now show that $\mathcal{C} = \{C_i\}_i$ is the sparse cover claimed in the lemma.

	Clearly, by the triangle inequality, every cluster in $\mathcal{C}$ has a diameter at most $2\alpha\Delta$. Furthermore, observe that every vertex $v$ belongs to at most $\tau$ clusters since it has at most $\tau$ ancestors in $N$ at distance at most $\alpha\Delta$ (in the respective induced graphs).
	
	Finally, we show that for every vertex $v$, $\ball_G(v,\frac\Delta\beta)$ is fully contained in some cluster in $\mathcal{C}$, for $\beta=\frac{2}{\alpha-1}$. Let $B=\ball_G(v,\frac{\Delta}{\beta})$.	Let $v_B\in B$ be the closest vertex to the root w.r.t. $T$. \Cref{obs:maximal} implies that for every vertex $u\in B$,  $u\preceq v_B$. By the triangle inequality, for every $u\in B$ it holds that:
	\begin{equation}\label{eq:u-vB}
		d_{G[V_{\preceq v_B}]}(v_B,u)\le d_{G[V_{\preceq v_B}]}(v_B,v)+d_{G[V_{\preceq v_B}]}(v,u)\le\frac{2\Delta}{\beta}
	\end{equation}
	Let $x_B\in N$ be the ancestor of $v_B$ in $T$ that minimizes $d_{G[V_{\preceq x}]}(x,v_B)$. Since $N$ is a $(\tau,\alpha,\Delta)$-tree-ordered net, it holds that $d_{G[V_{\preceq x_B}]}(x_B,v_B)\le\Delta$. In particular, for every $u\in B$ it holds that:
	\begin{equation*}
		\begin{split}
			d_{G[V_{\preceq x_B}]}(x_B,u) &\le d_{G[V_{\preceq x_B}]}(x_B,v_B)+d_{G[V_{\preceq x_B}]}(v_B,u)\\
			&\le\Delta+\frac{2\Delta}{\beta}\le\alpha\Delta    \qquad \mbox{(by \cref{eq:u-vB})}         
		\end{split}
	\end{equation*}
	Thus, $B$ is fully contained in the cluster centered in $x_B$.
	As the diameter of each cluster is $2\alpha\Delta$, the padding we obtain is $\frac{2\alpha\Delta}{\frac{\Delta}{\beta}}=\frac{2\alpha}{\frac{\alpha-1}{2}}=\frac{4\alpha}{\alpha-1}$, 
	a required.
\end{proof}

Now we are ready to show that \Cref{thm:CoverTW} follows from \Cref{lem:NetMainLemma}, \Cref{lm:TP-from-TD}, and \Cref{lem:NetToCover}. We restate \Cref{thm:CoverTW} for convenience.

\CoverTWTheorem*
\begin{proof}
Let $\Delta > 0$ be a parameter; we will construct a $(6,\poly(\tw),\Delta)$-sparse cover for $G$. Let $H$ be the graph of tree-partition width $\tw+1$ in the isometric embedding $\phi$ of $G$ in \Cref{lm:TP-from-TD}. We abuse notation by using $v$,  for each $v\in V(G)$, to denote $\phi(v)$ in $H$. This means $V(G) \subseteq V(H)$. 

Let $\hat{\Delta} = \Delta/6$. By \Cref{lem:NetMainLemma}, $H$ admits a $(\poly(\tw), 3,\hat{\Delta})$-tree-ordered net. By \Cref{lem:NetToCover}, we can construct a $(6, \poly(\tw), 6\hat{\Delta} = \Delta)$-sparse cover of $H$, denoted by $\hat{\mathcal{C}}$. We then construct $\mathcal{C} = \{\hat{C}\cap V(G): \hat{C}\in \hat{\mathcal{C}}\}$.  Let $C$ be a cluster in $\mathcal{C}$, and we denote by $\hat{C}$ its corresponding cluster in $\hat{\mathcal{C}}$; that is, $C = \hat{C}\cap V(G)$. We claim that $\mathcal{C}$ is  a $(6,\poly(\tw),\Delta)$-sparse cover for $G$.  

For any $C\in \mathcal{C}$, $\diam(C)\leq \diam(\hat{C}) \leq \Delta$, since $\phi$ is an isometric embedding. Thus, $\mathcal{C}$ is $\Delta$-bounded. Furthermore, any $v\in V(G)$ belongs to at most $\poly(\tw)$ clusters in $\mathcal{\hat{C}}$ and hence it belongs to at most  $\poly(\tw)$ clusters  in $\mathcal{C}$. Finally, we consider any ball $B_G(v,\Delta/6)$ in $G$. Observe that $B_G(v,\Delta/6)\subseteq B_H(v,\Delta/6)$. Since $\hat{\mathcal{C}}$ is a $(6, \poly(\tw), \Delta)$-sparse cover of $H$, there exists a cluster $\hat{C} \in \hat{\mathcal{C}}$ such that $B_H(v,\Delta/6)\subseteq \hat{C}$, implying $B_H(v,\Delta/6)\subseteq C$.
\end{proof} 

\subsection{Proof of Theorem~\ref{thm:PaddedPartitionCoverTW}}\label{subsec:PaddedCover}

In this subsection, we show that the construction in \Cref{lem:NetToCover} can be adapted to obtain a padded partition cover scheme (with a small loss in the padding parameter).

\begin{lemma}\label{lem:NetToCoverUnionPartitions}
	Consider a weighted graph $G=(V,E,w)$ with a $(\tau,\alpha,\Delta)$-tree-ordered net (w.r.t. a tree order $\preceq$, associated with $T$) for $\alpha>2$, then 
	$G$ admits strong 
	 $(\frac{4\alpha}{\alpha-2},\tau,\alpha\Delta)$-padded partition cover.	
\end{lemma}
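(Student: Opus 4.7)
The plan is to adapt the cluster construction of Lemma~\ref{lem:NetToCover} with a slightly larger radius of $\alpha\Delta/2$, and then \emph{color} the resulting clusters with $\tau$ colors so that same-color clusters are pairwise disjoint. Each color class, augmented by singletons for uncovered vertices, then becomes one of the $\tau$ partitions of a padded partition cover of boundedness $\alpha\Delta$.

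For each center $x\in N$, I would define
\[
    C_x \;:=\; \ball_{G[V_{\preceq x}]}(x,\,\alpha\Delta/2),
\]
so that each $C_x$ has strong diameter at most $\alpha\Delta$. I would process the centers of $N$ in a top-down order w.r.t.\ $T$ (so that strict ancestors of $x$ in $N$ are processed before $x$), and when processing $x$ assign it a color $\ell(x)\in\{1,\dots,\tau\}$ distinct from the colors of all already-processed centers $y$ with $C_y\cap C_x\neq\emptyset$.

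The key claim is that this greedy coloring always succeeds, i.e.\ at most $\tau-1$ colors must be avoided. First, if $x$ and $y$ are incomparable in $\preceq$, then $V_{\preceq x}\cap V_{\preceq y}=\emptyset$, so $C_x\cap C_y=\emptyset$ automatically; hence any conflict with an already-processed $y$ forces $x\preceq y$. Picking $v\in C_x\cap C_y$ and applying the triangle inequality inside $G[V_{\preceq y}]\supseteq G[V_{\preceq x}]$ yields
\[
    d_{G[V_{\preceq y}]}(y,x) \;\le\; d_{G[V_{\preceq y}]}(y,v)+d_{G[V_{\preceq y}]}(v,x) \;\le\; \tfrac{\alpha\Delta}{2}+\tfrac{\alpha\Delta}{2} \;=\; \alpha\Delta,
\]
so $y\in N_{x\preceq}^{\alpha\Delta}$. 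The packing property at $v=x$ bounds $|N_{x\preceq}^{\alpha\Delta}|\le\tau$, leaving at most $\tau-1$ conflicting ancestors of $x$. For each color $c\in[\tau]$, the partition $\mathcal{P}_c$ then consists of the disjoint family $\{C_x:\ell(x)=c\}$ together with singletons for the uncovered vertices, giving an $\alpha\Delta$-bounded partition.

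Finally, I would verify the covering property with padding parameter $\beta=\frac{4\alpha}{\alpha-2}$, i.e.\ that every ball $B=\ball_G(v,\tfrac{\alpha-2}{4}\Delta)$ is contained in some cluster. This is essentially a recomputation of the padding argument of Lemma~\ref{lem:NetToCover}: using that balls in a shortest-path metric induce connected subgraphs, Observation~\ref{obs:maximal} yields a $\preceq$-maximum vertex $v_B\in B$ with $d_{G[V_{\preceq v_B}]}(v_B,u)\le\tfrac{\alpha-2}{2}\Delta$ for every $u\in B$. The covering property of the tree-ordered net provides an ancestor $x_B\in N$ of $v_B$ with $d_{G[V_{\preceq x_B}]}(x_B,v_B)\le\Delta$, and a final triangle-inequality application gives $d_{G[V_{\preceq x_B}]}(x_B,u)\le\Delta+\tfrac{\alpha-2}{2}\Delta=\tfrac{\alpha}{2}\Delta$, so $B\subseteq C_{x_B}\in\mathcal{P}_{\ell(x_B)}$. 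The main obstacle is isolating conflicts to tree-ancestor chains and bounding their length via the packing property; once that is in hand, the greedy $\tau$-coloring and the resulting padded partition cover fall out quickly.
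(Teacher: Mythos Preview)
Your proposal is correct and follows essentially the same approach as the paper. The paper also defines $C_x=\ball_{G[V_{\preceq x}]}(x,\alpha\Delta/2)$, shows the identical padding computation, and bounds the number of partitions via the same key observation that a cluster $C_x$ can only intersect clusters centered at ancestors of $x$, whose number is controlled by the packing property at $x$; the only cosmetic difference is that the paper phrases the assignment as iteratively extracting maximal disjoint subfamilies (Algorithm~\ref{alg:PaddedClustering}) rather than as a top-down greedy $\tau$-coloring.
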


By exactly the same argument in the proof of \Cref{thm:CoverTW} in \Cref{subsec:covertw}, one can show that   \Cref{thm:PaddedPartitionCoverTW} follows directly from \Cref{lem:NetMainLemma}, \Cref{lm:TP-from-TD} and \Cref{lem:NetToCoverUnionPartitions}: first, we isometrically embed $G$ of treewidth $\tw$ to a graph $H$ of tree-partition width $\tw+1$, and then apply \Cref{lem:NetToCoverUnionPartitions} and \Cref{lem:NetMainLemma} to construct a padded partition cover $\hat{\mathcal{P}}$ for $H$, which will then be turned into a padded partition cover for $G$ by removing vertices not in $G$ from $\hat{\mathcal{P}}$.  Our main focus now is to prove \Cref{lem:NetToCoverUnionPartitions}. 

\begin{proof}[Proof of Lemma~\ref{lem:NetToCoverUnionPartitions}]
	
	We say that a set of clusters is a \emph{partial partition} if it is a partition of a subset of vertices; that is, a vertex might not belong to any cluster in a partial partition. We then can turn the partial partitions into partitions of $V$ by adding singleton clusters. Similarly to \Cref{lem:NetToCover}, we define a set of clusters (note radius $\frac\alpha2\cdot\Delta$ compared to $\alpha\cdot\Delta$ in \Cref{lem:NetToCover}):
	\begin{equation*}
		\cC=\left\{C_{x}=\ball_{G[V_{\preceq x}]}(x,\frac\alpha2\cdot\Delta)\right\}_{x\in N}
	\end{equation*}
	Our partial partitions will consist of clusters in $\cC$ only.
	
	We construct the partitions greedily: form a partition from a maximal set of disjoint clusters, preferring ones that are closer to the root, and repeat. The pseudocode is given in \Cref{alg:PaddedClustering}. 
	
	\begin{algorithm}
		\caption{ $\texttt{Create Partial Partitions}((\preceq,T),N,\cC)$}\label{alg:PaddedClustering}
		\DontPrintSemicolon
		
		$A\leftarrow N$\;
		$i\leftarrow0$\;
		\While{$A\ne\emptyset$}{\label{line:all-A}
			$i\leftarrow i+1$\;
			$\cP_i\leftarrow\emptyset$\;
			\While{$\exists x\in A$ such that $C_x$ is disjoint from $\cup_{C\in \cP_i}C$}{
				Let $x\in A$  be a maximal element w.r.t. $(\preceq,T)$ such that $C_x$ is disjoint from $\cup_{C\in \cP_i}C$\label{line:check-disjt}\;
				Remove $x$ from $A$\;
				Add $C_x$ to $\cP_i$\;}
		}
		\Return $\{\cP_j\}_{j=1}^{i}$\;
	\end{algorithm} 

	By construction in \Cref{line:check-disjt}, clusters in every partition $\cP_i$ are pairwise disjoint. Furthermore, for every $x\in N$, $C_x$ belongs to one of the created partitions due to \Cref{line:all-A}.  We next argue that the algorithm creates at most $\tau$ partial partitions.
	
	Suppose for contradiction that the algorithm does not terminate after creating  $\tau$ partial partitions. Thus, after $\tau$ iterations, there was still an element $x\in A$. In particular, in every iteration $i$, there exists some vertex $x_i \in N$ such that $x\preceq x_i$, and $C_x\cap C_{x_i}\not=\emptyset$. 
	Let $y_i$ be  a vertex in $C_x\cap C_{x_i}$.  By the triangle inequality:
	\begin{equation*}
		d_{G}(x,x_{i})\le d_{G[V_{\preceq x}]}(x,y_{i})+d_{G[V_{\preceq x_{i}}]}(y_{i},x_{i})\le\alpha\Delta~.
	\end{equation*}
	Note that as every cluster in $\cC$ can join only one partial partition, all these centers $\{x_i\}_{i=1}^{\tau}$ are unique; this contradicts the fact that $N$ is a $(\tau,\alpha,\Delta)$-tree-ordered net (as together with $x$ itself,~$\left|N_{x\preceq}^{\alpha\Delta}\right|>\tau$). 
	
	It remains to show padding property, which is followed by the same proof as in \Cref{lem:NetToCover}.
	Consider a ball $B=\ball_G(v,\frac{\Delta}{\beta})$, for $\beta=\frac{4}{\alpha-2}$. Let $v_B\in B$ be the closest vertex to the root w.r.t. $T$. Following \cref{eq:u-vB},  for every $u\in B$, $d_{G[V_{\preceq v_B}]}(v_B,u)\le \frac{2\Delta}{\beta}$.
	Let $x_B\in N$ be the ancestor of $v_B$ in $T$ that minimizes $d_{G[V_{\preceq x}]}(x,v_B)$. Since $N$ is a $(\tau,\alpha,\Delta)$-tree-ordered net, it holds that $d_{G[V_{\preceq x_B}]}(x_B,v_B)\le\Delta$. In particular, for every $u\in B$ it holds that 
	\begin{align*}
		d_{G[V_{\preceq x_{B}}]}(x_{B},u) & \le d_{G[V_{\preceq x_{B}}]}(x_{B},v_{B})+d_{G[V_{\preceq x_{B}}]}(v_{B},u)\\
		& \le\Delta+\frac{2\Delta}{\beta}=\left(1+\frac{2\cdot(\alpha-2)}{4}\right)\cdot\Delta=\frac{\alpha}{2}\cdot\Delta~,
	\end{align*}
	and thus $B$ is fully contained in the cluster centered in $x_B$ a required.
\end{proof}

\subsection{Proof of Theorem~\ref{thm:paddedTW}}
In the following lemma, we construct a padded decomposition from a tree-ordered net. 

\begin{lemma}\label{lem:NetToPadded}
	Consider a weighted graph $G=(V,E,w)$ with a $(\tau,\alpha,\Delta)$-tree-ordered net (w.r.t. a tree order $\preceq$, associated with $T$), then $G$ admits a weak $\left(16\cdot\frac{\alpha+1}{\alpha-1}\cdot\ln(2\tau),\frac{\alpha-1}{8\cdot(\alpha+1)},(\alpha+1)\cdot\Delta\right)$-padded decomposition that can be efficiently sampled.
\end{lemma}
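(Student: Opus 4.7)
The plan is to adapt the classical CKR/MPX-style randomized clustering scheme to our tree-ordered net setting. Concretely, for each center $x \in N$ I sample an independent random radius $R_x$ from a truncated exponential distribution on an interval $[\ell, u]$ with $\ell \ge \Delta$ (for covering) and $u \le \tfrac{\alpha+1}{2}\Delta$ (so that each cluster has weak diameter at most $(\alpha+1)\Delta$), with rate $\lambda$ tuned so that $\lambda(u-\ell) = \Theta(\ln \tau)$. I then traverse the tree $T$ associated with $\preceq$ top-down and assign each vertex $v$ to the topmost ancestor $x \in N_{v\preceq}$ satisfying $d_{G[V_{\preceq x}]}(v,x) \le R_x$. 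The \textsc{Covering} property guarantees that such an $x$ always exists, since some ancestor of $v$ in $N$ is at distance at most $\Delta \le \ell \le R_x$, so the assignment is well defined. Each cluster is contained in $\ball_{G[V_{\preceq x^*}]}(x^*, R_{x^*})$, hence its weak diameter in $G$ is at most $2R_{x^*} \le (\alpha+1)\Delta$.

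The padding analysis is the heart of the argument. Fix $v \in V$ and $\gamma \in [0,\delta]$, and set $B = \ball_G(v,\gamma\Delta)$. The key structural step is to apply \Cref{obs:maximal} to the union of shortest paths from $v$ to each $u \in B$ in order to extract a common apex $z_B \in V$ with $u \preceq z_B$ for every $u \in B$, such that every $v \leftrightarrow u$ shortest path lies in $V_{\preceq z_B}$ and $d_{G[V_{\preceq z_B}]}(v,z_B) \le \gamma\Delta$. The triangle inequality then yields the following sufficient event for $B \subseteq P(v)$: there exists $x^* \in N_{v\preceq}$ with $x^* \succeq z_B$ such that
\begin{equation*}
R_{x^*} \;\ge\; d_{G[V_{\preceq x^*}]}(v, x^*) + \gamma\Delta,
\end{equation*}
and for every $y \in N_{v\preceq}$ strictly above $x^*$ in $T$, we have $R_y < d_{G[V_{\preceq y}]}(v,y) - \gamma\Delta$. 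The first condition forces $x^*$ to capture $v$ and, by the triangle inequality, every $u \in B$ with a margin of $\gamma\Delta$; the second ensures that no strict ancestor of $x^*$ in $N$---which, thanks to $x^* \succeq z_B$, is also an ancestor of every $u \in B$---intervenes for $v$ or any $u \in B$.

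To lower-bound the probability of this sufficient event, I sum over admissible $x^*$ and exploit independence of the $R_y$'s. The \textsc{Packing} property restricts attention to at most $\tau$ relevant ancestors, since any ancestor $y$ with $d_{G[V_{\preceq y}]}(v,y) > \alpha\Delta$ has $R_y \le u < \alpha\Delta$ and therefore cannot capture $v$. Using the memoryless property of the truncated exponential, the ratio between "captures with margin $\gamma\Delta$'' and "captures'' events is at least $e^{-\lambda\gamma\Delta}$, and symmetrically the "misses with margin'' condition contributes an analogous factor on the truncated support. Telescoping in the standard MPX fashion---and comparing against the identity $\sum_{x^*}\Pr[v \to x^*] = 1$ that follows from covering---yields
\begin{equation*}
\Pr[B \subseteq P(v)] \;\ge\; e^{-c\lambda\Delta\cdot\gamma}
\end{equation*}
for an absolute constant $c$. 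Choosing $\ell = \Delta$, $u = \tfrac{\alpha+1}{2}\Delta$, and $\lambda$ proportional to $\tfrac{\ln(2\tau)}{(\alpha-1)\Delta}$ with the hidden constant calibrated so that $\beta\delta = 2\ln(2\tau)$, produces the parameters stated in the lemma. The distribution is efficiently samplable since it needs only $|N|$ independent samples together with a single top-down traversal of $T$.

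\paragraph{Main obstacle.} The novelty, and the main technical wrinkle, relative to the classical CKR/MPX analysis, is the tree-order constraint: a center $y$ is only eligible to capture a vertex $u$ when $y$ is an ancestor of $u$ in $T$, so the set of candidate captors varies with $u \in B$, unlike the classical setting where every center is eligible for every vertex. The insight that resolves this is precisely \Cref{obs:maximal}: shortest paths respect the tree order, so for every $u \in B$ the candidate captors lying above $z_B$ coincide with those of $v$. Extracting $z_B$ and controlling its distance to $v$ is the central structural step; once that is in place, the probabilistic estimate becomes a direct tree-ordered analog of the classical telescoping calculation, with the packing bound $\tau$ in the role of the "local size'' parameter responsible for the $\log$-factor in the padding parameter.
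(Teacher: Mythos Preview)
Your construction and the structural step are exactly as in the paper: truncated-exponential radii on $[\Delta,\tfrac{\alpha+1}{2}\Delta]$, top-down assignment along $T$, and the apex $z_B$ (the paper's $v_B$) extracted via \Cref{obs:maximal} so that every center relevant to the ball $B$ is an ancestor of $z_B$. The packing bound on the number of relevant centers is also used the same way.

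The gap is in your probabilistic calculation. The truncated exponential is \emph{not} memoryless, and the inequality you invoke points the wrong way: for $R\sim\Texp_{[\ell,u]}(\lambda)$ one has
\[
\frac{\Pr[R\ge \rho+\gamma\Delta]}{\Pr[R\ge \rho]}
=\frac{e^{-\lambda\gamma\Delta}-e^{-\lambda(u-\rho)}}{1-e^{-\lambda(u-\rho)}}\;\le\;e^{-\lambda\gamma\Delta},
\]
and the ratio is even $0$ once $\rho+\gamma\Delta>u$. The companion ``miss with margin'' ratio $\Pr[R<\rho-\gamma\Delta]/\Pr[R<\rho]$ is likewise not bounded below by $e^{-\lambda\gamma\Delta}$ (and vanishes when $\rho-\gamma\Delta<\ell$). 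So the termwise comparison $\Pr[A^{+}_{x^*}]\ge e^{-c\lambda\gamma\Delta}\Pr[v\to x^*]$ that your telescoping relies on simply fails near the endpoints of the support, and there is no way to sum it to get $\Pr[B\subseteq P(v)]\ge e^{-c\lambda\gamma\Delta}$.

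The fix the paper uses (and that your sketch should switch to) is to bound the \emph{cut} probability. Let $\mathcal{F}_i$ be the event that $x_i$ is the first center (top-down) whose ball meets $B$, and $\mathcal{C}_i\subseteq\mathcal{F}_i$ the event that it also cuts $B$. Conditioning out the ``no earlier touch'' factor, one computes directly from the truncated density that
\[
\Pr[\mathcal{C}_i]\;\le\;(1-e^{-2\gamma\lambda})\left(\Pr[\mathcal{F}_i]+\frac{1}{e^{(\beta-1)\lambda}-1}\right),\qquad \beta=\tfrac{\alpha+1}{2}.
\]
The additive term is exactly the truncation correction you were trying to avoid; summing over the at most $\tau$ relevant centers (your packing argument) and choosing $\lambda=\tfrac{4}{\alpha-1}\ln(2\tau)$ absorbs the $\tau$ copies of this term, yielding $\Pr[\text{cut}]\le 1-e^{-4\gamma\lambda}$ and hence the stated padding parameter after rescaling to diameter $(\alpha+1)\Delta$.
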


By exactly the same argument in the proof of \Cref{thm:CoverTW} in \Cref{subsec:covertw}, one can show that   \Cref{thm:paddedTW} follows directly from \Cref{lem:NetMainLemma}, \Cref{lm:TP-from-TD} and \Cref{lem:NetToPadded}. We will use truncated exponential distribution during the proof of \Cref{lem:NetToPadded}:

\paragraph*{Truncated Exponential Distributions.} To create padded decompositions, similarly to previous works, we will use truncated exponential distributions. A truncated exponential distribution is an exponential distribution conditioned on the event that the outcome lies in a certain interval. More precisely, the \emph{$[\theta_1,\theta_2]$-truncated exponential distribution} with
parameter $\lambda$, denoted by $\Texp_{[\theta_1, \theta_2]}(\lambda)$, has
the density function:
$f(y)= \frac{ \lambda\, e^{-\lambda\cdot y} }{e^{-\lambda \cdot \theta_1} - e^{-\lambda
		\cdot \theta_2}}$, for $y \in [\theta_1, \theta_2]$.

\begin{proof}[Proof of Lemma~\ref{lem:NetToPadded}]	
	Let $x_1,x_2,\dots$ be an ordering of the centers in $N$ w.r.t distances from the root in $T$.
	Set $\beta=\frac{\alpha+1}{2}$.
	For every vertex $x_i\in N$, sample $\delta_i\in[1,\beta]$ according to $\Texp_{[1,\beta]}(\lambda)$, a truncated exponential distribution, with parameter $\lambda=\frac{4}{\alpha-1}\cdot\ln (2\tau)$. Set $R_i=\delta_i\cdot\Delta\in[\Delta,\beta\Delta]$ and create a cluster:
	\[
	C_{i}=\ball_{G[V_{\preceq x_{i}}]}(x_i,R_{i})\setminus\cup_{j<i}C_{j}~.
	\]
	Recall that $\ball_{G[S_{\preceq x_i}]}(x_i,R_{i})$ is the ball of radius $R_i$ around $x_i$ in the graph induced by all the descendants of $x_i$. Thus, the cluster $C_{i}$ of $x_i$ consists of all the points in this ball that did not join the clusters centered at the (proper) ancestors of $x_i$.
	Note that  $C_i$ might not be connected and that $x_i$ might not even belong to $C_i$ as it could join a previously created cluster.  Nonetheless, $C_i$ has a (weak) diameter at most $2R_i \leq 2\beta\Delta=(\alpha+1)\cdot\Delta$ by the triangle inequality.  
	
	We claim that each vertex will eventually be clustered. Indeed, consider a vertex $v\in V$. There exists some vertex $x_i\in N_{v\preceq}$ at a distance at most $\Delta$ from $v$ in $G[V_{\preceq x_i}]$ by the definition of the tree-ordered net.  If $v$ did not join any cluster centered at an ancestor of $x_i$, then $v$ will join $C_i$ because  $d_{G[V_{\preceq x_i}]}(v,x_i)\le \Delta\le R_i$.

	It remains to prove the padding property. Consider some vertex $v\in V$ and parameter $\gamma\le\frac{\alpha-1}{8}$. We argue that the ball $B=\ball_G(v,\gamma \Delta)$ is fully contained in $P(v)$ with probability at least $e^{-4\gamma\cdot\lambda}$. We define $F_i$ as the event that some vertex of $B$ belongs to $C_i$, but $B$ does not intersect any cluster $C_j$ for $j < i$, i.e., $C_i$ is the first cluster to intersect $B$. That is, $B\cap C_{i}\ne\emptyset$ and for all $j<i$, $B\cap C_j=\emptyset$. Denote by $\mathcal{C}_i$ the event that $\mathcal{F}_i$ occurred and $B$ is cut by $C_i$ (i.e. $B\nsubseteq C_i$).

	Let $v_B\in B$ be the vertex closest to the root of $T$ (w.r.t distances in $T$). By \Cref{obs:maximal}, for every vertex $u\in B$, it holds that $u\preceq v_B$. Let $x_B\in N$ be the center that is an ancestor of $v_B$ and minimizes $\ball_{G[V_{\preceq x}]}(x,v_B)$.  Note that $v_B$ will join the cluster of $x_B$, if it did not join any other cluster. It follows that no descendant of $x_B$ can be the center of the first cluster having a non-trivial intersection with $B$. This implies that for every center $x_i\notin N_{x_B\preceq}$, $\Pr[\mathcal{F}_i]=0$.
	
	\begin{claim}\label{clm:NB-size} Let $N_B$ be the set of centers $x_i$ for which $\Pr[\mathcal{F}_i]>0$. Then $|N_B|\leq \tau$.
	\end{claim}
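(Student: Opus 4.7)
The plan is to establish the inclusion $N_B \subseteq N_{v_B\preceq}^{\alpha\Delta}$ and then invoke the packing axiom of the $(\tau,\alpha,\Delta)$-tree-ordered net applied \emph{at the vertex $v_B$}, which directly yields $|N_B|\le|N_{v_B\preceq}^{\alpha\Delta}|\le\tau$. The crucial design choice---what makes the numerical constants work out---will be to anchor the packing at $v_B$ rather than at $x_B$: anchoring at $x_B$ would incur the covering term $d_{G[V_{\preceq x_B}]}(x_B,v_B)\le\Delta$ in the triangle inequality and break the needed bound.

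From the paragraph immediately preceding the claim we already have $N_B\subseteq N_{x_B\preceq}$, and since $v_B\preceq x_B$, every $x_i\in N_B$ is an ancestor of $v_B$ in $T$; in particular $v_B\in V_{\preceq x_i}$. Fix such an $x_i$. Because $\Pr[\mathcal{F}_i]>0$, some realization of the random radii produces a vertex $u\in B$ that is the first to join $C_i$, forcing $u\in V_{\preceq x_i}$ and $d_{G[V_{\preceq x_i}]}(u,x_i)\le R_i\le \beta\Delta$, where $\beta=\tfrac{\alpha+1}{2}$. To control $d_{G[V_{\preceq x_i}]}(v_B,u)$ I will first observe that $G[B]$ is connected: every $G$-shortest path from $v$ to a vertex of $B$ is itself contained in $B$. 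By \Cref{obs:maximal} the unique $\preceq$-maximum of $B$ is the root-closest vertex, namely $v_B$, so $B\subseteq V_{\preceq v_B}\subseteq V_{\preceq x_i}$. Concatenating the $v_B$--$v$ and $v$--$u$ shortest paths thus gives a walk inside $V_{\preceq x_i}$ of length at most $2\gamma\Delta$, i.e., $d_{G[V_{\preceq x_i}]}(v_B,u)\le 2\gamma\Delta$. The triangle inequality then yields
\[
d_{G[V_{\preceq x_i}]}(v_B,x_i)\;\le\;2\gamma\Delta+\beta\Delta\;=\;(2\gamma+\beta)\Delta,
\]
and with $\gamma\le\tfrac{\alpha-1}{8}$ a direct computation gives $2\gamma+\beta\le\tfrac{\alpha-1}{4}+\tfrac{\alpha+1}{2}=\tfrac{3\alpha+1}{4}\le\alpha$ (valid for $\alpha\ge 1$), placing $x_i$ in $N_{v_B\preceq}^{\alpha\Delta}$ as required.

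The main obstacle is precisely the anchor selection. The natural attempt---anchoring the packing at $x_B$---produces the weaker bound $d_{G[V_{\preceq x_i}]}(x_B,x_i)\le(1+2\gamma+\beta)\Delta$, which exceeds $\alpha\Delta$ for the parameters of the lemma; the resolution is to notice that the packing axiom of a tree-ordered net holds at \emph{every} vertex of $V$, so one can simply re-anchor at $v_B$ and eliminate the covering term.
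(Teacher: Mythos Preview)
Your proof is correct and follows essentially the same approach as the paper: both anchor the packing at $v_B$, establish $N_B\subseteq N_{v_B\preceq}^{\alpha\Delta}$ via the same triangle inequality $d_{G[V_{\preceq x_i}]}(v_B,x_i)\le 2\gamma\Delta+\beta\Delta\le\alpha\Delta$, and invoke the packing axiom. Your discussion of why anchoring at $x_B$ fails is a nice pedagogical addition, but the core argument matches the paper's exactly.
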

	\begin{subproof}
		Observe that $\mathcal{F}_i$ can have non-zero probability only if $x_i$ is an ancestor of $x_B$ and that $d_{G[V_{\preceq x_i}]}(x_{i},z) \leq R_i\leq \beta\Delta$ for some vertex $z\in B$. As  all vertices of $B$ are descendants of $x_i$,  $G[B]$ is a subgraph of $G[V_{\preceq x_i}]$. It follows from the triangle inequality that:
		\begin{align}
			d_{G[V_{\preceq x_{i}}]}(x_{i},v_{B}) & \leq d_{G[V_{\preceq x_{i}}]}(x_{i},z)+d_{G[B]}(z,v_{B})\nonumber \\
			& \leq(\beta+2\gamma)\Delta=\left(\frac{\alpha+1}{2}+2\cdot\frac{\alpha-1}{8}\right)\cdot\Delta<\alpha\Delta\label{eq:dist-xi-vB}
		\end{align}
		As $N$ is a $(\tau,\alpha,\Delta)$-tree-ordered net, there are at most $\tau$ centers in $N_i$ satisfying \cref{eq:dist-xi-vB}, implying the claim. 
	\end{subproof}	
	
	We continue by bounding the probability of a cut by each center.
	
	\begin{claim}
		For every $i$, 
		$\Pr\left[\mathcal{C}_{i}\right]\le\left(1-e^{-2\gamma\cdot\lambda}\right)\cdot\left(\Pr\left[\mathcal{F}_{i}\right]+\frac{1}{e^{(\beta-1)\cdot\lambda}-1}\right)$.
	\end{claim}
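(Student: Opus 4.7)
The plan is to condition on the realizations of the radii $\{\delta_j\}_{j<i}$ and exploit their independence from $\delta_i$. Define
\[
a_i \;=\; \tfrac{1}{\Delta}\min_{u\in B} d_{G[V_{\preceq x_i}]}(x_i,u), \qquad b_i \;=\; \tfrac{1}{\Delta}\max_{u\in B} d_{G[V_{\preceq x_i}]}(x_i,u),
\]
so that by the triangle inequality and $\diam(B)\le 2\gamma\Delta$ one has $b_i - a_i \le 2\gamma$. Let $\mathcal{U}_i$ be the event $``B\cap C_j=\emptyset\text{ for all }j<i"$; it depends only on $\{\delta_j\}_{j<i}$ and is therefore independent of $\delta_i$. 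Conditioning on $\mathcal{U}_i$, the ball $B$ first intersects $C_i$ iff $R_i\ge d_i^{\min}$, and is entirely contained in $C_i$ iff $R_i\ge d_i^{\max}$. Hence
\[
\mathcal{F}_i = \mathcal{U}_i \cap \{\delta_i\ge a_i\}, \qquad \mathcal{C}_i = \mathcal{U}_i \cap \{a_i\le \delta_i < b_i\},
\]
and independence yields $\Pr[\mathcal{F}_i]=\Pr[\mathcal{U}_i]\cdot\Pr[\delta_i\ge a_i]$ and $\Pr[\mathcal{C}_i]=\Pr[\mathcal{U}_i]\cdot\Pr[a_i\le \delta_i<b_i]$.

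Next I would evaluate these probabilities using the density of $\Texp_{[1,\beta]}(\lambda)$ and split into cases by where $[a_i,b_i]$ sits relative to $[1,\beta]$. If $b_i>\beta$ then $R_i\le \beta\Delta<d_i^{\max}$ a.s., so $\mathcal{C}_i=\mathcal{F}_i$, and a short computation using $a_i>\beta-2\gamma$ gives
\[
\Pr[\mathcal{F}_i] \;\le\; \frac{e^{-\lambda(\beta-2\gamma)}-e^{-\lambda\beta}}{e^{-\lambda}-e^{-\lambda\beta}} \;=\; \frac{e^{2\gamma\lambda}-1}{e^{(\beta-1)\lambda}-1},
\]
which one verifies is exactly what the target inequality $q\le (1-e^{-2\gamma\lambda})(q+y)$, i.e.\ $q\le xy/(1-x)$, demands. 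If instead $b_i\le\beta$ and $a_i\ge 1$, using $e^{-\lambda b_i}\ge e^{-\lambda(a_i+2\gamma)}$ gives
\[
\Pr[a_i\le\delta_i<b_i] \;\le\; \frac{e^{-\lambda a_i}(1-e^{-2\gamma\lambda})}{e^{-\lambda}-e^{-\lambda\beta}},
\]
and writing $e^{-\lambda a_i}=(e^{-\lambda a_i}-e^{-\lambda\beta})+e^{-\lambda\beta}$ decomposes this as the sum of $(1-e^{-2\gamma\lambda})\cdot\Pr[\mathcal{F}_i]/\Pr[\mathcal{U}_i]$ and a residual equal to $(1-e^{-2\gamma\lambda})/(e^{\lambda(\beta-1)}-1)$, using the identity $e^{-\lambda\beta}/(e^{-\lambda}-e^{-\lambda\beta})=1/(e^{\lambda(\beta-1)}-1)$. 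Multiplying by $\Pr[\mathcal{U}_i]\le 1$ yields the claim. The remaining subcase $a_i<1$ (where $\Pr[\delta_i\ge a_i]=1$, so $\Pr[\mathcal{F}_i]=\Pr[\mathcal{U}_i]$) is handled by the symmetric estimate $e^{-\lambda}-e^{-\lambda b_i}\le e^{-\lambda}(1-e^{-2\gamma\lambda})$ and reduces to the same inequality $\Pr[\mathcal{U}_i]\le 1$.

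The main obstacle is purely bookkeeping around the truncation at $\beta$: the upper endpoint $\beta$ of the support produces boundary mass that must be absorbed by the additive slack term $1/(e^{(\beta-1)\lambda}-1)$ in the claim, and one must track which of the three regimes ($b_i>\beta$, $b_i\le\beta$ with $a_i\ge 1$, $a_i<1$) is operative. No metric-geometric input beyond the triangle-inequality bound $b_i-a_i\le 2\gamma$ and the independence of $\delta_i$ from $\mathcal{U}_i$ enters the argument.
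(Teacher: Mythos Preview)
Your proof is correct and follows essentially the same approach as the paper: condition on the radii $\{\delta_j\}_{j<i}$, use the triangle-inequality bound $b_i-a_i\le 2\gamma$, and integrate the truncated-exponential density. The paper collapses your three cases into a single integral with limits $[\tilde\rho,\min\{\beta,\rho+2\gamma\}]$ where $\tilde\rho=\max\{\rho,1\}$, and is somewhat informal about the conditioning on $\mathcal{U}_i$ (it writes $\Pr[\mathcal{F}_i]=\Pr[\delta_i\ge\rho]$ directly); your explicit treatment of independence and the factor $\Pr[\mathcal{U}_i]\le 1$ is cleaner on that point. One small remark: the bound $b_i-a_i\le 2\gamma$ uses distances in $G[V_{\preceq x_i}]$, not in $G$, so you implicitly need $B\subseteq V_{\preceq x_i}$ (so that the shortest paths through the center $v$ of $B$ stay in the induced subgraph); this holds for every $x_i$ with $\Pr[\mathcal{F}_i]>0$, as the paper establishes just before the claim, and for all other $i$ the claim is vacuous since $\mathcal{C}_i\subseteq\mathcal{F}_i$.
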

	\begin{subproof}
		We assume that by the round $i$, no vertex in $B$ is clustered, and that $d_{G[V_{\preceq x_i}]}(x_{i},B)\le\beta\Delta$, as otherwise $\Pr[\mathcal{C}_i]=\Pr[\mathcal{F}_i]=0$ and the proof follows.
		Let $\rho$ be the minimal value of $\delta_{i}$ such that if $\delta_{i}\ge\rho$, some vertex of $B$ will join $C_i$.
		Formally $\rho=\frac{1}{\Delta}\cdot d_{G[V_{\preceq x_i}]}(x_{i},B)$.
		By our assumption, $\rho\le\beta$. Set $\tilde{\rho}=\max\{\rho,1\}$. We have:
		\[
		\Pr\left[\mathcal{F}_{i}\right]=\Pr\left[\delta_{i}\ge\rho\right]=\int_{\tilde{\rho}}^{\beta}\frac{\lambda\cdot e^{-\lambda y}}{e^{-\lambda}-e^{-\beta\lambda}}dy=\frac{e^{-\tilde{\rho}\cdot\lambda}-e^{-\beta\lambda}}{e^{-\lambda}-e^{-\beta\lambda}}~.
		\]
		Let $v_i\in B$ by the closest vertex to $x_i$ w.r.t. $G[V_{\preceq x_i}]$. Note that $d_{G[V_{\preceq x_i}]}(x_{i},v_i)=\rho\cdot\Delta$.
		Then for every $u\in B$ it holds that
		\[
		d_{G[V_{\preceq x_i}]}(u,x_{i})\le d_{G[V_{\preceq x_i}]}(v_{i},x_{i})+2\gamma\Delta=\left(\rho+2\gamma\right)\cdot\Delta~.
		\]
		Therefore,  if $\delta_{i}\ge\rho+2\gamma$, the entire ball $B$ will be contained in $C_i$.
		We conclude that:
		\begin{align*}
			\Pr\left[\mathcal{C}_{i}\right] & \le\Pr\left[\rho\le\delta_{i}<\rho+2\gamma\right]\\
			& =\int_{\tilde{\rho}}^{\min\left\{ \beta,\rho+2\gamma\right\} }\frac{\lambda\cdot e^{-\lambda y}}{e^{-\lambda}-e^{-\beta\lambda}}dy\\
			& \le\frac{e^{-\tilde{\rho}\cdot\lambda}-e^{-\left(\tilde{\rho}+2\gamma\right)\cdot\lambda}}{e^{-\lambda}-e^{-\beta\lambda}}\\
			& =\left(1-e^{-2\gamma\cdot\lambda}\right)\cdot\frac{e^{-\tilde{\rho}\cdot\lambda}}{e^{-\lambda}-e^{-\beta\lambda}}\\
			& =\left(1-e^{-2\gamma\cdot\lambda}\right)\cdot\left(\Pr\left[\mathcal{F}_{i}\right]+\frac{e^{-\beta\lambda}}{e^{-\lambda}-e^{-\beta\lambda}}\right)\\
			& =\left(1-e^{-2\gamma\cdot\lambda}\right)\cdot\left(\Pr\left[\mathcal{F}_{i}\right]+\frac{1}{e^{(\beta-1)\cdot\lambda}-1}\right)~.\qedhere
		\end{align*}
	\end{subproof}
	
	We now bound the probability that the ball $B$ is cut. As the events $\{\cF_i\}_{x_i\in N}$ are disjoint, we have that
	\begin{align*}
		\Pr\left[\cup_{i}\mathcal{C}_{i}\right]=\sum_{x_{i}\in N_{B}}\Pr\left[\mathcal{C}_{i}\right] & \le\left(1-e^{-2\gamma\cdot\lambda}\right)\cdot\sum_{x_{i}\in N_{B}}\left(\Pr\left[\mathcal{F}_{i}\right]+\frac{1}{e^{(\beta-1)\cdot\lambda}-1}\right)\\
		& \le\left(1-e^{-2\gamma\cdot\lambda}\right)\cdot\left(1+\frac{\tau}{e^{(\beta-1)\cdot\lambda}-1}\right)\qquad\mbox{(by \Cref{clm:NB-size})}\\
		& \le\left(1-e^{-2\gamma\cdot\lambda}\right)\cdot\left(1+e^{-2\gamma\cdot\lambda}\right)=1-e^{-4\gamma\cdot\lambda}~,
	\end{align*}
	where the last inequality follows as
	\[
	e^{-2\gamma\lambda}=\frac{e^{-2\gamma\lambda}\left(e^{(\beta-1)\cdot\lambda}-1\right)}{e^{(\beta-1)\cdot\lambda}-1}\overset{(\#)}\ge\frac{e^{-2\gamma\lambda}\cdot e^{(\beta-1)\cdot\lambda}\cdot\frac{1}{2}}{e^{(\beta-1)\cdot\lambda}-1}\overset{(*)}{\ge}\frac{\frac{1}{2}\cdot e^{\frac{\alpha-1}{4}\cdot\lambda}}{e^{(\beta-1)\cdot\lambda}-1}=\frac{\tau}{e^{(\beta-1)\cdot\lambda}-1}~.
	\]
    The inequality $^{(\#)}$ only holds if $(\beta - 1)\lambda \geq \ln(2)$, which is the case since $(\beta - 1)\alpha = \left( \frac{\alpha + 1}{2} - 1 \right) \cdot \frac{4}{\alpha - 1} \cdot \ln(2\tau) = 2 \ln(2\tau) \geq \ln(2)$.
    
	The inequality $^{(*)}$ holds as $\beta-1-2\gamma\ge\frac{\alpha+1}{2}-1-2\cdot\frac{\alpha-1}{8}=\frac{\alpha-1}{4}$.

	To conclude, we obtain a partition where each cluster has diameter at most $2\beta\Delta\le(\alpha+1)\cdot\Delta$, and for every $\gamma\le\frac{\alpha-1}{8}$, every ball of radius
	$\gamma\cdot\Delta=\frac{\gamma}{\alpha+1}\cdot(\alpha+1)\cdot\Delta$
	is cut with probability at most $e^{-4\gamma\cdot\lambda}=e^{-\frac{\gamma}{\alpha+1}\cdot(\alpha+1)\cdot4\cdot\lambda}$.
	Thus the padding parameter is $(\alpha+1)\cdot4\cdot\lambda=\frac{\alpha+1}{\alpha-1}\cdot16\cdot\ln(2\tau)$,
	while the guarantee holds for balls of radius up to $\frac{1}{\alpha+1}\cdot\frac{\alpha-1}{8}$.
	The lemma now follows.
\end{proof}

\section{Tree-Ordered Nets for Graphs of Bounded Tree-partition Width}\label{sec:treenet-tp}

In this section, we show that graphs of bounded tree-partition width have small tree-ordered nets as claimed in \Cref{lem:NetMainLemma}, which we restate here for convenience. 

\NetMainLemma*

Herein, let $\mathcal{T}$ be a tree partition of $G$ of width $\tp$. Recall that in the definition of a tree order $\preceq$ realized by a tree $T$ and a map $\varphi: V\rightarrow V(T)$, $\preceq$  is a partial order, which means $\preceq$ is antisymmetric. This means $\varphi$ is injective, i.e., two different vertices in $V$ will be mapped to two distinct vertices in $V(T)$. In our construction presented below, it is convenient to drop the antisymmetric property to allow two distinct vertices to be mapped to the same vertex in $V(T)$. We call a tree order $\preceq$ without the antisymmetric property a \emph{semi-tree order}. 

We then can extend the notion of a tree-ordered net to a semi-tree-ordered net naturally:  a triple $(N,T,\varphi)$ is a $(\tau,\alpha,\Delta)$-semi-tree-ordered net if $T$ and $\varphi$ define a semi-tree order on $V$, and $N$ satisfies the covering  and packing properties as in \Cref{def:TreeOrderNet}. Specifically:
\begin{itemize}
	\item Covering: for every $v\in V$ there is $x\in N$ such that $\varphi(v)\preceq\varphi(x)$, and $d_{G[V_{\preceq x}]}(v,x)\le \Delta$ for the set $V_{\preceq x}=\{u\mid \varphi(u)\preceq\varphi(x)\}$.
	\item Packing: for a vertex $v\in V$, denote by $N_{v \preceq}^{\alpha\Delta}=\{x\in N\cap V_{v\preceq}\mid d_{G[V_{\preceq x}]}(v,x)\le \alpha\Delta\}$ the set of ancestor centers of $v$ at distance at most $\alpha\Delta$ (here  $V_{v\preceq }=\{u\mid \varphi(v)\preceq\varphi(u)\}$). Then $|N_{v \preceq}^{\alpha\Delta}|\le\tau$.

\end{itemize}
 In \Cref{subsec:semi-tree-ordered}, we show that graphs of tree-partition width $\tp$ admit a good semi-tree-ordered net. 

\begin{lemma} \label{lem:SeminTreeOrderNetLemma}	
	Every weighted graph  $G=(V,E,w)$ with a tree-partition width $\tp$ admits a $(\poly(\tp),3,\Delta)$-semi-tree-ordered net, for every $\Delta>0$.    
\end{lemma}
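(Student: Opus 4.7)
I will take the tree $T$ of the semi-tree order to be the given tree partition $\mathcal{T}$ itself, and set $\varphi(v) \in V(T)$ to be the unique bag of $\mathcal{T}$ containing $v$. Since bags are disjoint but may each contain up to $\tp$ vertices, the map $\varphi$ is surjective but not injective—two vertices in the same bag satisfy $\varphi(u) = \varphi(v)$—which is exactly why we relax to a semi-tree order. Validity of the induced order $\preceq$ for $G$ is immediate from the definition of a tree partition: an edge of $G$ either lives inside a single bag, yielding $u \preceq v$ and $v \preceq u$, or between a parent--child pair of bags, yielding one of the relations. Under this identification, $V_{\preceq x}$ is exactly the union of all bags in the subtree of $\mathcal{T}$ rooted at $\varphi(x)$.

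The net $N$ will be constructed by a top-down greedy procedure: process the bags of $\mathcal{T}$ in BFS order from the root, and when processing bag $S$, iterate over its vertices $v \in S$ and add $v$ to $N$ unless some previously inserted ancestor center $x \in N$ with $\varphi(x)$ a proper ancestor of $S$ already satisfies $d_{G[V_{\preceq \varphi(x)}]}(v, x) \le \Delta$. The covering property is then immediate: every $v \in V$ either enters $N$ itself—serving as an ancestor center at distance $0$—or is skipped precisely because an ancestor center within $\Delta$ already existed.

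The heart of the proof is the packing bound $|N_{v \preceq}^{3\Delta}| \le \poly(\tp)$. Fix $v$ and write $X$ for this set of ancestor centers. Every $x \in X$ lies in some bag on the ancestor chain $\varphi(v) = S_0, S_1, \ldots, S_h$ of $\mathcal{T}$, and since each bag has at most $\tp$ vertices, it suffices to bound by $\poly(\tp)$ the number of chain-indices whose bag contributes a center. For any pair $x, x' \in X$ with $\varphi(x')$ a proper ancestor of $\varphi(x)$, the greedy rule applied at the moment $x$ was added forces $d_{G[V_{\preceq \varphi(x')}]}(x, x') > \Delta$. I plan to couple this ``separation in the shallower subgraph'' with the covering distances $d_{G[V_{\preceq \varphi(x)}]}(v, x) \le 3\Delta$ via the following device: for each $x \in X$, fix a shortest path $P_x$ from $v$ to $x$ inside $G[V_{\preceq \varphi(x)}]$. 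Since this subgraph is confined to the $\mathcal{T}$-subtree rooted at $\varphi(x)$, the path $P_x$ must traverse every intermediate bag on the tree path from $\varphi(v)$ up to $\varphi(x)$; recording the first vertex of each such bag visited by $P_x$ gives a ``signature'' of entry points along the chain. Because each bag offers at most $\tp$ entry choices, a pigeonhole/counting argument, combined with a path-splicing argument showing that two centers sharing enough of their entry-point signature would yield a shortcut violating the $>\Delta$ greedy separation, should deliver the desired $\poly(\tp)$ bound.

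The main obstacle I foresee is cleanly converting the separation inequality, which lives in the shallower center's subgraph $G[V_{\preceq \varphi(x')}]$, into a constraint usable at intermediate levels between $\varphi(x)$ and $\varphi(x')$, since distances in a subgraph are only \emph{larger} than in a supergraph and the direction we need usually goes the wrong way. A fallback I have in mind is to strengthen the greedy rule to insist on separation inside a common ambient subgraph $G[V_{\preceq \varphi(x^{\mathrm{top}})}]$ where $x^{\mathrm{top}}$ is the currently shallowest ancestor center; this places all centers of $X$ in a single graph and reduces the packing to a standard ball-packing statement whose cardinality is controlled by $\tp$ through the bag-boundary structure of the tree partition. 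Either way, I expect the final $\tau$ in Lemma~\ref{lem:SeminTreeOrderNetLemma} to be $O(\tp^{c})$ for a small constant $c$.
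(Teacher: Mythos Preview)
Your setup—taking $T=\mathcal{T}$ and $\varphi(v)=\text{bag of }v$—is exactly right, and it matches the paper. But the simple top-down greedy you propose for $N$ does \emph{not} give polynomial packing, and the obstacle you flag is fatal, not merely technical.

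Here is a concrete counterexample. Take the tree partition to be a path $S_0,S_1,\ldots,S_m$ (rooted at $S_0$) with $S_i=\{a_i,b_i\}$, so $\tp=2$. The edges of $G$ are $a_ia_{i+1}$ of weight $\Delta+1$, $b_ib_{i+1}$ of weight $0$, and $a_ib_i$ of weight $\Delta+1$. Your greedy run produces $N=\{a_0,b_0,a_1,a_2,\ldots,a_m\}$: each $b_i$ with $i\ge 1$ is skipped because $d_{G[V_{\preceq S_0}]}(b_i,b_0)=0$, while each $a_i$ is added because every proper-ancestor center $a_j$ (or $b_0$) is at distance at least $\Delta+1$ in the relevant subgraph (the only way to reach $a_j$ from $a_i$ is either along the $a$-spine or via a detour through the $b$-spine, both costing $\ge\Delta+1$). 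Now take $v=b_m$. For every $i$ the path $b_m\!-\!b_{m-1}\!-\cdots-\!b_i\!-\!a_i$ lies in $G[V_{\preceq S_i}]$ and has length $\Delta+1\le 3\Delta$, so $a_i\in N_{v\preceq}^{3\Delta}$. Hence $|N_{v\preceq}^{3\Delta}|\ge m+1$, unbounded in $m$ while $\tp=2$. Note also that your path-splicing heuristic does not rescue this: the paths $P_{a_i}$ all share the same entry points $b_k$ along the chain, yet splicing any two of them yields a path of length $2(\Delta+1)>\Delta$, so no contradiction with the greedy separation arises. Your fallback of enforcing separation in a single ambient graph $G[V_{\preceq\varphi(x^{\mathrm{top}})}]$ does not help either—in this example $S_0$ is already the root, so the ambient graph is $G$ itself and nothing changes.

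The paper's construction is substantially more elaborate and is what actually delivers the $\poly(\tp)$ bound. Rather than a single greedy sweep, it proceeds in \emph{rounds}: in each round it takes each connected component of still-uncovered bags, and within it carves ``cores'' top-down as $\Delta$-balls around the uncovered vertices of a bag, allowing these balls to bleed into previously-built cores via an \emph{attachment} mechanism. The key structural lemma is that every round covers at least one new vertex in every uncovered bag, so the process terminates after at most $\tp$ rounds; this, together with disjointness of same-round cores, gives that at most $\tp^2$ cores meet any bag and at most $\tp$ cores contain any vertex. The final $\tau=O(\tp^4)$ packing bound then comes from a further ``shadow'' argument layered on top of this. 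None of this structure is present in a single-pass greedy.
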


Given \Cref{lem:SeminTreeOrderNetLemma}, we now show how to construct a good tree-ordered net as claimed in  \Cref{lem:NetMainLemma}.

\begin{proof}[Proof of Lemma~\ref{lem:NetMainLemma}] Let $(N,\hat{T},\hat{\varphi})$ be a $(\poly(\tp),3,\Delta)$-semi-tree-ordered net of $G$. For each vertex $\hat{x}\in \hat{T}$, let $\hat{\varphi}^{-1}(\hat{x})\subseteq V$ be the set of vertices in $G$ that are mapped to $\hat{x}$.  That is,  $\hat{\varphi}^{-1}(x) = \{v\in V: \hat{\varphi}(v) = \hat{x}\}$. Roughly speaking, to construct a tree order $T$ for $G$,  we simply replace $\hat{x}$ in $\hat{T}$ by a path, say $P_{\hat{x}}$, composed of vertices in  $\hat{\varphi}^{-1}(\hat{x})$. The packing property remains the same, but the covering property might not hold if vertices in  $\hat{\varphi}^{-1}(\hat{x})$ are ordered arbitrarily along $P_{\hat{x}}$. Our idea to guarantee the packing property is to place net points in $\hat{\varphi}^{-1}(\hat{x})$ closer to the root of the tree.

We now formally describe the construction of $T$. Let $N_{\hat{x}} = N\cap \hat{\varphi}^{-1}(x)$. $N_{\hat{x}}$ is the set of net points that are mapped to $\hat{x}$ by $\hat{\varphi}$. Let $P_{\hat{x}}$ be a \emph{rooted} path created from vertices in $\hat{\varphi}^{-1}(x)$ where all vertices in   $\hat{\varphi}^{-1}(x)\setminus N_{\hat{x}}$ are descendants of every vertex in $N_{\hat{x}}$. Note that if $N_{\hat{x}}\not=\emptyset$, then the root of $P_{\hat{x}}$ is a vertex in $N_{\hat{x}}$. Vertices in $N_{\hat{x}}$ are ordered arbitrarily in  $P_{\hat{x}}$ and vertices in $\hat{\varphi}^{-1}(x)\setminus N_{\hat{x}}$ are also ordered arbitrarily. (If $\hat{\varphi}^{-1}(x) = \emptyset$, then $P_{\hat{x}}$ will be a single vertex that does not correspond to any vertex in $V(G)$.) Then we create the tree $T$ by connecting all the paths $\{P_{\hat{x}}: \hat{x} \in V(\hat{T})\}$ in the following way: if $(\hat{x},\hat{y})$ is an edge in $\hat{T}$ such that $x$ is the parent of $y$, then we connect the root of $P_{\hat{y}}$ to the (only) leaf of $P_{\hat{x}}$. The bijection between $P_{\hat{x}}$ and $\hat{\varphi}^{-1}(x)$, unless when $\hat{\varphi}^{-1}(x) = \emptyset$, naturally induce an injective map $\varphi:V(G)\rightarrow T$. 

Let $\preceq_T$ ($\preceq_{\hat{T}}$)  be the tree order (semi-tree order resp.) induced by  $T$ and $\varphi$ ($\hat{T}$ and $\hat{\varphi}$ resp.). To show that $T$ and $\varphi$ induce a tree order, it remains to show the validity: for every edge $\{u,v\} \in E$, either $u \preceq_T v$  or $v\preceq_T u$. Since $\preceq_{\hat{T}}$  is a semi-tree ordered, w.l.o.g., we can assume that $\hat{\varphi}(u)$ is an ancestor of $\hat{\varphi}(v)$; it is possible that $\hat{\varphi}(u) = \hat{\varphi}(v)$.  If $\hat{\varphi}(u) \not=\hat{\varphi}(v)$, then every vertex in $P_{\hat{\varphi}(u)}$ will be an ancestor of every vertex in $P_{\hat{\varphi}(v)}$, implying that $\varphi(u)$ is an ancestor of $\varphi(v)$ and hence  $u \preceq_T v$. If $\hat{\varphi}(u) =\hat{\varphi}(v)$, then $u$ and $v$ belong to the same path $P_{\hat{\varphi}(u)}$ and hence either $u \preceq_T v$ or $v \preceq_T u$. Thus, the validity follows. 

Finally, we show that $(N,T,\varphi)$ is a $(\poly(\tp),3,\Delta)$-tree-ordered net. Observe by the construction of $T$ that:

\begin{observation}\label{obs:T-order} if $u\preceq_T v$ then $u\preceq_{\hat{T}}v$.
\end{observation}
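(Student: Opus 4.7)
The plan is to unravel the construction of $T$ by a case split on whether the images $\hat{\varphi}(u)$ and $\hat{\varphi}(v)$ coincide, and to verify in each case that ancestry in $T$ transfers back to ancestry in $\hat{T}$. Throughout, I would keep in mind that $\varphi$ is injective but $\hat{\varphi}$ is in general not, which is the whole reason the statement is phrased in terms of $\preceq_{\hat{T}}$ (a semi-tree order) rather than a full tree order.

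Assume $u\preceq_T v$, i.e., $\varphi(v)$ is equal to, or a proper ancestor of, $\varphi(u)$ in $T$. In the case $\hat{\varphi}(u)=\hat{\varphi}(v)$, I would simply note that $\hat{\varphi}(v)$ is trivially an ancestor of itself in $\hat{T}$, so $u\preceq_{\hat{T}} v$ holds directly from the definition of the semi-tree order. This is exactly the case that motivates dropping antisymmetry: several vertices of $V$ may lie on the same path $P_{\hat{x}}$ in $T$, but $\hat{\varphi}$ collapses them all to a single node of $\hat{T}$.

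In the remaining case $\hat{\varphi}(u)\ne\hat{\varphi}(v)$, I would use the observation that, by construction, the path in $T$ from $\varphi(u)$ up to the root consists of the concatenation of the paths $P_{\hat{z}}$ for $\hat{z}$ ranging over the $\hat{T}$-path from $\hat{\varphi}(u)$ up to the root of $\hat{T}$, stitched in the same parent-child order as in $\hat{T}$ (since each such path $P_{\hat{z}}$ has its root attached to the leaf of $P_{\hat{\mathrm{parent}(z)}}$). Because $\varphi(v)$ sits on this upward path in $T$ yet belongs to a block $P_{\hat{\varphi}(v)}$ distinct from $P_{\hat{\varphi}(u)}$, the block $P_{\hat{\varphi}(v)}$ must appear strictly above $P_{\hat{\varphi}(u)}$ in the concatenation, forcing $\hat{\varphi}(v)$ to be a strict ancestor of $\hat{\varphi}(u)$ in $\hat{T}$, and hence $u\preceq_{\hat{T}} v$.

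There is no genuine technical obstacle here; the only thing to be careful about is not forgetting the first case, since a translation via $\varphi$ alone would miss precisely the identifications performed by $\hat{\varphi}$ when two distinct vertices of $V$ share an $\hat{\varphi}$-image.
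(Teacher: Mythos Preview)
Your argument is correct and matches the paper's intent: the paper simply states the observation as immediate from the construction of $T$ without further justification, while you have written out the natural case split that makes this explicit. There is no difference in approach beyond the level of detail.
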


The converse of \Cref{obs:T-order} might not be true, specifically in the case when $u$ and $v$ are mapped to the same vertex in $\hat{T}$ by $\hat{\varphi}$.  By \Cref{obs:T-order}, for any vertex $x \in V$, $V_{\preceq_T x}\subseteq V_{\preceq_{\hat{T}}x}$. Thus, for any $v$ and $x$ such that $v\preceq_{T}x$, if  $d_{G[V_{\preceq_{T} x}]}(v,x) \leq 3\Delta$, then $d_{G[V_{\preceq_{\hat{T}} x}]}(v,x) \leq 3\Delta$. Therefore, $N^{3\Delta}_{v\preceq_T }\subseteq N^{3\Delta}_{v\preceq_{\hat{T}}}$ and hence $|N^{3\Delta}_{v\preceq_T }| \leq | N^{3\Delta}_{v\preceq_{\hat{T}}}| = \poly(\tp)$, giving the packing property.

Next, we show the covering property. By the covering property of $(N,\hat{T},\hat{\varphi})$, for every $v\in V$, there exists $x\in N_{v\preceq_{\hat{T}}}$ such that $d_{G[V_{\preceq_{\hat{T}} x}]}(x,v)\leq \Delta$.

Let $Q$ be a shortest path in $G[V_{\preceq_{\hat{T}} x}]$ from $v$ to $x$. Note that $Q$ might contain vertices that are not in  $G[V_{\preceq_{T} x}]$. 

\begin{claim}\label{clm:net-close} If $Q$ contains $y \not\in V_{\preceq_{T} x}$, then $y \in N$ and $x\preceq_T y$.
\end{claim}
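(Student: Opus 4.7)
The plan is to do a case analysis on the relationship between $\hat{\varphi}(y)$ and $\hat{\varphi}(x)$ in $\hat{T}$. Since $y$ lies on $Q\subseteq V_{\preceq_{\hat{T}}x}$, we have $\hat{\varphi}(y)\preceq_{\hat{T}}\hat{\varphi}(x)$, so either $\hat{\varphi}(y)$ is a proper descendant of $\hat{\varphi}(x)$ in $\hat{T}$, or $\hat{\varphi}(y)=\hat{\varphi}(x)$. I will treat these two cases separately.

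First I would eliminate the proper-descendant case. By the way $T$ is assembled (the root of $P_{\hat{y}}$ is attached below the unique leaf of $P_{\hat{x}}$ whenever $\hat{x}$ is the parent of $\hat{y}$ in $\hat{T}$), iterating this operation shows that every vertex of $P_{\hat{\varphi}(y)}$ is a $T$-descendant of every vertex of $P_{\hat{\varphi}(x)}$. In particular, $\varphi(y)$ would be a descendant of $\varphi(x)$ in $T$, so $y\preceq_T x$ and hence $y\in V_{\preceq_T x}$, contradicting the hypothesis. Therefore this case cannot occur.

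So we must have $\hat{\varphi}(y)=\hat{\varphi}(x)$, meaning $\varphi(y)$ and $\varphi(x)$ both live on the single rooted path $P_{\hat{\varphi}(x)}$. On any rooted path one of the two vertices is an ancestor of the other; the assumption $y\notin V_{\preceq_T x}$ rules out $\varphi(y)$ being a descendant of $\varphi(x)$, so $\varphi(y)$ must be a proper ancestor of $\varphi(x)$, which gives $x\preceq_T y$ --- the ordering part of the claim. For the membership $y\in N$, I would invoke the placement invariant built into $P_{\hat{\varphi}(x)}$: all elements of $\hat{\varphi}^{-1}(\hat{\varphi}(x))\setminus N_{\hat{\varphi}(x)}$ are $T$-descendants of every net vertex in $N_{\hat{\varphi}(x)}$. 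Since $x\in N_{\hat{\varphi}(x)}$ and $\varphi(y)$ is an ancestor of $\varphi(x)$ on this path, $y$ cannot be a non-net vertex, forcing $y\in N$. I do not anticipate any real obstacle; the only thing to keep straight is the direction of the order (``ancestor in $T$'' means ``larger under $\preceq_T$''), after which the argument reads directly off the construction of $T$.
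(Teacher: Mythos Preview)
Your proof is correct and follows essentially the same approach as the paper's own argument: both first reduce to the case $\hat{\varphi}(y)=\hat{\varphi}(x)$ using the construction of $T$ from the paths $P_{\hat x}$, and then use the fact that net points in $P_{\hat{\varphi}(x)}$ are placed above non-net points (together with $x\in N$) to conclude that $y\in N$ and $x\preceq_T y$. Your write-up is simply more explicit about the case elimination than the paper's terse version.
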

\begin{subproof}
    Since $y \preceq_{\hat{T}}x$, as $y \not\in V_{\preceq_{T} x}$, it must be the case that $\hat{\varphi}(y) = \hat{\varphi}(x)$ by the construction of $T$. Let $\hat{x} = \hat{\varphi}(x)$. Since net points in $\hat{\varphi}^{-1}(\hat{x})$ are placed closer to the root in $P_{\hat{x}}$ and $y\not\in V_{\preceq_{T} x}$, $y$ must also be a net point and $x\preceq_T y$.
\end{subproof}

Let $z \in N\cap Q$ such that $z$ has the highest order in $T$. By  \Cref{clm:net-close}, every $y \in Q[v,z]$ satisfies $y \in  V_{\preceq_{T} z} $. This implies $z\in N_{\preceq_T v}$  and $G[V_{\preceq_{T} z}](v,z) \leq w(Q[x,z])\leq w(Q) \leq \Delta$, giving the covering property. 
\end{proof}

\subsection{The (Semi-)Tree-Ordered Net}\label{subsec:semi-tree-ordered}

 For a bag $B$ in $\mathcal{T}$, we define $\mt_B$ to be the subtree of $\mt$ rooted at $B$. For a subtree $\mt'$ of $\mathcal{T}$, we define $V[\mt']$ to be the union of all bags in $\mt'$; that is, $ V[\mt'] = \cup_{B\in \mt'}B$. 

\medskip
\hypertarget{alg:vertexOrdering}{\paragraph*{Constructing Cores.~}} We construct a set $\coreSet$ of subsets of the vertex set $V$ whose union covers $V$; see \Cref{alg:CoreFinding}.  Each set in $\coreSet$ is called a \emph{core}. We then construct a tree ordering of the vertices of $G$ using these sets. The cores are built in order to define a hierarchical structure on the graph that underlies the tree order $\preceq$. This structure lets us assign each vertex to a center bag and construct the map $\varphi$ used in the (semi-)tree-ordered net. The core construction is key to proving both the covering and packing properties required for our decomposition results.

We describe the notation used in \Cref{alg:CoreFinding}.  For a subset $U\subseteq V$ of vertices, we define $\ball_G(U,\Delta) = \cup_{u\in U}\ball_G(u,\Delta)$ to be the ball of radius $\Delta$ centered at $U$. We say that a vertex is \emph{covered} if it is part of at least one core constructed so far. For any subset  $X\subseteq V$ of vertices, we use $\uncovset(X)$ to denote the set of uncovered vertices of $X$.  We say that a bag $\mathcal{T}$ is \emph{covered} if all the vertices in the bag are covered;  a bag is \emph{uncovered} if at least one vertex in the bag is uncovered.

The algorithm proceeds in \emph{rounds}---the while loop in \cref{line:uncover}--- covers vertices in each round and continues until all vertices are covered. During each round, we process each connected component $\mt'$ of the forest induced on $\mathcal{T}$ by the uncovered bags of $\mathcal{T}$. Note that $\mt$ is rooted and  $\mt'$ is a rooted subtree of $\mathcal{T}$. The algorithm then works on $\mt'$ in a top-down manner. The basic idea is to take an unvisited bag $B$ of $\mathcal{T}'$ (initially all bags in $\mt'$ are marked unvisited) closest to the root (\cref{line:root-most}), carve a ball of radius $\Delta$ centered at the \emph{uncovered vertices} of $B$ in a subgraph $H$ (defined in \cref{line:attachment-usage}) of $G$ as a new core $R$ (\cref{line:add-core}), mark all bags intersecting with $R$ as visited, and repeat.  We call the bag $B$ in \cref{line:root-most} the \emph{center bag} of $R$, and the uncovered vertices in $B$ are called the \emph{centers} of the core $R$. Graph $H$ is the subgraph of $G$ induced by uncovered vertices in the subtree of $\mt'$ rooted at $B$ and \emph{the attachments} of the bags of this subtree---the set $A(X)$ associated with each bag $X$. We will clarify the role of the attachments below. See \Cref{fig:phase1} for an illustration. 

For each bag $B$ of the tree partition $T$, we associate a subset of vertices of $V[T_B]$, defined as the union of all bags in the subtree $T_B$ rooted at $B$. This set is called the \emph{attachment} of $B$, and is denoted by $A(B)$. Attachments allow cores created in a given round to include vertices from previously formed cores. Intuitively, the attachment $A(B)$ collects the portions of cores (from earlier rounds) whose center bags are descendants of $B$. However, attachments may also be updated dynamically during the algorithm, as we describe next.

For a subtree $\mt'$ of $\mathcal{T}$, we use $A[\mt']$ to denote the union of attachments of all bags of $\mt'$. Formally, $A[\mt'] = \cup_{B\in \mt'}A(B)$. When forming a core from a center bag $B$ in a connected component $\mt'$,  we construct a graph $H$, which is a subgraph $G$ induced by uncovered vertices in the subtree $\mt'_B$ rooted at $B$ and the attachment $A[\mt'_B]$ (\cref{line:attachment-usage}).  The core $R$ is a ball of radius at most $\Delta$ from uncovered vertices of $B$ in $H$ (\cref{line:add-core}).  We call $H$ \emph{the support graph} of $R$. Note that $R$ may contain vertices in the attachments of the bags in $\mt'_B$, and for each such bag, we remove vertices in $R$ from its attachment (\cref{line:loop-update-attach}). 

Once we construct a core $R$ from the center bag $B$, we have to update the attachment of the parent bag of $B$ to contain $R$ (\Cref{line:update-att}), unless $B$ is the root of $\mt'$. If $B$ is the root of $\mt'$, either $B$ has no parent bag---$B$ is also the root of $\mt$ in this case---or the parent bag of $B$ is covered, and hence will not belong to any connected component of uncovered bags. In both cases, the parent of $B$ will not be directly involved in any subsequent rounds; they could be involved indirectly via the attachments. It is useful to keep in mind that if $X$ is already a covered bag, the attachment update in \Cref{line:update-att} by adding a core centered at child bag $B$ has \emph{no effect} on subsequent rounds as we only consider uncovered bags.

	\begin{algorithm}[!ht]
	\caption{ $\textsc{ConstructCores}(\mt,G)$}\label{alg:CoreFinding}
	\DontPrintSemicolon
	$\coreSet \leftarrow \emptyset$\;
	$A(B)\leftarrow \emptyset$ for every bag $B\in \mt$
	\tcp*{the attachment.}

	mark every vertex of $V$ \emph{uncovered}\;
	$\mathtt{round} \leftarrow 1$\;
	\While{there is an uncovered vertex}{\label{line:uncover}
		\For{each tree $\mt'$ of the forest induced by uncovered bags of $\mt$\label{line:component}}{
			mark all bags of $\mt'$ \emph{unvisited}\;
			\While{there is an unvisited bag in $\mt'$}{
				pick the unvisited bag $B$ of $\mt'$ closest to the root \label{line:root-most}\;
				$H\leftarrow$ the graph induced in $G$ by $\uncovset(V[\mt'_B]) \cup A[\mt'_B]$ \label{line:attachment-usage}\;
				add $\ball_H(\uncovset(B),\Delta)$ as a new \emph{core} $\core$ into $\coreSet$\label{line:add-core}\;
				mark uncovered vertices of $R$ as \emph{covered}\label{line:mark-covered}\;
				mark all the bags of $\mt'$ that intersect $\core$ as \emph{visited}\label{line:mark-visited}\;
				\For{every bag $X\in \mt'_B$ s.t. $A(X)\cap R\not= \emptyset$}{ \label{line:loop-update-attach}
					$A(X)\leftarrow A(X)\setminus R$\label{line:update-AX}
					\tcp*{remove vertices of $R$ from the attachment}

				}
				\If{$B$ is not the root of $\mt'$}{
					$X\leftarrow$ parent bag of $B$ in $\mt'$\;
					$A(X)\leftarrow A(X)\cup \core$\label{line:update-att}
					\tcp*{update the attachment of $B$'s parent.}

				}
			}
		}
		$\mathtt{round} \leftarrow \mathtt{round} + 1$\;
	}
	\Return $\coreSet$\;
\end{algorithm}

\begin{figure}[ht]
	\centering
	\includegraphics[width=1\linewidth]{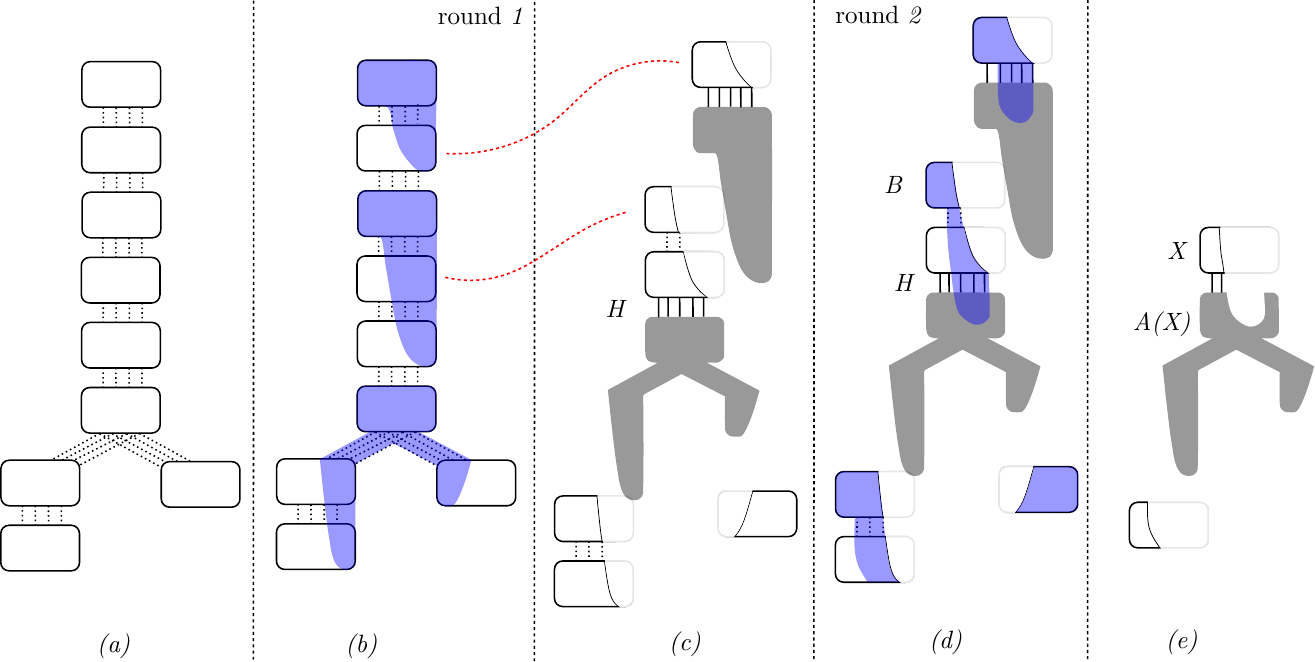}
	\caption{Illustrating two rounds of \Cref{alg:CoreFinding}. (b) In the first round, the algorithm creates  (blue) cores from $\mathcal{T}$; the attachment of every bag is $\emptyset$. (c) After the first round, uncovered bags of $\mathcal{T}$ form a forest; these are the white bags. Some bags now have non-empty attachments, illustrated by the gray-shaded regions. (d) In the second round, the algorithm considers each connected component $\mathcal{T}'$ of uncovered bags and creates cores by carving balls $\ball_H(\uncovset(B),\Delta)$ in graph $H$ from uncovered vertices of $B$. The attachment allows a core of the 2nd round to grow within a core of the 1st round. (e) The remaining uncovered bags and their attachments after round 2. A core in round 2 carves out a portion of the attachment of $X$.}
	\label{fig:phase1}
\end{figure}

As alluded above, cores in a specific round could contain vertices of cores in previous rounds via attachments. However, as we will show in the following claim that cores in the same round are vertex-disjoint.

\begin{claim}\label{clm:core-region-prop}  The cores and attachments satisfy the following properties:
	\begin{enumerate}
		\item\label{it:core-covering} $\cup_{\core \in \coreSet}\core =  V$. 
		\item\label{it:attac-des} For any bag $X$ in a connected component of unvisited bags $\mathcal{T}'$ of $\mathcal{T}$ considered in  \cref{line:component}, $A(X)$ contains vertices in the descendant bags of $X$ that are currently not in $\mathcal{T}'$. As a corollary, at any point of the algorithm, $A(X)$ only contains vertices in descendant bags of $X$, excluding~$X$.
		\item  \label{it:topdown-carving} For every core $\core\in \coreSet$ centered at a bag $B$, vertices in $R$ are in descendant bags of $B$. ($B$ is considered its own descendant.) 
		\item \label{it:covered-attac}  For every core $\core \in \coreSet$ centered at a bag $B$, if a vertex $u\in \core$ was covered before $\core$ is created, then $u$ is in the attachments of descendant bags of $B$. 
		\item \label{it:core-dijst}  Let $\core_1$ and $\core_2$ be two different cores created from the same connected component $\mt'$ in the same round. Then $V(\core_1)\cap V(\core_2) = \emptyset$. 
	\end{enumerate}    
\end{claim}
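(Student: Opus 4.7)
The plan is to prove all five properties by a single induction on the sequence of updates made by \Cref{alg:CoreFinding}, with properties~\ref{it:attac-des} and~\ref{it:topdown-carving} as the loop invariants; the remaining three properties then follow by short direct arguments. Property~\ref{it:core-covering} is essentially by construction: each execution of \cref{line:add-core} covers at least one new vertex (the uncovered centers in $B$, which are nonempty since $B$ appeared in the uncovered-bag forest), so the outer loop at \cref{line:uncover} terminates precisely when every vertex is covered.

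For property~\ref{it:attac-des} I would induct on the number of attachment updates. Initially every $A(X)$ is empty. The removal at \cref{line:update-AX} trivially preserves the ``descendant-bag'' invariant, while the addition at \cref{line:update-att} preserves it because the freshly built core $\core$ centered at $B$ satisfies $\core\subseteq V[\mt_B]$ by the inductive form of property~\ref{it:topdown-carving}, and $B$ is a strict descendant of its parent. The refinement ``not currently in $\mt'$'' at the start of a round is then a corollary: every attachment vertex present at that moment was deposited in a previous round or in an earlier iteration of the inner loop, and the corresponding source bag has since been removed from the current uncovered-bag forest. Property~\ref{it:topdown-carving} drops out of \cref{line:attachment-usage}: the support graph $H$ is induced on $\uncovset(V[\mt'_B])\cup A[\mt'_B]$; the first summand lies in $V[\mt_B]$ by definition, and by the invariant so does the second, hence $\core\subseteq V[\mt_B]$. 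Property~\ref{it:covered-attac} is then one line: if $u\in\core$ was already covered, then $u\notin\uncovset(V[\mt'_B])$, so $u$ must lie in $A(Y)$ for some $Y\in\mt'_B$, which is a descendant of $B$ by property~\ref{it:attac-des}.

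The step I expect to be the main obstacle is property~\ref{it:core-dijst}, but it splits cleanly into two cases once the processing order is fixed. Assume $\core_1$ is constructed before $\core_2$; by \cref{line:root-most}, $B_2$ is not a strict ancestor of $B_1$ in $\mt'$, so either $B_1,B_2$ are incomparable in $\mt$ or $B_2$ is a strict descendant of $B_1$. In the incomparable case the subtrees $\mt_{B_1},\mt_{B_2}$ are disjoint, hence (since bags of a tree partition are pairwise disjoint) their vertex sets are disjoint, and property~\ref{it:topdown-carving} finishes the case. The interesting case is when $B_2$ is a strict descendant of $B_1$. Suppose for contradiction $v\in V(\core_1)\cap V(\core_2)$. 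Since $v\in\core_1$, it is marked covered at \cref{line:mark-covered} before $\core_2$ is built, so $v\notin\uncovset(V[\mt'_{B_2}])$; since $v\in V(H_2)$, this forces $v\in A(Y)$ for some $Y\in\mt'_{B_2}\subseteq\mt'_{B_1}$. But the loop at \cref{line:loop-update-attach} executed when $\core_1$ was added removed every vertex of $\core_1$ from $A(Y)$ via \cref{line:update-AX}, a contradiction. The delicate point of the whole argument is exactly that this removal step is what makes the descendant case go through; without it, an attachment in a descendant bag could legitimately retain a vertex of a previously built core in the same round.
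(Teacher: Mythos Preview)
Your treatment of items~\ref{it:core-covering}--\ref{it:covered-attac} is essentially the paper's: both run a short mutual induction in which \ref{it:attac-des} and \ref{it:topdown-carving} feed each other, with \ref{it:core-covering} and \ref{it:covered-attac} as immediate consequences.

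For item~\ref{it:core-dijst} you take a genuinely different route. The paper never invokes the purge at \cref{line:update-AX}; instead it argues entirely through the \emph{visited} marking of \cref{line:mark-visited}. Assuming $v\in\core_1\cap\core_2$ and letting $Z$ be the bag containing $v$, the paper traces a shortest $H_1$-path from $\uncovset(B_1)$ to $v$ and observes that the bags it touches form a connected subset of $\mt$ containing $B_1$ and $Z$; every $\mt'$-bag on the $B_1$--$Z$ path therefore meets $\core_1$ and is marked visited, and since $B_2$ lies on that path (as $B_1\preceq B_2\preceq Z$), $B_2$ is already visited when it is supposedly selected. Your approach via attachment removal avoids this bag-path tracing and is arguably cleaner, but there is one step you still need to supply.

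The gap is in your final sentence. You conclude ``a contradiction'' from the fact that \cref{line:update-AX} removed $v$ from $A(Y)$ during $\core_1$'s iteration, whereas the membership $v\in A(Y)$ you are contradicting holds at the later moment when $H_2$ is formed. In between, \cref{line:update-att} could in principle re-insert $v$ into $A(Y)$: this happens precisely if some intermediate core $\core_3\ni v$ is created with center $B_3$ a child of $Y$ in $\mt'$. One sentence closes it: since $Y\in\mt'_{B_2}$, such a $B_3$ is a strict descendant of $B_2$ in $\mt'$; but $\core_3$ would then be processed before $\core_2$ while $B_2$ is still unvisited and strictly closer to the root, contradicting the selection rule at \cref{line:root-most}. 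With that, your argument for \ref{it:core-dijst} is complete.
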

\begin{proof}
	
	We observe that the algorithm only terminates when every vertex is covered. Furthermore, the algorithm only marks an uncovered vertex as covered when it is contained in a new core (\cref{line:mark-covered}), implying \cref{it:core-covering}. 
	
	\Cref{it:attac-des} follows from the fact that whenever we update the attachment of an uncovered bag $X$ in \cref{line:update-att} by a core $R$, the center bag $B$ of $R$ become covered and hence will be disconnected from the connected component containing $X$ in the next round. 
	
	For \cref{it:topdown-carving}, we observe that the support graph $H$ of $R$ in \cref{line:attachment-usage} contains uncovered vertices in descendant bags of $B$ and their attachments. By \cref{it:attac-des}, the attachment of a bag $X$ only contains vertices in the descendant bags of $X$. Thus, vertices in $H$ are in descendant bags of $B$, as claimed.
	
	For \cref{it:covered-attac}, observe that the only way for a covered vertex to be considered in subsequent rounds is via the attachments. Thus, \cref{it:covered-attac} follows from \cref{it:topdown-carving}. 
	
	For the last \cref{it:core-dijst}, suppose otherwise: there is a vertex $v$ in a bag $Z$ such that $v\in \core_1\cap\core_2$. W.l.o.g, we assume that $\core_1$ is created before $\core_2$. Let $B_1$ ($B_2$, resp.) be the center bag of $\core_1$ ($\core_2$, resp.). Then $B_1$ is the ancestor of $B_2$ by construction, and furthermore, by \cref{it:topdown-carving},  $Z$ is descendant of both $B_1$ and $B_2$. 
	
	If $Z \in \mt'$, then when $\core_1$ was created, all the bags on the path from $B_1$ to $Z$ will be marked visited, and hence $B_2$ will also be marked visited, contradicting that $\core_2$ was created from $B_2$. Otherwise, $Z \not \in \mt'$ and that means $v$ belongs to the attachment $A(Y)$ of some bag $Y\in \mt'$. Since all bags on the path from $B_1$ to $Y$ are marked when $\core_1$ was created, $B_2$ is not an ancestor of $Y$. But this means $v$ cannot belong to a descendant bag of $B_2$, contradicting \cref{it:topdown-carving}. 
\end{proof}

A crucial property of the core construction algorithm used in our analysis is that it has a natural hierarchy of clusters associated with it. To define this hierarchy, we need some new notation.  Let $\mt'$ be a connected component of uncovered bags in a specific round. Let $C_{\mt'} = \uncovset(V[\mt']) \cup A[\mt']$ be a set of vertices, called the \emph{cluster} associated with $\mt'$. Note that $V(H)\subseteq C_{\mt'}$. The following lemma implies that clusters over different rounds form a hierarchy.

\begin{lemma}\label{lm:hierarchy} Let $\mt_1$ and $\mt_2$ be two different connected components of uncovered bags. If $\mt_1$ and $\mt_2$  belong to the same round, then  $C_{\mt_1}\cap C_{\mt_2} = \emptyset$. If $\mt_1$ and $\mt_2$ belong to two consecutive rounds such that $\mt_2$ is a subtree of $\mt_1$, then $C_{\mt_2}\subseteq C_{\mt_1}$. 
\end{lemma}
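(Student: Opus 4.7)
My plan is to extract the lemma from two simple invariants maintained throughout \Cref{alg:CoreFinding}. The main invariant, which does most of the work, is that the family $\{A(X)\}_{X \in \mathcal{T}}$ is always pairwise disjoint and that every vertex appearing in any attachment has already been covered. The auxiliary invariant is monotonicity: a vertex, once covered, stays covered. The main invariant follows by a short induction over the two lines that touch attachments. \cref{line:update-att} attaches a freshly formed core $R$ (whose vertices have just been marked covered in \cref{line:mark-covered}) to the parent bag of its center $B$, while \cref{line:update-AX}, executed immediately before, strips $R$ from every attachment indexed by a bag in $\mt'_B$. The only thing to check is that this stripping really removes $R$ from \emph{every} existing attachment. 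This follows from $R \subseteq V(H) = \uncovset(V[\mt'_B]) \cup A[\mt'_B]$ (\cref{line:attachment-usage}--\cref{line:add-core}) together with the inductive ``only covered'' part: a previously covered vertex $u \in R$ cannot lie in $\uncovset(V[\mt'_B])$, so $u \in A(Y)$ for some $Y \in \mt'_B$, and by the inductive disjointness this $Y$ is unique; \Cref{clm:core-region-prop}\ref{it:covered-attac} packages exactly this observation.

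With the invariants in hand, the same-round case is almost automatic. If $\mt_1, \mt_2$ are distinct components of uncovered bags at the start of a round, then $V[\mt_1]$ and $V[\mt_2]$ are vertex-disjoint (they are different components of a forest), so their uncovered subsets are too; pairwise disjointness of attachments gives $A[\mt_1] \cap A[\mt_2] = \emptyset$; and the cross terms $\uncovset(V[\mt_j]) \cap A[\mt_{3-j}]$ vanish because attachments contain only covered vertices. Taking unions yields $C_{\mt_1} \cap C_{\mt_2} = \emptyset$.

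For the nested case, I would fix round $r$ containing $\mt_1$ and round $r+1$ containing $\mt_2 \subseteq \mt_1$, and write $\uncovset^r, A^r$ for the snapshots at the start of round $r$. Monotonicity together with $V[\mt_2] \subseteq V[\mt_1]$ gives $\uncovset^{r+1}(V[\mt_2]) \subseteq \uncovset^r(V[\mt_1])$. For the attachment side, any $v \in A^{r+1}[\mt_2]$ either (i) was already in $A^r[\mt_2] \subseteq A^r[\mt_1]$ (using $\mt_2 \subseteq \mt_1$ and the fact that attachments are indexed by bags), or (ii) was added to some $A(X)$ during round $r$. In case (ii), the main invariant forces $X$ to be the parent bag of the center of a core $R$ produced while processing $\mt_1$, so $X \in \mt_1$; the construction in \cref{line:attachment-usage}--\cref{line:add-core} keeps $R$ inside the support graph $H$, whose vertex set is contained in $\uncovset^r(V[\mt_1]) \cup A^r[\mt_1] = C_{\mt_1}$. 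Either way $v \in C_{\mt_1}$, so $C_{\mt_2} \subseteq C_{\mt_1}$.

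The main obstacle I anticipate is the time-dependence hidden in the definition of $C_{\mt'}$: both $\uncovset(\cdot)$ and $A(\cdot)$ mutate throughout a round, so the Part 2 inclusion really compares two different snapshots of the algorithm's state. Reducing everything to round-start snapshots, as above, keeps the bookkeeping combinatorial and avoids any case analysis on exactly when or by which core a given attachment was updated.
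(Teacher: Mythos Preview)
Your proof is correct and takes a genuinely different path for the same-round disjointness. The paper argues by induction on rounds: given a component $\mt'$ at round $i$ and two child components $\mt_1,\mt_2$ at round $i{+}1$, it shows $C_{\mt_1}\cap C_{\mt_2}=\emptyset$ by a contradiction that chases a putative shared vertex through the leaf of $\mt_1$ adjacent to $\mt_2$, locates the core $R$ centered at the intervening child bag, and invokes \cref{line:update-AX} to rule out membership in any surviving attachment of $\mt_2$. You instead isolate a global invariant---the attachments $\{A(X)\}_X$ are pairwise disjoint and contain only covered vertices---and get the same-round case in one line: distinct components share no bags, so their attachment unions are disjoint, and the cross terms vanish because $\uncovset$ and attachments never overlap. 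This is cleaner and more reusable than the paper's ad hoc contradiction.

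For the nested case your argument is essentially the paper's (``attachments only shrink or grow by cores, and cores stay inside $C_{\mt_1}$''), and both sketches share the same small gap: when you assert that the support graph $H$ satisfies $V(H)\subseteq \uncovset^{r}(V[\mt_1])\cup A^{r}[\mt_1]$, you are comparing a mid-round snapshot of $\uncovset$ and $A$ to the round-start one. This is not literally true term-by-term (earlier iterations within round $r$ may already have moved vertices into $A(X_1)$ for some $X_1\in\mt'_{B}$), but the containment $V(H)\subseteq C_{\mt_1}$ does hold by a trivial inner induction over iterations of the while-loop: the quantity $\uncovset(V[\mt_1])\cup A[\mt_1]$ can only shrink or absorb a just-created core $R$, and inductively $R\subseteq C_{\mt_1}$. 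You flag the time-dependence issue yourself in the final paragraph; spelling out this one-line inner induction would close the loop.
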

\begin{proof}
	We prove the lemma by induction. The base case holds since in round 1, we only have a single tree $T$ whose associated cluster $C_{\mathcal{T}} = V(G)$.  Let $\mt'$ be a connected component of uncovered bags at round $i$, and $\mt_1,\mt_2$ be two different components of uncovered bags that are subtrees of $\mt'$ in round $i+1$.  It suffices to show that $C_{\mt_i}\subseteq C_{\mt'}$ for any $i = 1,2$, and $C_{\mt_1}\cap C_{\mt_2} = \emptyset$.
	
	Observe that in a round $i$, the attachment of a bag $X$ either shrinks (due to the removal in \cref{line:update-AX}) or grows by the addition of the cores. As any core  $\core_1$ created from $\mt'$ is a subset of the cluster $C_{\mt'}$, $A(X)$ remains a subset of  $C_{\mt'}$ in both cases. This means all attachments of bags in $\mt_1$ are subsets of $C_{\mt'}$, implying that $C_{\mt_1}\subseteq C_{\mt'}$. The same argument gives  $C_{\mt_2}\subseteq C_{\mt'}$.
	
	Let $B_1$ and $B_2$ be the root bag of $\mt_1$ and $\mt_2$, respectively. Assume w.l.o.g that $B_1$ is an ancestor of $B_2$. Let $X$ be the (only) leaf of $B_1$ that is the ancestor of $B_2$.  By \cref{it:covered-attac} of \Cref{clm:core-region-prop}, only the attachment of $X$ could possibly have a non-empty intersection with $C_{\mt_2}$. Suppose that there exists a vertex $v\in A(X)\cap C_{\mt_2}$. Let $B$ be the child bag of $X$ that is an ancestor of $B_2$. Then we have that $B \in T'$, and $B$ is the center bag of a core $R$ created in round $i$. Observe that $v\in \core$. When $\core$ was created, all the uncovered vertices in $\mt' \cap \core$ were marked covered. This means $\core\cap \uncovset(V[\mt_2]) = \emptyset$. Thus,  $v\in \core \cap A(Y)$ for some bag $Y\in \mt'$ in round $i+1$. However, by \cref{line:update-AX}, $v$ will be removed from $A(Y)$ in round $i$ and hence will not be present in $A(Y)$ in round $i+1$, a contradiction. Thus,   $C_{\mt_1}\cap C_{\mt_2} = \emptyset$.
\end{proof}

\paragraph*{The Semi-tree-ordered Net.~} We now construct a tree ordering of the vertices using $\coreSet$.   We define the rank of a core $\core$, denoted by $\rank(\core)$, to be the round number when $\core$ is constructed. That is, $\rank(\core) =i$ if $\core$ was constructed in the $i$-th round.  Lastly, for a vertex $v$, we define $Q(v)$ to be the center bag of the \emph{smallest-rank core} containing $v$. We observe that:
\begin{observation}\label{obs:smallest-rank}
	The core $R$ covering $v$ for the first time is the smallest-rank core containing~$v$.
\end{observation}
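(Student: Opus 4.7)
The plan is to derive the observation directly from the bookkeeping in \Cref{alg:CoreFinding}. Let $R$ be the core that covers $v$ for the first time, and set $i=\rank(R)$ to be the round in which $R$ was constructed. I would argue the inequality by contradiction: suppose there were another core $R'$ with $\rank(R')=j<i$ and $v\in R'$. When $R'$ was appended to $\coreSet$ during round $j$, \cref{line:mark-covered} immediately marks every uncovered vertex of $R'$ as covered. Hence either $v$ was already covered before round $j$, or $v$ gets marked covered during round $j$ itself; both outcomes contradict the hypothesis that $R$, with $\rank(R)=i>j$, is the core that first covers $v$. This shows that $\rank(R)$ attains the minimum rank over all cores containing $v$.

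For the definite article ``the'' in the statement, I would observe that the minimizer is unique, i.e.\ $v$ lies in exactly one core of round $i$. This follows by combining \Cref{lm:hierarchy} with \cref{it:core-dijst} of \Cref{clm:core-region-prop}: within round $i$, different connected components of uncovered bags yield disjoint clusters by \Cref{lm:hierarchy} and each core is contained in its own component's cluster, while within a single component cores created in the same round are pairwise disjoint by \cref{it:core-dijst}. Therefore $v$ is in precisely one round-$i$ core, namely $R$.

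The only mild subtlety I anticipate is that later cores can legitimately contain already-covered vertices through the attachment mechanism; this is by design and does not break the argument, because \cref{line:mark-covered} only changes the status of currently uncovered vertices and never un-covers one. Thus the implication ``$v\in R' \Rightarrow v$ is covered by the end of round $\rank(R')$'' holds unconditionally, which is exactly the fact driving the contradiction. No step here looks hard; the proof is essentially a one-line invocation of \cref{line:mark-covered} plus the disjointness facts already proved.
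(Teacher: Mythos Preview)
Your argument is correct. The paper states this observation without proof, treating it as immediate from the algorithm's bookkeeping; your write-up supplies exactly the natural justification (the contradiction via \cref{line:mark-covered} for the rank minimality, and the disjointness of same-round cores via \Cref{lm:hierarchy} and \cref{it:core-dijst} of \Cref{clm:core-region-prop} for uniqueness).
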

We now define the tree ordering of vertices, and the tree-ordered net as follows. 

\begin{definition}[Tree Ordering and Tree-Ordered Net]\label{def:vertexOrdering}
Let $T$ be the rooted tree that is \emph{isomorphic} to the tree partition $\mathcal{T}$ of $G$; each node $x\in T$ corresponds to a bag $B_x$ in $\mathcal{T}$. 
        \begin{itemize}
		\item  The map $\varphi: V\rightarrow V(T)$ maps each vertex $v$ to a node $\varphi(v)$ such that its corresponding bag $B_{\varphi(v)}$ is exactly $Q(v)$. This map naturally induces a semi-tree ordering of vertices in $V$:   $u \preceq v$ if and only if $Q(v)$ is an ancestor of $Q(u)$ in the tree partition $\mathcal{T}$.
		\item  The tree-ordered net $N$ is the union of all the centers of the cores in $\coreSet$. Recall that the centers of a core is the set of uncovered vertices of the center bag in \cref{line:add-core}. 
	\end{itemize}
	The semi-tree-ordered net for $G$ is $(N, T, \varphi)$. 
\end{definition}

 The following lemma stems directly from the description of the algorithm and the discussion above.

\begin{lemma}
	\label{lem:vertexIsInCore}
	Given the tree partition $\mt$ of $G$, the semi-tree-ordered net $(N,T,\varphi)$ can be constructed in polynomial time.
\end{lemma}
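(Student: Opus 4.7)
The plan is to argue polynomial time by separately bounding (a) the running time of Algorithm \textsc{ConstructCores}, (b) the time to assemble the rooted tree $T$ and the map $\varphi$, and (c) the time to extract the net $N$. All three pieces follow directly from the description above, so the proof amounts to verifying that no loop iterates more than polynomially often and that each atomic step is polynomial.

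First I would bound the number of rounds of the outer while loop in \cref{line:uncover}. The key observation is that each round strictly decreases the number of uncovered bags of $\mt$: for every connected component $\mt'$ of uncovered bags considered in \cref{line:component}, the very first iteration of the inner while loop picks the root of $\mt'$ as the center bag $B$ (by \cref{line:root-most}), and the resulting core $\core$ contains $\uncovset(B)$, whose vertices are then marked covered in \cref{line:mark-covered}. Hence the root bag of every connected component becomes fully covered during the round, so the total number of rounds is at most $|V(\mt)|\le n$.

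Next, within a single round, each iteration of the inner while loop picks a previously unvisited bag of $\mt'$ in \cref{line:root-most} and subsequently marks at least that bag as visited in \cref{line:mark-visited}; therefore the inner loop executes at most $|V(\mt)|\le n$ times per connected component, and $O(n)$ times in total per round. Each iteration performs: building the support graph $H$ from $\uncovset(V[\mt'_B])\cup A[\mt'_B]$; a single-source shortest-path (Dijkstra) computation from $\uncovset(B)$ in $H$ to form $\ball_H(\uncovset(B),\Delta)$; and the attachment updates in \cref{line:loop-update-attach} and \cref{line:update-att}. Each of these operations is polynomial in $|V(G)|+|E(G)|$, so \textsc{ConstructCores} runs in polynomial time and produces at most $O(n^2)$ cores (in fact at most $n$ per round, and each core marks at least one new vertex covered, so there are $O(n)$ cores in total).

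Finally, the tree $T$ is isomorphic to $\mt$ and is built in linear time. During \textsc{ConstructCores} itself I would store, for every vertex $v$, a pointer to the center bag of the first core that covers it; by \Cref{obs:smallest-rank} this is exactly $Q(v)$, and $\varphi(v)$ is then read off directly. The net $N$ is simply the union over all cores of the centers recorded at the moment of their creation in \cref{line:add-core}, which again is maintained in the course of the algorithm. All three pieces together run in polynomial time. There is no genuine obstacle here — the only mild point of care is the accounting of the attachment updates in \cref{line:loop-update-attach}, which is bounded by the number of (core, bag) incidences and hence polynomial in $n$.
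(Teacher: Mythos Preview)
Your argument is correct. The paper itself gives no explicit proof of this lemma; it simply states that it ``follows directly from the description of the algorithm and the discussion above.'' Your proposal supplies exactly the kind of routine bookkeeping the paper omits, and all steps check out: the root bag of each connected component is indeed fully covered in every round (so the outer loop terminates after at most $|V(\mt)|$ rounds --- the paper separately proves the sharper bound $\tp$ in \Cref{lem:numberiterations}, but that is not needed here), each inner iteration marks a fresh bag visited, and the per-iteration work is polynomial. Your handling of $\varphi$ via \Cref{obs:smallest-rank} is also the natural implementation.
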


We can see that the tree ordered induced by $T$ and  $\varphi$ is a semi-tree-order since it does not have the antisymmetric property: there could be two different vertices $u$ and $v$ such that $Q(u) = Q(v)$ and hence they will be mapped to the same vertex of $T$ via $\varphi$.  

\subsection{The Analysis}

We now show all properties of the tree-ordered net $(N,T,\varphi)$ as stated in \Cref{lem:NetMainLemma}. Specifically, we will show the covering property in \Cref{lem:covering} and the packing property in \Cref{lem:packing}. We start with the covering property.

\begin{lemma}
	\label{lem:covering}
	Let $(N,T,\varphi)$ be the tree-ordered net defined in \Cref{def:vertexOrdering}.  For every vertex $v \in V$ there is an $x\in N_{v\preceq}$ such that $d_{G[V_{\preceq x}]}(v,x)\le \Delta$.
\end{lemma}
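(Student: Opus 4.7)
The plan is to pick $R$ to be the smallest-rank core in $\coreSet$ that contains $v$, let $B$ be its center bag (so that $Q(v)=B$ by \Cref{obs:smallest-rank}), and take $x$ to be a center of $R$ (i.e., a vertex of $\uncovset(B)$ at the moment $R$ was built) that attains $d_H(v,x)\le\Delta$; such an $x$ exists because $v\in R=\ball_H(\uncovset(B),\Delta)$, where $H$ is the support graph of $R$. Since $x$ was itself uncovered immediately before $R$ was built, $R$ is the smallest-rank core containing $x$, so $Q(x)=B=Q(v)$ and hence $v\preceq x$ (with $x\in N$ by construction). The remaining job is to promote the bound $d_H(v,x)\le\Delta$ to an induced-subgraph bound $d_{G[V_{\preceq x}]}(v,x)\le\Delta$, and I would do this by proving the containment $V[H]\subseteq V_{\preceq x}$; once this is in hand, the shortest $v$-$x$ path in $H$ sits entirely inside $G[V_{\preceq x}]$.

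So the work is to show that $Q(u)\in \mt_B$ for every $u\in V[H]=\uncovset(V[\mt'_B])\cup A[\mt'_B]$. Consider first $u\in \uncovset(V[\mt'_B])$ at the time $R$ is built, and let $R_u$ be the first core covering $u$, with center $B_u=Q(u)$. If $R_u$ is created in the same round $r$ as $R$, then $R_u$ must be built strictly after $R$ (otherwise $u$ would already be covered), so $B_u$ was unvisited at the moment $B$ was selected; the selection rule in \cref{line:root-most} then forbids $B_u$ from being a strict ancestor of $B$ in $\mt'$. Combining with \cref{it:topdown-carving}, which says $B_u$ is an ancestor of the physical bag $Y_u\in \mt'_B$, forces $B_u$ onto the $Y_u$-to-$B$ path in $\mathcal{T}$, hence $B_u\in \mt_B$. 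If $R_u$ is built in a later round, then $B$ is already fully covered (every vertex of $\uncovset(B)$ was absorbed into $R$), so the round-$r_u$ component of uncovered bags containing $Y_u$ cannot include $B$; being a connected subtree of the uncovered forest that contains the descendant $Y_u$, it must sit inside $\mt_B\setminus\{B\}$, and in particular $B_u\in \mt_B$.

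For the attachment case $u\in A(Y)$ with $Y\in \mt'_B$, let $R_*$ be the smallest-rank core containing $u$ and $B_*=Q(u)$ its center. I plan to establish the invariant that whenever $u$ sits in some attachment $A(Z)$ during the algorithm, $Z$ is an ancestor of $B_*$ in $\mathcal{T}$. The base case is immediate from \cref{line:update-att} right after $R_*$ is built, which parks $u$ in $A(\text{parent}(B_*))$. For the inductive step, a later core $R''$ with center $B''$ can relocate $u$ only by picking $u$ up from some $A(Z)$ via its support graph; since $u$ is already covered, $u$ must appear in $A[\mt'_{B''}]$, so $Z\in \mt'_{B''}$ and therefore $B''$ is an ancestor of $Z$, which by induction is an ancestor of $B_*$. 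The new host is then $\text{parent}(B'')$, still an ancestor of $B_*$ (the alternative, that $B''$ is the root of its component and $u$ is discarded from the attachment system, is impossible because $u\in A(Y)$ at the start of round $r$). Applying the invariant at the start of round $r$ shows $Y$ is an ancestor of $B_*$; combined with $Y\in \mt'_B$ we conclude $B_*\in \mt_B$, as required.

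I expect the attachment case to be the main obstacle: a covered vertex $u$ can drift between many attachments across rounds, and verifying that every possible relocation preserves the invariant requires pinning down exactly how \cref{line:loop-update-attach} and \cref{line:update-att} interact, together with the fact (itself worth checking carefully) that $u$ belongs to at most one attachment at any moment. The uncovered case is comparatively clean since it reduces to the \cref{line:root-most} selection rule plus the component hierarchy from \Cref{lm:hierarchy}, both of which are already in place.
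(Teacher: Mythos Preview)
Your proposal is correct and follows the same high-level strategy as the paper: take $R$ to be the smallest-rank core containing $v$, let $B$ be its center bag, pick $x$ to be a center of $R$ realizing $d_H(v,x)\le\Delta$, and then argue that the witnessing path lies in $G[V_{\preceq x}]$.

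The main difference is that you prove the stronger containment $V[H]\subseteq V_{\preceq x}$, whereas the paper only shows $R\subseteq V_{\preceq x}$. The paper's simplification exploits that $R=\ball_H(\uncovset(B),\Delta)$ is a metric ball: every prefix of a shortest $\uncovset(B)$-to-$v$ path in $H$ has length at most $\Delta$, so the entire path lies in $R$, giving $d_{H[R]}(v,x)\le\Delta$. This lets the paper bypass your Case~1 entirely. The uncovered vertices of $R$ are precisely the set $U$ covered for the first time by $R$, so $Q(u)=B$ for each of them trivially by \Cref{obs:smallest-rank}; no round-by-round analysis or appeal to the selection rule in \cref{line:root-most} is needed.

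For the attachment case (your Case~2, the paper's $R\setminus U$), both arguments need the same fact: if $z\in A(Y)$ with $Y$ a descendant of $B$, then $Q(z)$ is a descendant of $Y$. The paper cites \cref{it:attac-des} and \cref{it:covered-attac} of \Cref{clm:core-region-prop} together with the definition of $Q(\cdot)$ and leaves it at that; your explicit attachment-tracking invariant (the host of $u$ in the attachment system is always an ancestor of $Q(u)$) is essentially this deduction unpacked, and is arguably more transparent. So your Case~2 is not extra work relative to the paper --- it is the paper's argument made explicit --- but your Case~1 is, and restricting from $H$ to $H[R]$ would let you drop it entirely.
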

\begin{proof} By \cref{it:core-covering} of \Cref{clm:core-region-prop},  $v$ is in at least one core of $\coreSet$.  Let $R \in \coreSet$ be the core of the smallest rank containing $v$. Let $B$ be the center bag of $R$. Note that $Q(v) = B$ by definition.
	
	Let $U\subseteq \core$ be the set of uncovered vertices in $\core$ when $R$ is created; all vertices in $U$ will be marked as covered in \cref{line:mark-covered} after $R$ is created.  We observe that:
	\begin{enumerate}
		\item All vertices in $U$ are equivalent under the tree ordering $\preceq$. This is because $Q(u)=B$ for every $u\in U$  by \Cref{obs:smallest-rank} and definition of $Q(\cdot)$.
		\item  For any two vertices $z\in R\setminus U$ and $u\in U$, $z \prec u$ (i.e. $z$ a proper descendant of $u$). To see this, observe that, by  \cref{it:covered-attac} in \Cref{clm:core-region-prop},  $z$ is in the attachment of some bag $Y$, which is the descendant of the center bag $B$.   By \cref{it:attac-des} and the definition of $Q(\cdot)$, $Q(z)$ is a (proper) descendant of $Y$. Therefore, $Q(z)$ is a descendant of $B$, which is $Q(u)$, implying that $z\prec u$.
	\end{enumerate}
	Let $H$ be the support graph (in \cref{line:attachment-usage}) of $R$. By the construction of $\core$ (\cref{line:add-core}), $d_{H[R]}(v,U)\leq \Delta$.  This means there exists an $x\in U$ such that  $d_{H[R]}(v,x)\leq \Delta$. The two observations above imply that $H[R] \subseteq G[V_{\preceq x}]$. Thus, $d_{G[V_{\preceq x}]}(v,x)\leq \Delta$ as desired.       
\end{proof} 

The packing property is substantially more difficult to prove. Our argument goes roughly as follows. First, we establish several structural properties of the cores. In particular, we show that for every bag $B$ in $T$, the number of cores intersecting $B$ is at most $O(\mathrm{tp}^2)$, and that each vertex $v$ belongs to at most $O(\mathrm{tp})$ cores (\Cref{cor:core-bag-intersect}). This allows us to bound $\left|N_{v\preceq}^{2\Delta}\right|$ via bounding the number of cores that contain at least one vertex in $N_{v\preceq}^{2\Delta}$. This set of cores can be partitioned into two sets: those that contain $v$---there are only $O(\tp)$ of them---and those that do not contain $v$. To bound the size of the latter set, we basically construct a sequence of cores of strictly increasing ranks $R_1^*, R_2^*, \ldots, R_\ell^*$ for $\ell \leq \tp$, and for each $R_j^*$, show that there are only $O(\tp^2)$ ancestral cores of $R_j^*$ that are not ancestors of lower ranked cores in the sequence. This implies a bound of $O(\tp^3)$ on the set of cores, giving the packing property.

We begin by analyzing several properties of the cores. 

\begin{lemma}  \label{lemma::coresDisjointRank}The cores of the same rank are vertex-disjoint.  
\end{lemma}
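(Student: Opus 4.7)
The plan is to split the set of cores of a fixed rank $i$ according to which connected component of uncovered bags produced them, and handle the intra-component case and the inter-component case separately, invoking two facts already established in the excerpt.

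First I would recall that during round $i$, Algorithm~\ref{alg:CoreFinding} enumerates the connected components $\mt'_1, \mt'_2, \ldots$ of the forest induced by the currently uncovered bags of $\mt$, and constructs every rank-$i$ core from exactly one of these components. Hence, to prove the lemma, it suffices to show that (a) two rank-$i$ cores produced inside the same component are vertex-disjoint, and (b) two rank-$i$ cores produced inside distinct components are vertex-disjoint.

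For (a), I would simply cite Item~\ref{it:core-dijst} of \Cref{clm:core-region-prop}, which is exactly the statement that two different cores created from the same connected component $\mt'$ in the same round have disjoint vertex sets. For (b), let $\core_1$ be produced from $\mt'_a$ and $\core_2$ from $\mt'_b$ with $a\neq b$. By construction, the support graph used to create $\core_j$ is a subgraph of $G[C_{\mt'_j}]$, where $C_{\mt'_j} = \uncovset(V[\mt'_j]) \cup A[\mt'_j]$ is the cluster associated with $\mt'_j$; therefore $\core_j \subseteq C_{\mt'_j}$ for $j=1,2$. Since $\mt'_a$ and $\mt'_b$ are two different connected components of uncovered bags in the same round, \Cref{lm:hierarchy} yields $C_{\mt'_a}\cap C_{\mt'_b} = \emptyset$, which immediately gives $\core_1\cap \core_2 = \emptyset$.

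I do not anticipate a real obstacle here: the lemma is essentially a bookkeeping consequence of the two structural results (\cref{it:core-dijst} of \Cref{clm:core-region-prop} and \Cref{lm:hierarchy}) that were set up precisely to make this disjointness transparent. The only subtlety to double-check is that the containment $\core_j\subseteq C_{\mt'_j}$ actually holds, which follows because the support graph $H$ used in \cref{line:attachment-usage} for $\core_j$ is induced on $\uncovset(V[(\mt'_j)_{B}])\cup A[(\mt'_j)_B]\subseteq C_{\mt'_j}$, and $\core_j$ is carved as a ball inside $H$.
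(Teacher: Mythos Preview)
Your proposal is correct and follows essentially the same two-case split as the paper (same component vs.\ different components, with \Cref{lm:hierarchy} handling the latter). The only cosmetic difference is that for the intra-component case you cite \cref{it:core-dijst} of \Cref{clm:core-region-prop} directly, whereas the paper re-derives that fact via the tree-partition edge property; your version is actually more economical.
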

\begin{proof}  
 Let $R_1$ and $R_2$ be two cores of the same rank, say $r$. Let $B_k$ be the center bags of $R_k$ for $k=1,2$.  If $B_1$ and $B_2$ belong to two different connected components of uncovered bags in round $r$, then by \Cref{lm:hierarchy}, $R_1\cap R_2 = \emptyset$. Thus, we only consider the complementary case where $B_1$ and $B_2$ belong to the same connected component.  
	
	Suppose for contradiction that there exists $v\in R_1\cap R_2$. By \cref{it:topdown-carving} in \Cref{clm:core-region-prop}, $R_k$ only contains vertices in descendant bags of $B_k$.  As $R_1\cap R_2 \not=\emptyset$,  either $B_1$ is an ancestor of $B_2$ or $B_2$ is ancestor of $B_1$. W.l.o.g., we assume that $B_1$ is an ancestor of $B_2$. Let $Y$ be the bag containing $v$. Then $Y$ is the descendant of both $B_1$ and $B_2$. As endpoints of edges of $G$ are either in the same bag or in two adjacent bags of the tree partition, $R_1\cap Y \not=\emptyset$ implies that $R_1\cap B_2\not=\emptyset$. Thus, $B_2$ will be marked as visited in \cref{line:mark-visited} of the algorithm, and hence $R_2$ will not have the same rank as $R_1$, a contradiction.  
\end{proof}

\begin{lemma}\label{lem:numberiterations}
	The algorithm has at most $\tp$ iterations. Therefore, $\rank(\core) \leq \tp$ for every $\core \in \coreSet$.
\end{lemma}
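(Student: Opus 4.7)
The plan is to show that during every round $r$, each uncovered bag $B$ in the current forest $\mathcal{T}'$ of uncovered bags strictly loses at least one of its uncovered vertices. Since every bag has at most $\tp$ vertices, this will immediately give termination in at most $\tp$ rounds and therefore $\rank(\core)\le\tp$ for every core. There are two cases for such a $B$: if $B$ is itself the center of a core $R$ in round $r$, then $R=\ball_H(\uncovset(B),\Delta)\supseteq\uncovset(B)$, so all of $\uncovset(B)$ becomes covered; otherwise $B$ is marked visited (\cref{line:mark-visited}) by some core $R$ whose center $B'$ is a strict ancestor of $B$ in $\mathcal{T}'$, and I will argue that $R\cap B\subseteq\uncovset(B)$, so the witness vertex is necessarily uncovered and becomes covered.

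The key ingredient I plan to establish is an \emph{attachment-emptiness} observation: at the start of round $r$, for any two bags $X,B\in\mathcal{T}'$ lying in the same connected component of $\mathcal{T}'$ with $X$ a strict ancestor of $B$ in $\mathcal{T}$, one has $A(X)\cap B=\emptyset$. I would prove this by contradiction. If $v\in A(X)\cap B$, then the most recent addition of $v$ to $A(X)$ was performed in \cref{line:update-att} by some earlier core $R^{\star\star}$ whose center $X^{\star}$ is a child of $X$ in $\mathcal{T}$. By \cref{it:topdown-carving}, $v\in R^{\star\star}\cap B$ forces $B$ to be a descendant of $X^{\star}$. The possibility $X^{\star}=B$ is ruled out: whenever a bag is chosen as a center, the inclusion $\uncovset(\cdot)\subseteq\ball_H(\uncovset(\cdot),\Delta)$ makes it fully covered in that round, contradicting $B\in\mathcal{T}'$. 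Hence $X^{\star}$ is a strict ancestor of $B$ and lies on the unique $\mathcal{T}$-path from $X$ to $B$. Since $X^{\star}$ was a center in round $r^{\star\star}<r$, it is fully covered from that round onwards, so $X^{\star}\notin\mathcal{T}'$ at the start of round $r$. But $\mathcal{T}'$ is a connected subtree of $\mathcal{T}$ containing both $X$ and $B$, so it must contain every bag on the $\mathcal{T}$-path between them --- contradiction.

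Once attachment-emptiness is in hand, everything else will follow. For any $B\in\mathcal{T}'_{B'}$, the observation covers $X\in\mathcal{T}'_{B'}$ that are strict ancestors of $B$, while \cref{it:attac-des} covers non-ancestor $X$'s and $X=B$; jointly this gives $B\cap A[\mathcal{T}'_{B'}]=\emptyset$. Consequently, in the support graph $H$ built in \cref{line:attachment-usage}, $H\cap B=\uncovset(B)$, so any core $R\subseteq H$ that intersects $B$ does so only through uncovered vertices of $B$. Combining this with the center case completes the per-round decrease of $|\uncovset(B)|$; iterating $\tp$ times drives every $|\uncovset(B)|$ to zero, proving that the algorithm terminates in at most $\tp$ rounds. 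The main delicate point is precisely the attachment-emptiness observation; once it is proved, the rest is bookkeeping with \Cref{clm:core-region-prop}.
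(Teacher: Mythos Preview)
Your proof is correct and follows essentially the same strategy as the paper: show that every uncovered bag loses at least one uncovered vertex per round, splitting into the case where the bag is itself a center (all its uncovered vertices get covered) and the case where it is marked visited by an ancestor's core (the intersecting vertex must be uncovered). The only difference is that your ``attachment-emptiness observation'' is already exactly the first sentence of \cref{it:attac-des} in \Cref{clm:core-region-prop} --- that item asserts $A(X)$ contains \emph{only} vertices from descendant bags of $X$ that are \emph{not} in $\mathcal{T}'$, which immediately gives $A(X)\cap B=\emptyset$ for any $B\in\mathcal{T}'$ --- so the paper simply cites \cref{it:attac-des} and skips the reproof you give; otherwise the two arguments coincide.
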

\begin{proof}
	Let $B'$ be any bag of $\mt$. We claim that in every iteration, at least one uncovered vertex of $B'$ gets \emph{covered}. Thus, after $\tp$ iterations, every vertex of $B'$ is covered, and hence the algorithm will terminate.
	
	Consider an arbitrary iteration where $B'$ remains uncovered; that is, at least some vertex of $B'$ is uncovered. Let $\mt'$ be the connected component of uncovered bags of $\mt$ containing $B'$. If $B'$ got picked in \cref{line:add-core}, then all uncovered vertices of $B'$ are marked as covered (in \cref{line:mark-covered}), and hence the claim holds. Otherwise, $B'$ is marked visited in \cref{line:mark-visited} when a core $\core \in \coreSet$ is created. Let $H$ be the support graph of $\core$. By \cref{it:attac-des} in \Cref{clm:core-region-prop}, no bag in $\mt'$ contains a vertex of $B'$. Thus, $H$ only contains uncovered vertices of $B'$. As $\core\cap B' \not=\emptyset$, $\core$ contains at least one uncovered vertex of $B'$, which will be marked as covered in \cref{line:mark-covered}; the claim holds.  
\end{proof}

We obtain the following corollary of \Cref{lemma::coresDisjointRank} and \Cref{lem:numberiterations}.

\begin{corollary}\label{cor:core-bag-intersect} Each vertex is contained in at most $\tp$ cores. Furthermore, the number of cores intersecting any bag is at most $\tp^2$.
\end{corollary}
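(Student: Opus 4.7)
The plan is to derive both statements directly from the two preceding lemmas, namely \Cref{lemma::coresDisjointRank} (cores of equal rank are vertex-disjoint) and \Cref{lem:numberiterations} (the number of rounds, and hence the maximum rank, is at most $\tp$). Since the rank function partitions $\coreSet$ into at most $\tp$ classes, and each class consists of pairwise vertex-disjoint cores, a union bound across ranks is all that is needed.

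For the first claim I would fix a vertex $v \in V$ and consider the multiset of cores containing $v$. For each $r \in \{1,\dots,\tp\}$, the cores of rank $r$ form a vertex-disjoint family by \Cref{lemma::coresDisjointRank}, so $v$ lies in at most one core of rank $r$. Summing over all $\tp$ possible rank values using \Cref{lem:numberiterations} yields at most $\tp$ cores containing $v$.

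For the second claim I would fix a bag $B$ in the tree partition $\mathcal{T}$. By the definition of tree-partition width, $|B| \le \tp$. A core $R$ intersects $B$ iff it contains some $v \in B$, so the set of cores intersecting $B$ is $\bigcup_{v \in B}\{R \in \coreSet : v \in R\}$. Applying the first claim to each $v\in B$ bounds each set in the union by $\tp$, and $|B| \le \tp$ bounds the number of terms in the union, giving at most $\tp \cdot \tp = \tp^2$ cores intersecting $B$.

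There is essentially no obstacle here: the work has already been done by \Cref{lemma::coresDisjointRank} and \Cref{lem:numberiterations}; the corollary is a two-line counting argument. The only thing worth being slightly careful about is stating cleanly that the bag size in a tree partition of width $\tp$ is at most $\tp$ (so that the product $|B|\cdot\tp$ is the right quantity), which is immediate from the definition of tree-partition width given earlier in the paper.
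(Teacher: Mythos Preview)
Your proposal is correct and takes essentially the same approach as the paper: both arguments derive the first claim from \Cref{lemma::coresDisjointRank} and \Cref{lem:numberiterations}, and for the second claim both use that a bag has at most $\tp$ vertices together with the per-rank disjointness to get the $\tp^2$ bound. The only cosmetic difference is that the paper groups the count by rank (at most $\tp$ disjoint cores of each rank can meet $B$, times $\tp$ ranks) whereas you group by vertex (at most $\tp$ cores per vertex, times $|B|\le\tp$ vertices); these are the same double-counting.
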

\begin{proof}
	By \Cref{lemma::coresDisjointRank},  cores of the same rank are vertex-disjoint. Thus, each vertex belongs to at most one core of a given rank. As there are $\tp$ different ranks by \Cref{lem:numberiterations}, each vertex belongs to at most $\tp$ cores. 
	
	Let $B$ be any bag in $\mt$. As the cores of the same rank are vertex-disjoint by \Cref{lemma::coresDisjointRank}, there is at most $\tp$ cores of a given rank intersecting $B$. As there are at most $\tp$ different ranks by  \Cref{lem:numberiterations}, the total number of cores intersecting any bag is at most $\tp^2$.
\end{proof}

We define the \emph{rank} of a vertex $v$, denoted by $\rank(v)$, to be the lowest rank among all the cores containing it: $\rank(v) = \min_{\core\in \coreSet \wedge  v\in \core}\rank(\core)$.  The following lemma also follows from the algorithm.
\begin{lemma}
	\label{lem:noncentervertices}Let $B$ be the center bag of a core $\core$, and the vertices of $U$ be its center. If $\core$ has rank $i$, then every vertex $v\in B\setminus U$ has rank strictly smaller than $i$. 
\end{lemma}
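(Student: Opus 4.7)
The plan is to argue that a vertex $v\in B\setminus U$ must have been covered \emph{strictly before} round $i$, by ruling out the possibility that $v$ was covered by another rank-$i$ core created earlier in the same round. The starting observation is that, by the definition of $U$ as $\uncovset(B)$ at the moment $R$ is created, every $v\in B\setminus U$ is already covered when $R$ is created; hence $v$ lies in some core $R^*$ whose creation preceded $R$'s. The whole task reduces to showing $\rank(R^*) < i$.

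I would split on when $v$ first became covered. If $v$ is already covered at the start of round $i$, then the core $R^*$ responsible for covering $v$ has rank strictly less than $i$, and $\rank(v)\le\rank(R^*)<i$ directly. The delicate case is when $v$ is uncovered at the start of round $i$ but gets covered during round $i$ before $R$ is created; then some rank-$i$ core $R^*$ must contain $v$. Here I would further case-split on the connected component $\mt^*$ containing the center bag $B^*$ of $R^*$, versus the component $\mt'$ containing $B$.

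In the same-component subcase $\mt^*=\mt'$, I would apply \cref{line:mark-visited}: when $R^*$ is created, every bag of $\mt'$ intersecting $R^*$ is marked visited, and since $v\in R^*\cap B$ with $B\in\mt'$, this marks $B$ visited. But then $B$ could never be selected at \cref{line:root-most} as the center bag of $R$, contradicting the hypothesis. In the different-component subcase $\mt^*\ne\mt'$, $v\in R^*$ forces $v\in V(H^*)=\uncovset(V[\mt^*_{B^*}])\cup A[\mt^*_{B^*}]$ at the time $R^*$ is created. Membership in the first piece means $v$ lies in some bag of $\mt^*_{B^*}\subseteq\mt^*$; since $v\in B\in\mt'$ and connected components are disjoint, this is impossible. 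Membership in an attachment $A(Y)$ means $v$ was added to $A(Y)$ by the creation of some earlier core $R'$ with $v\in R'$; here \Cref{lemma::coresDisjointRank} (cores of the same rank are vertex-disjoint) rules out $\rank(R')=i$, and $\rank(R')<i$ would mean $v$ was covered before round $i$, contradicting the assumption of this case. Both sub-subcases collapse, so the difficult case is vacuous.

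The main obstacle is the attachment bookkeeping: the support graph $H^*$ of $R^*$ can reach far down into vertices of previously covered regions, so one has to be careful to show that $v$ cannot quietly be sitting in some attachment $A(Y)$ at the moment $R^*$ is constructed. The right tool is the pairing of \Cref{lemma::coresDisjointRank} with the invariant that any vertex added to an attachment was, by construction, simultaneously placed into a core (and thus covered). Given these two facts, the forward reasoning above closes the argument cleanly, and the lemma follows.
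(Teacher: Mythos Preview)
Your proof is correct and follows essentially the same approach as the paper's. The paper argues in three sentences that, since $B$ is unvisited when it is picked, any vertex of $B$ covered during round $i$ would have forced $B$ to be marked visited, so vertices in $B\setminus U$ must have been covered in earlier rounds; your case analysis---in particular the different-component subcase and the appeal to \Cref{lemma::coresDisjointRank}---makes explicit the details (ruling out coverage by a rank-$i$ core from another component via the attachment invariant) that the paper leaves implicit.
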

\begin{proof}
	When $\core$ is constructed (in \cref{line:add-core}), all uncovered vertices of $B$ are in $U$ and will be marked as covered afterward. Thus, before $\core$ is constructed, vertices in $B\setminus U$ must be marked covered, and furthermore, $B$ is not marked as visited. This means vertices in $B\setminus U$ are marked in previous rounds, and hence have ranks strictly smaller than $i$. 
\end{proof}

 Next, we introduce central concepts in the proof of the packing property. We say that a core $\core_1$ is an ancestor (descendant resp.) of $\core_2$ if the center-bag of $\core_1$ is an ancestor (descendant resp.) of the center bag of $\core_2$. For any bag of $\mathcal{T}$, we define its \emph{level} to be the hop distance from the root in $\mathcal{T}$. Also, we define the \emph{graph rooted at $B$} to be the subgraph of $G$ induced by the subtree of $\mathcal{T}$ rooted at $B$. For any core $\core$, we define a graph $G_{\mt}[\core]$ to be the subgraph of $G$ induced by vertices in the bags of the subtrees of $\mt$ rooted at the center-bag of $\core$; see \Cref{fig:shadow}(a). 

Let $\mathcal{A}^{\prec}(\core)$ be the set of cores that are ancestors of $\core$ and have rank strictly less than $\core$. We define the \emph{shadow domain} of a core $\core$ to be the graph obtained from $G_{\mt}[\core]$ by removing the vertices that are contained in at least one core in $\mathcal{A}^{\prec}(\core)$. The \emph{shadow} of a core $\core$, denoted by  $\shadow({\core})$, is defined as the ball of radius $\Delta$ centered around $\core$ in the shadow domain of $\core$. The \emph{strict shadow} of $\core$ is defined as its shadow minus itself, i.e., $\shadow({\core}) \setminus \core$. To bound the size of $N_{v\preceq}^{2\Delta}$, for a vertex $v$ we are interested in the cores and their shadows where $v$ is located. See \Cref{fig:shadow}(b).

\begin{figure}[ht]
	\begin{center}
		\begin{tikzpicture}
                  \node at (0,0){\includegraphics[width=0.8\linewidth]{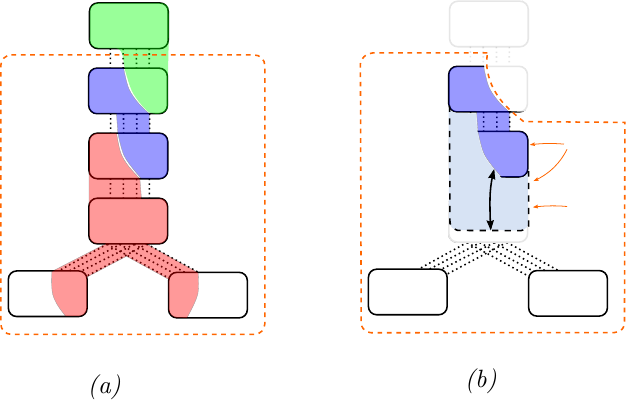}};
                 \footnotesize
			\node at (-6.2,4.55){$R_1$};
			\node at (-6.2,2.75){$R_2$};
			\node at (-6.2,1.2){$R_3$};
			\node at (-7.2,4.0){$G_{\mt}[R_2]$};
			\node at (2.0,4.6){shadow domain};
			\node at (2.0, 4.2){of $R_2$};
			\node at (3.1,2.75){$R_2$};
			\node at (7.7,1.35){shadow of $R_2$};
			\node at (8.4,-0.1){strict shadow of $R_2$};
			\node at (4.2,-0.1){$\Delta$};
		\end{tikzpicture}   
	\end{center}
	\caption{(a) Core $\core_2$ and graph $G_{\mt}[R_2]$ rooted at its center bag, (b) the shadow domain, shadow and strict shadow of core $\core_2$.}
	\label{fig:shadow}
\end{figure}

\begin{cfigure}[!ht]
{\fbox{
	\begin{minipage}{.45\textwidth}
		\footnotesize 
		$\boldsymbol{\circ}$~~ \textbf{Rank} of $\core$: the round number when $\core$ is constructed.
		
		\vspace{7pt}$\boldsymbol{\circ}$~~ \textbf{Central bag} of $\core$ - the bag from \cref{line:root-most}, from which we grew the core.
		
		\vspace{7pt}$\boldsymbol{\circ}$~~ Core $\core_{1}$ is an \textbf{ancestor} of $\core_{2}$ if the center-bag of $\core_{1}$ is an ancestor of $\core_{2}$ (independent of round). 
		
		\vspace{7pt}$\boldsymbol{\circ}$~~ The \textbf{level} of a bag $B$ is the hop distance from the root in $\mathcal{T}$. 
		
		\vspace{7pt}$\boldsymbol{\circ}$~~ The graph \textbf{rooted} at $B$ is the subgraph of $G$ induced by the subtree of $\mathcal{T}$ rooted at $B$.

	\end{minipage}
	\begin{minipage}{.05\textwidth}~
	\end{minipage}
	\begin{minipage}{.45\textwidth}
		\footnotesize

		$\boldsymbol{\circ}$~~ $G_{\mt}[\core]$ denotes to be the graph rooted at the center-bag
		of $\core$ (independent of rank).

		\vspace{5pt}$\boldsymbol{\circ}$~~ $\mathcal{A}^{\prec}(\core)$ - cores that are ancestors of $\core$
		and have rank strictly less than $\core$.
		
		\vspace{5pt}$\boldsymbol{\circ}$~~ \textbf{Shadow domain} of $\core$ is the graph obtained from $G_{\mt}[\core]$ by removing the vertices that are contained in at least one core in $\mathcal{A}^{\prec}(\core)$.
		
		\vspace{5pt}$\boldsymbol{\circ}$~~ $\shadow({\core})$: \textbf{Shadow} of $\core$ is the ball of radius $\Delta$ centered around $\core$ in the shadow domain of $\core$.
		
		\vspace{5pt}$\boldsymbol{\circ}$~~ \textbf{Strict shadow} of $\core$: the shadow minus $\core$: $\shadow({\core})\setminus\core$.
	\end{minipage}
}}
	\caption{\label{fig:PathDistortion}\small Key definitions used during the proof of \Cref{lem:vertexIsInCore}.}
\end{cfigure}

We observe the following properties of ranks and shadow domains.

\begin{lemma} \label{lemma:cores}
	Let $\core_1$ and $\core_2$ be two cores such that $ \rank(\core_1)\geq \rank(\core_2)$ and $\core_1$ is an ancestor of $\core_2$. For each $i\in \{1,2\}$, let $D_i$, $B_i$, and $Y_i$ be the shadow domain, the center bag, and the center of $R_i$, respectively. (Note that $Y_i\subseteq B_i$.) 
	\begin{enumerate}
		\item \label{item:corewithhigherrank} If $V(D_1) \cap (B_2\setminus \centre_2)\not= \emptyset$, then there exists a  core $\core_3$ that is an ancestor of $\core_2$ and descendant of $\core_1$  such that $\rank(\core_3) < \rank(\core_2)$.
		\item \label{item:shadowGraphInclusion} If  $ \rank(\core_1) > \rank(\core_2)$ and there are no cores of rank smaller than $\core_2$ whose center bag is in the path between $B_1$ and $B_2$ (in $\mt$), then $V(G_{\mt}[\core_2])\cap V(D_1) \subseteq V(D_2)$. In other words,  every vertex in the shadow domain of $\core_1$ in $G_{\mt}[\core_2]$ is in the shadow domain of $\core_2$.
	\end{enumerate}
\end{lemma}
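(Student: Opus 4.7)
The plan is to propagate the consequence of \Cref{lem:noncentervertices} and \cref{it:topdown-carving} of \Cref{clm:core-region-prop} along the $B_1$-to-$B_2$ path in $\mathcal{T}$. I will first record a small preliminary observation that streamlines both parts: once a core is created at center bag $B$, \cref{line:mark-covered} turns every previously uncovered vertex of $B$ into a covered vertex, so from that point on $B$ has no uncovered vertices and therefore never again lies in an uncovered component $\mt'$; hence it can never serve as a center bag of a second core. Consequently, \emph{each bag of $\mathcal{T}$ is the center bag of at most one core}, so any two cores of distinct ranks have distinct center bags. This single observation resolves the various edge cases that would otherwise clutter the argument.

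For part~1, pick any $v\in V(D_1)\cap(B_2\setminus Y_2)$. Since $v$ is a non-center vertex of the center bag $B_2$ of $\core_2$, \Cref{lem:noncentervertices} gives $\rank(v)<\rank(\core_2)$, hence there is a core $\core_3\ni v$ with $\rank(\core_3)<\rank(\core_2)\le\rank(\core_1)$. Let $B_3$ be its center bag. By \cref{it:topdown-carving} of \Cref{clm:core-region-prop}, $B_3$ is an ancestor of the bag containing $v$, namely $B_2$. If $B_3$ were a strict ancestor of $B_1$, then $\core_3\in\mathcal{A}^{\prec}(\core_1)$ and $v\in\core_3$ would contradict $v\in V(D_1)$; so $B_3$ lies in the subtree of $\mathcal{T}$ rooted at $B_1$. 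The preliminary observation, combined with the strict rank inequalities $\rank(\core_3)<\rank(\core_2)<\rank(\core_1)$ when $\rank(\core_1) > \rank(\core_2)$ (and a similar check when $\rank(\core_1)=\rank(\core_2)$), rules out $B_3=B_1$ and $B_3=B_2$, so $B_3$ is a strict descendant of $B_1$ and a strict ancestor of $B_2$, giving the required $\core_3$.

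For part~2, I argue by contradiction: suppose some $v\in V(G_{\mt}[\core_2])\cap V(D_1)$ fails to lie in $V(D_2)$. Then $v$ lies in some $\core'\in\mathcal{A}^{\prec}(\core_2)$, whose center bag $B'$ is an ancestor of $B_2$ and satisfies $\rank(\core')<\rank(\core_2)<\rank(\core_1)$. Because both $B'$ and $B_1$ lie on the root-to-$B_2$ path of $\mathcal{T}$, they are comparable, and the preliminary observation forbids $B'=B_1$. If $B'$ is a strict ancestor of $B_1$, then $\core'\in\mathcal{A}^{\prec}(\core_1)$, contradicting $v\in V(D_1)$. Otherwise $B_1$ is a strict ancestor of $B'$, and the preliminary observation also gives $B'\ne B_2$, so $B'$ lies strictly between $B_1$ and $B_2$ on the $\mathcal{T}$-path---contradicting the hypothesis that no core of rank smaller than $\core_2$ has its center bag there. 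The only non-routine point throughout is tracking which bag equalities are possible, and the preliminary observation disposes of all of them uniformly via the rank inequalities.
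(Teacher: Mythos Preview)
Your proof is correct and follows essentially the same route as the paper's: for part~1 you use \Cref{lem:noncentervertices} to find a lower-rank core $\core_3$ containing the witness vertex, then use \cref{it:topdown-carving} to place its center bag on the $B_1$--$B_2$ path, and use membership in $V(D_1)$ to rule out $B_3$ being a strict ancestor of $B_1$; for part~2 you run the contrapositive/contradiction argument that any core in $\mathcal{A}^{\prec}(\core_2)$ witnessing $v\notin V(D_2)$ must, by the hypothesis, have its center bag above $B_1$ and hence lie in $\mathcal{A}^{\prec}(\core_1)$. The paper's proof is the same in substance, just terser: it does not separate out your ``preliminary observation'' (that each bag is the center bag of at most one core) and does not bother to argue strictness of the ancestry/descendancy of $\core_3$, since the statement only requires the weak versions. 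Your extra edge-case analysis is correct but unnecessary for the lemma as stated.
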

\begin{proof} We first show \cref{item:corewithhigherrank}. Let $x$ be a vertex in $V(D_1)\cap (B_2\setminus \centre_2)$. By \Cref{lem:noncentervertices}, $\rank(x) < \rank(\core_2)$, implying that $x$ is contained in a core $\core_3$ such that $\rank(\core_3) < \rank(\core_2)$.  Furthermore, $\core_3$ is a descendant of $\core_1$ as otherwise, by the definition of shadow domain, $\core_3\cap V(D_1) = \emptyset$ and hence $x\not\in V(D_1)$, a contradiction. Also, $\core_3$ is an ancestor of $\core_2$ as the center-bag of $\core_2$ contains a vertex of $\core_3$, which is $x$. 
	
	We show \cref{item:shadowGraphInclusion} by contrapositive. Let $x$ be a vertex in $G_{\mt}[\core_2]$ such that  $x \not\in V(D_2)$. We show that $x \not\in V(D_1)$. Since $x$ is not in $V(D_2)$,  there exists an ancestor core  $\core_3$ of $\core_2$ such that $x\in \core_3$ and $\rank(\core_3) < \rank(\core_2)$. By the assumption in \cref{item:shadowGraphInclusion}, $\core_3$ is also an ancestor of $\core_1$ and hence $R_3\in \mathcal{A}^{\prec}(R_1)$. Thus,  $x \not \in V(D_1)$ as claimed. 
\end{proof}

Equipped with the lemmas above, we are finally ready to prove the packing property with $\alpha=2$.

\begin{lemma}
	\label{lem:packing} Let $(N,T,\varphi)$ be the tree-ordered net defined in \Cref{def:vertexOrdering}.
	For every vertex $v \in V$,  $\left|N_{v\preceq}^{2\Delta}\right|\le \tp^4 + \tp^2$.
\end{lemma}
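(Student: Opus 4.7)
The plan is to first reduce the question from counting net points to counting \emph{relevant cores}, and then to handle these cores by a rank-indexed chain argument along the ancestor path in $\mathcal{T}$. By \Cref{obs:smallest-rank}, every net point $x$ is the center of a unique core $R_x$ --- the smallest-rank core containing $x$ --- and since the centers of a core form a subset of one bag, each core contributes at most $\tp$ net points. Thus it suffices to show that the number of cores $R$ for which some center of $R$ lies in $N_{v\preceq}^{2\Delta}$ (call these \emph{relevant}) is at most $\tp^{3}+\tp$. Note that if $x$ is a center of a relevant core $R$, then $Q(x)$ is exactly the center bag of $R$, and the condition $v\preceq x$ forces this bag onto the root-to-$Q(v)$ ancestor chain in $\mathcal{T}$. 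Hence every relevant core sits on a single chain, which is the structural feature the rest of the proof exploits.

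I would split the relevant cores into two families. Family $A$ consists of relevant cores $R$ with $v\in R$; \Cref{cor:core-bag-intersect} immediately gives $|A|\le \tp$. The real work lies in bounding the complementary family $B$ by $\tp^{3}$. For any $R\in B$ with witness center $x$, the shortest path from $v$ to $x$ in $G[V_{\preceq x}]$ has length at most $2\Delta$; using \Cref{lemma:cores}\cref{item:shadowGraphInclusion} I would argue that this path stays inside the shadow domain of $R$, so that $v\in\shadow(R)$. This places every core in $B$ into a well-behaved shadow structure that interacts cleanly with \Cref{lemma:cores}\cref{item:corewithhigherrank}.

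Next I would construct, greedily bottom-up along the ancestor chain, a sequence $R_{1}^{*},R_{2}^{*},\ldots,R_{\ell}^{*}$ of cores in $B$ of strictly increasing rank with the property that every core in $B$ is a core-ancestor of some $R_{j}^{*}$. The construction picks $R_{1}^{*}\in B$ with the deepest center bag on the chain and, inductively, takes $R_{j}^{*}$ to be a core in $B$ of smallest rank strictly above $\rank(R_{j-1}^{*})$ whose center bag is deepest among those not already dominated by the previously chosen cores. Since there are only $\tp$ possible rank values (\Cref{lem:numberiterations}), $\ell\le \tp$, and every $R\in B$ is by construction a core-ancestor of at least one $R_{j}^{*}$.

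The main obstacle --- and the technical heart of the proof --- is bounding, for each fixed $j$, the number of cores $R\in B$ that are core-ancestors of $R_{j}^{*}$ but not of any $R_{k}^{*}$ with $k<j$, by $O(\tp^{2})$. For such an $R$, I would apply \Cref{lemma:cores}\cref{item:corewithhigherrank} to the pair $(R,R_{j}^{*})$: if the shadow domain of $R$ meets the non-center part of the center bag of $R_{j}^{*}$, then there is an intermediate core of rank strictly smaller than $\rank(R_{j}^{*})$, which by the greedy choice is already a core-ancestor of some earlier $R_{k}^{*}$ and hence forbidden. This constrains the center bags of the ``new'' ancestors to a window of the chain of length at most $\tp$ (at most one per rank, using the strict monotonicity of the chosen sequence combined with \Cref{lem:numberiterations}). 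Since at most $\tp$ cores share any single center bag (\Cref{lemma::coresDisjointRank} together with \Cref{lem:numberiterations}), this yields the $O(\tp^{2})$ bound per $j$. Summing over $j\le \tp$ gives $|B|\le \tp^{3}$, and combining with $|A|\le \tp$ and the initial reduction produces $|N_{v\preceq}^{2\Delta}|\le \tp\cdot(\tp^{3}+\tp)=\tp^{4}+\tp^{2}$, as claimed.
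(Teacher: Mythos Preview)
Your reduction to counting relevant cores, the split into the family containing $v$ (bounded by $\tp$ via \Cref{cor:core-bag-intersect}) and the complementary family $B$, and the overall idea of a rank-indexed decomposition of $B$ into $\le \tp$ layers of size $\le \tp^2$ each, all match the paper's structure. However, the way you propose to build the layers and to bound each layer has real gaps.

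\textbf{The sequence construction is wrong as stated.} You choose $R_1^*$ to be the core in $B$ with the \emph{deepest} center bag on the chain. Since all relevant cores lie on a single root-to-$B_v$ chain, every core in $B$ is then an ancestor of $R_1^*$, so your covering property ``every $R\in B$ is a core-ancestor of some $R_j^*$'' is already satisfied with $\ell=1$; the rest of the sequence never comes into play, and you are left having to bound all of $B$ by $O(\tp^2)$ in one shot. The paper instead selects $\core_j^*$ to be a core of \emph{minimum rank} in the remaining set $\barC_{j-1}$, and then proves (nontrivially, \Cref{lem:oner_j}) that this minimum-rank core is \emph{unique}. Only with that choice does peeling off all ancestors of $\core_j^*$ and recursing on what remains make progress: the remaining cores are strict descendants of $\core_j^*$ on the chain and necessarily of strictly higher rank, which is what drives $\ell\le \tp$. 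Your ``smallest rank strictly above $\rank(R_{j-1}^*)$'' clause also presupposes $\rank(R)\ge \rank(R_j^*)$ for every $R$ you will analyse at step $j$, but nothing in your construction enforces this; in the paper it follows directly from $r_j$ being the minimum rank in $\barC_{j-1}$.

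\textbf{The per-layer bound is not the ``window'' argument you sketch.} The paper does \emph{not} confine the center bags of $\C_j$ to a short window on the chain. Instead it proves (\Cref{lem:intersectQ_j}) that every core $R\in \C_j$ must \emph{intersect the center of} $\core_j^*$, i.e.\ the set of vertices in the center bag of $\core_j^*$ that were uncovered when $\core_j^*$ was formed; the bound $|\C_j|\le \tp^2$ then comes from \Cref{cor:core-bag-intersect}. The argument is a path-tracing one: the $\le 2\Delta$ path $Z$ from the center of $R$ to $v$ in the shadow domain of $R$ must cross the center bag of $\core_j^*$. If it crosses at a center vertex of $\core_j^*$ but $R$ is disjoint from that center, then $Z$ has length $>2\Delta$ (it leaves $R$, enters $\core_j^*$, and leaves $\core_j^*$ again since $v$ is in the strict shadow). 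If it crosses at a non-center vertex, \Cref{lemma:cores}\cref{item:corewithhigherrank} yields an intermediate core of rank $<r_j$, which one shows is itself in $\barC_{j-1}$ --- contradicting minimality of $r_j$. Your appeal to ``at most $\tp$ cores share any single center bag'' is also off: distinct cores have distinct center bags, so that number is $1$, not $\tp$; what you need is the $\tp^2$ bound on cores intersecting a single bag.

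Finally, your justification that $v\in\shadow(R)$ for $R\in B$ via \Cref{lemma:cores}\cref{item:shadowGraphInclusion} is not right: that item compares the shadow domains of two nested cores and does not by itself place the witnessing $2\Delta$-path inside the shadow domain of $R$. The paper invokes this directly from the definition of $N_{v\preceq}^{2\Delta}$ and the identification $Q(x)=B_R$ for the witness center $x$; you should argue that $G[V_{\preceq x}]$ sits inside the shadow domain of $R$ from first principles rather than via that lemma.
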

\begin{proof}
	Consider a vertex $v \in V$. Recall that the net $N$ is the set of centers of all cores in $\coreSet$.  Let $\C_{\texttt{all}}$ be the set of cores whose center that have a non-empty intersection with the vertices in $N_{v\preceq}^{2\Delta}$. Instead of bounding $|N_{v\preceq}^{2\Delta}|$ directly, we bound $|\C_{\texttt{all}}|$. Note that every vertex in $N_{v\preceq}^{2\Delta}$ is in the center of one of the cores in $\C_{\texttt{all}}$. 
	
	Let $\C^{\Delta} \subseteq \C_{\texttt{all}}$ be the cores that contain $v$ and let $\C := \C_{\texttt{all}} \setminus \C^{\Delta}$.
	By \Cref{cor:core-bag-intersect}, $|\C^{\Delta}| \leq \tp$. The bulk of our proof below is to show that $|\C|\le \tp^3$. Since each core has at most $\tp$ centers, we have $\left|N_{v\preceq}^{2\Delta}\right|\le \tp\cdot |\C_{\texttt{all}}| \leq \tp^4 + \tp^2$, as claimed.

	We now focus on proving that $|\C| \le \tp^3$. Let $B_v$ be the bag containing $v$.  Let $\mP$ be the path in $\mathcal{T}$ from the root bag to $B_v$. We claim that every core in $\C$ has a center bag on $\mP$. This is because every core $\core$ in $\C$ contains a vertex, say $x$, in $N_{v\preceq}^{2\Delta}$, whose bag is an ancestor of  $B_v$. By construction in \cref{line:add-core}, every vertex in $\core$ is in a descendant of the center bag of $\core$. This means $B_v$ is a descendant of the center bag of $\core$, implying the claim.

	The rest of our proof goes as follows:
	\begin{itemize}
		\item Let $r_1$ be the lowest rank among all cores in $\C$. We will show in \Cref{lem:oner_j} below that there is only one core in $\C$ having rank $r_1$.  Let this unique core in $\C$ with rank $r_1$ be $\core^*_1$. Let $\C_1$ be the set of all cores in $\C$ that are ancestors of $\core^*_1$, including $\core^*_1$. Next, we show in \Cref{lem:intersectQ_j} that each core in $\C_1$ intersects the center of $\core^*_1$. Since the center is contained in a bag, \Cref{cor:core-bag-intersect} implies that $|\C_1|\le \tp^2$.
		\item  Next, let $\bar{\C}_1=\C\setminus \C_1$. If $\barC_1$ is non-empty, we define $r_2$ be the lowest rank among cores in $\bar{\C_1}$. Note that $r_2>r_1$. \Cref{lem:oner_j} below again implies that there is only one core in $\barC_1$ having rank $r_2$. Let this unique core in $\barC_1$ with rank $r_2$ be $\core^*_2$. Let $\C_2$ be the set of all ancestor cores $\core^*_2$ in $\barC_1$ including $\core^*_2$.  Then \Cref{lem:intersectQ_j} implies that each core in $\C_2$ intersects the center of $\core^*_2$. Since the center is contained in a bag, \Cref{cor:core-bag-intersect} gives that $|\C_2|\le \tp^2$.
		\item Inductively, we define the sequence of sets of cores $ \C_1,\C_2,\C_3,\dots ,\C_{\ell}$  and $r_1<r_2<\dots <r_{\ell}$ until $\barC_{\ell}=\barC_{\ell-1}\setminus \C_{\ell}$ is empty.  Here, $r_j$ is defined as the lowest rank among cores in $\barC_{j-1}$, and $\barC_{0} = \C$. For each $j\in [\ell]$, \Cref{lem:oner_j} implies that there exist a unique core $\core^*_j$ in $\barC_{j-1}$ with rank $r_j$. Let $\C_j$ be the set of all cores in $\barC_j$ that are ancestors of $\core^*_j$, including $\core^*_j$.  \Cref{lem:intersectQ_j} then implies that all cores in $\C_j$ intersect the center of $\core^*_j$ in. Thus, $|\C_j|\le \tp^2$ for each $j\in [\ell]$ by \Cref{cor:core-bag-intersect}. 
		By \Cref{lem:numberiterations}, $r_1<r_2<\dots <r_{\ell}\le \tp$, implying that $|\C|\le \tp^3$ as desired. 
	\end{itemize}

	For the rest of the proof, we prove two claims.

	\begin{figure}[ht]
		\centering
		\includegraphics[width=0.8\linewidth]{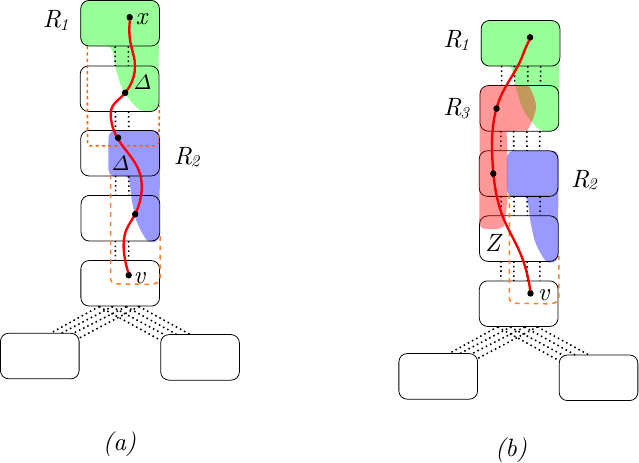}
		\caption{Illustration for the proof of \Cref{lem:oner_j}.}
		\label{fig:cores-rank}
	\end{figure}
	
	\begin{claim}\label{lem:oner_j}
		There is only one core in $\barC_{j-1}$ having rank $r_j$ for each $j\in [\ell]$.
	\end{claim}
	\begin{subproof}
		Suppose otherwise. Then there are two cores $\core_1$ and $\core_2$ in $\barC_{j-1}$ having rank $r_j$. Assume w.l.o.g. that $\core_1$ is an ancestor of $\core_2$.  By \Cref{lemma::coresDisjointRank},  $\core_1$ and $\core_2$ are disjoint as they have the same rank. Also, since $\core_1,\core_2\in \C$, the center-bags of $\core_1$ and $\core_2$ lie on the path $\mP$.
		
		Since  $\core_1$ contains a point of  $N_{v\preceq}^{2\Delta}$, $v$ is in the shadow of $\core_1$. Thus, there is a path $Z$ of length at most $2 \Delta$  that goes from the center of $\core_1$ to $v$ in the shadow domain of $\core_1$. This path $Z$ has to intersect the center bag of $\core_2$ to get to $v$.

		Suppose this intersection occurs at a vertex in the center of $\core_2$; see \Cref{fig:cores-rank}(a). Then the path $Z$ goes from the center of $\core_1$ to outside $\core_1$ and then into the center of $\core_2$ and then to outside $\core_2$. It has to go outside of $\core_2$ as $v$ is in the strict shadow of $\core_2$.
		Also, it has to go outside of $\core_1$ before entering $\core_2$ as $\core_1$ and $\core_2$ are vertex disjoint (however, it is possible that there is an edge from $\core_1$ to $\core_2$).
		This means $Z$ has a length of more than $\Delta+\Delta=2\Delta$, a contradiction.
		
		Now, suppose this intersection occurs at a vertex in the center-bag of $\core_2$ that is not in the center of $\core_2$; see \Cref{fig:cores-rank}(b).
		Then by \cref{item:corewithhigherrank} of \Cref{lemma:cores} it follows that there is at least one core that is ancestor of $\core_2$ and descendant of $\core_1$, and having rank lower than $r_j$. Let $\core_3$ be the one among such cores whose center-bag has the smallest level. 
		
		We claim that the path $Z$ intersects the center of $\core_3$. Suppose otherwise. However, the path has to intersect the center-bag of $\core_3$ to get to $v$.
		Then, by \cref{item:corewithhigherrank} (of \Cref{lemma:cores}) it follows that there is a core that is ancestor of $\core_3$ and descendant of $\core_1$, and having rank lower than $r_j$, a contradiction to the selection of $\core_3$.
		
		By definition of shadow domain, $\core_3$ is disjoint from the shadow domain of $\core_2$. Hence $v$ is not in $\core_3$. However, $v$ is contained in the shadow of $\core_3$ as the part of the path $Z$ from the center of $\core_3$ to $v$ is contained in the shadow domain of $\core_3$ and has length at most $2\Delta$. Thus, $v$ is contained in the strict shadow of $\core_3$. Since $\core_3$ is a descendant of $\core_1$, we have that $\core_3\notin \C_1\cup\C_2\cup\dots\cup\C_{j-1}$ and hence $\core_3\in \barC_{j-1}$. Thus, there is a core in $\barC_{j-1}$ that has rank lower than $r_j$, a contradiction to the choice of $r_j$.
	\end{subproof}
	
	\begin{claim}
		Each core in $\C_j$ intersects the center of $\core^*_j$ for each $j\in [\ell]$. 
		\label{lem:intersectQ_j}
	\end{claim}
	\begin{subproof}
		Suppose this is not true, and let $j$ be the minimum index for which the claim does not hold. Then there exists a core $\core \in \C_j$ that is disjoint from the center of $\core^*_j$. 
		Note that $\core^*_j$ is a descendant of $\core$ by definition of $\C_j$ and the rank of $\core^*_j$ is strictly less than the rank of $\core$ by the definition of $\core^*_j$.
		Since $v$ is in the shadow of $\core$ there is a path $Z$ of length at most $2 \Delta$ that goes from the center of $\core$ to $v$ in the shadow domain of $\core$.
		This path $Z$ has to intersect the center-bag of $\core^*_j$ to get to $v$.
		
		Suppose this intersection occurs at a vertex in the center of $\core^*_j$. Then the path $Z$ goes from the center of $\core$ outside $\core$ and then into the center of $\core^*_j$ and then to outside of $\core^*_j$. It has to go outside of $\core^*_j$ as $v$ is in the strict shadow of $\core^*_j$. Also, it has to go outside of $\core$ before entering $\core^*_j$ as $\core$ and the center of $\core^*_j$ are vertex disjoint (this is what we assumed for the sake of contradiction). 
		This means $Z$ has a length of more than $\Delta+\Delta=2\Delta$, a contradiction.
		
		Now, suppose this intersection occurs at a vertex in the center-bag of $\core^*_j$ that is not in the center of $\core^*_j$.
		Then, by \cref{item:corewithhigherrank} (of \Cref{lemma:cores}), there exist at least one core that is an ancestor of $\core^*_j$ and a descendant of $\core$ and having rank lower than $r_j$.
		Let $\core'$ be the one among such cores whose center-bag has the smallest level.
		The path $Z$ has to intersect the center of $\core'$ as otherwise
		there is a core that is an ancestor of $\core'$ and a descendant of $\core$ and having rank lower than $r_j$, contradicting the selection of $\core'$.
		
		Note that $v$ is not in $\core'$ as $\core'$ is disjoint from the shadow domain of $\core^*_j$ by definition of shadow domain.
		However, $v$ is contained in the shadow of $\core'$ as the part of the path $Z$ from center of $\core'$ to $v$ is contained in the shadow domain of $\core'$ (by \cref{item:shadowGraphInclusion} of \Cref{lemma:cores}) and has length at most $2\Delta$.
		Thus, $v$ is contained in the strict shadow of $\core'$.
		Since $j$ is the minimum index for which the statement of the claim does not hold, if $\core'\in \C_1 \cup \ldots \C_{j-1}$, then $\core \not \in \C_j$. This implies that $\core'\in \C_j$.
		Thus we have a core in $\C_j$ having rank strictly smaller than $r_j$, a contradiction.
	\end{subproof}
	
	The discussion in the proof outline with \Cref{lem:oner_j} and \Cref{lem:intersectQ_j} proves the lemma. 
\end{proof}

\section{Applications} \label{sec:apps}

In this section, we give a more detailed exposition of the applications of \Cref{thm:paddedTW} and \Cref{thm:CoverTW} mentioned in \Cref{sec:intro}. The list of applications here is not meant to be exhaustive, and we believe that our result will find further applications. 

\subsection{Flow Sparsifier}
Given an edge-capacitated graph $G = (V,E,c)$ and a set $K\subseteq V$ of terminals, a $K$-flow is a flow where all the endpoints are terminals. A flow-sparsifier with quality $\rho\ge1$ is another capacitated graph $H = (K,E_H, c_H)$ such that (a) any feasible $K$-flow in $G$ can be feasibly routed in $H$, and (b) any feasible $K$-flow in $H$ can be routed in $G$ with congestion $\rho$ (see \cite{EGKRTT14} for formal definitions).

Englert \etal \cite{EGKRTT14} showed that given a graph $G$ which admits a $(\beta,\frac1\beta)$-padded decomposition scheme, then for any subset $K$, one can efficiently compute a flow-sparsifier with
quality $O(\beta)$. Using their result, we obtain the following corollary of \Cref{thm:paddedTW}. 

\begin{corollary}\label{cor:flowSpar}
	Given an edge-capacitated graph $G = (V,E,c)$ with treewidth $\tw$, and a subset of terminals $K\subseteq V$, one can efficiently compute a flow sparsifier with quality $O(\log\tw)$. 
\end{corollary}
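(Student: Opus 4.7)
The plan is to obtain the corollary as a direct composition of our main padded decomposition result (\Cref{thm:paddedTW}) with the flow sparsifier construction of Englert, Gupta, Krauthgamer, R\"acke, Talgam-Cohen and Talwar~\cite{EGKRTT14}, with one small parameter-massaging step in between.

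First, I would apply \Cref{thm:paddedTW} to the underlying (unweighted or uniformly weighted) shortest path metric of $G$ to obtain a $(\beta,\delta)$-padded decomposition scheme with $\beta = O(\log \tw)$ and $\delta = \Omega(1)$, sampleable in polynomial time. Note that the padded decomposition is defined with respect to distances in $G$ and does not depend on the edge capacities $c$; the capacities only enter when we later route flows.

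Second, I need to match the precondition of the Englert \emph{et al.}~framework, which requests a $(\beta',1/\beta')$-padded decomposition scheme (i.e., $\delta' = 1/\beta'$). This is purely a syntactic adjustment: any $(\beta,\delta,\Delta)$-padded decomposition is trivially a $(\hat\beta,\hat\delta,\Delta)$-padded decomposition whenever $\hat\beta \ge \beta$ and $\hat\delta \le \delta$, since $e^{-\hat\beta\gamma}\le e^{-\beta\gamma}$ for $\gamma\in[0,\hat\delta]\subseteq[0,\delta]$. Setting $\hat\beta = \max\{\beta,\,1/\delta\}$, we have $\hat\beta = O(\log\tw) + O(1) = O(\log\tw)$ for $\tw \ge 2$ (the case $\tw = 1$ is a tree and is trivial). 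Then $1/\hat\beta \le \delta$, so the same scheme is a $(\hat\beta,1/\hat\beta)$-padded decomposition scheme with padding parameter $O(\log\tw)$.

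Third, I invoke the result of~\cite{EGKRTT14}: given an edge-capacitated graph admitting a $(\hat\beta,1/\hat\beta)$-padded decomposition scheme and any terminal set $K\subseteq V$, their construction efficiently produces a flow-sparsifier on $K$ of quality $O(\hat\beta) = O(\log\tw)$. Since both the sampling guaranteed by \Cref{thm:paddedTW} and the subsequent construction in~\cite{EGKRTT14} run in polynomial time, the overall procedure is polynomial, which yields the claim.

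There is no real obstacle here; the only point that requires care is the parameter-matching step, and even that is essentially cosmetic because $1/\delta$ is an absolute constant that is swallowed into the $O(\log\tw)$ bound. The entire content of the corollary is really the nontrivial padding parameter delivered by \Cref{thm:paddedTW}; once that is in hand, \cite{EGKRTT14} is applied as a black box.
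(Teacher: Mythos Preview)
Your proposal is correct and follows the same approach as the paper: invoke \Cref{thm:paddedTW} to get a $(O(\log\tw),\Omega(1))$-padded decomposition scheme and then apply the black-box result of \cite{EGKRTT14}. The paper does not even spell out the parameter-matching step you included, so your write-up is in fact slightly more careful than the paper's own derivation.
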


The best previously known result had quality $O(\tw)$ approximation \cite{EGKRTT14,AGGNT19}. Thus, our result improves the dependency on $\tw$ exponentially.  For further reading on flow sparsifiers, see \cite{Moitra09,MT10,MM10,CLLM10,Chuzhoy12,AGK14}.

We note that, though the graph $G$ has treewidth $\tw$, the flow-sparsifier $H$ in \Cref{cor:flowSpar} can have arbitrarily large treewidth. Having low treewidth is a very desirable property of a graph, and naturally, we would like to have a sparsifier of small treewidth. A flow-sparsifier  $H = (K,E_H,  c_H)$ of $G$ is called \emph{minor-based} if $H$ is a minor of $G$. That is, $H$ can be obtained from $G$ by deleting/contracting edges, and deleting vertices. Englert \etal \cite{EGKRTT14} showed that given a graph $G$ which admits a $(\beta,\frac1\beta)$-padded decomposition scheme, then for any subset $K$, one can efficiently compute a minor-based flow-sparsifier with
quality $O(\beta\log\beta)$. By \Cref{thm:paddedTW}, we obtain:

\begin{corollary}\label{cor:flowSparMinorBased}
	Given an edge-capacitated graph $G = (V,E,c)$ with treewidth $\tw$, and a subset of terminals $K\subseteq V$, one can efficiently compute a minor-based flow-sparsifier  $H = (K,E_H, c_H)$ with quality $O(\log\tw\log\log\tw)$. In particular, $H$ also has treewidth at most $\tw$.
\end{corollary}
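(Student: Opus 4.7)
The proof is essentially a direct combination of \Cref{thm:paddedTW} with the result of Englert \etal\ \cite{EGKRTT14} cited immediately above the corollary statement. The plan is as follows.

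First, I would invoke \Cref{thm:paddedTW} on the input graph $G$ to obtain an $(O(\log \tw), \Omega(1))$-padded decomposition scheme for $G$ that can be sampled in polynomial time. Since the padding parameter satisfies $\beta = O(\log \tw)$ and the padding radius parameter $\delta = \Omega(1)$, after rescaling $\Delta$ by a constant we may assume $\delta \geq 1/\beta$, so that $G$ admits a $(\beta, 1/\beta)$-padded decomposition scheme with $\beta = O(\log \tw)$ (this is the form required by the Englert \etal\ framework).

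Next, I would plug this padded decomposition scheme into the Englert \etal\ construction to produce, for the given terminal set $K \subseteq V$, a minor-based flow sparsifier $H = (K, E_H, c_H)$ of quality $O(\beta \log \beta)$ in polynomial time. Substituting $\beta = O(\log \tw)$ gives quality
\[
O(\beta \log \beta) \;=\; O(\log \tw \cdot \log \log \tw),
\]
as claimed.

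Finally, for the treewidth bound on $H$, I would invoke the classical fact that treewidth is monotone under the minor relation: since $H$ is (by the ``minor-based'' guarantee) obtained from $G$ by edge deletions, edge contractions, and isolated vertex deletions, every tree decomposition of $G$ yields a tree decomposition of $H$ of no larger width. Hence $\tw(H) \leq \tw(G) = \tw$. There is no real obstacle here: the entire statement is a black-box combination, and the only thing to check is that the parameters of \Cref{thm:paddedTW} meet the hypothesis of the Englert \etal\ black box, which is immediate after the constant-factor rescaling of $\Delta$ mentioned above.
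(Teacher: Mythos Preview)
Your proposal is correct and matches the paper's approach: the paper states the corollary as an immediate consequence of \Cref{thm:paddedTW} combined with the Englert \etal\ \cite{EGKRTT14} black box, without giving any further argument. Your observation that $\delta=\Omega(1)\ge 1/\beta$ (so the hypothesis of the black box is met) and the minor-monotonicity of treewidth are exactly the two small checks needed to flesh this out.
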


\subsection{Sparse Partition}

Given a metric space $(X,d_X)$,
a \emph{$(\alpha, \tau,\Delta)$-sparse partition} is a partition $\cC$ of $X$ such that:
\begin{itemize}
	\item \textbf{Low Diameter:} $\forall C\in\cC$, the set $X$ has diameter at most $\Delta$;
	\item \textbf{Sparsity:} $\forall x \in X$, the ball $B_X(x, \frac{\Delta}{\alpha})$ intersects at most $\tau$ clusters from $\cC$.
\end{itemize}
We say that the metric $(X,d_X)$ admits a $(\alpha, \tau)$-sparse partition scheme if for every $\Delta>0$, $X$ admits a $(\alpha, \tau,\Delta)$-sparse partition.

Jia \etal \cite{JLNRS05} implicitly proved (see \cite{Fil20} for an explicit proof) that if a space admits a $(\beta,s)$-sparse cover scheme, then it admits a $(\beta,s)$-sparse partition scheme. Therefore, by \Cref{thm:CoverTW} we obtain:

\begin{corollary}\label{cor:sparsePartition}
	Every graph $G$ with treewidth $\tw$ admits a $\left(O(1),\poly(\tw)\right)$-sparse partition scheme. 
\end{corollary}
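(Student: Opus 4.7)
The plan is to obtain this corollary as an immediate consequence of two prior results: our own \Cref{thm:CoverTW} and the sparse-cover-to-sparse-partition reduction of Jia et al.\ \cite{JLNRS05} (as made explicit in \cite{Fil20}). Since the reduction preserves the cover/sparsity parameters, nothing quantitative needs to be reworked; the argument amounts to invoking the right black boxes and tracking the parameters.

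Concretely, I would proceed as follows. First, given a weighted graph $G$ of treewidth $\tw$ and a diameter parameter $\Delta>0$, apply \Cref{thm:CoverTW} to obtain a $(6,\poly(\tw),\Delta)$-sparse cover $\cC$ of $G$. Second, feed $\cC$ into the Jia et al.\ reduction, whose guarantee is: if a metric admits a $(\beta,s,\Delta)$-sparse cover, then it admits a $(\beta,s,\Delta)$-sparse partition. This produces a partition $\cP$ of $V(G)$ that is $\Delta$-bounded and has the property that for every $x \in V(G)$, the ball $\ball_G(x,\Delta/6)$ intersects at most $\poly(\tw)$ clusters of $\cP$. Since this is valid for every $\Delta>0$, $G$ admits a $(6,\poly(\tw))$-sparse partition scheme, which is of the form $(O(1),\poly(\tw))$ as claimed.

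There is essentially no obstacle to overcome here; the only care needed is notational, namely to verify that the parameter conventions match between the definition of sparse cover used in \Cref{thm:CoverTW} and the definition of sparse partition in the statement of this corollary (in both, $\beta=6$ is the ratio between the diameter bound and the radius of the ball whose number of intersected clusters is controlled). Thus the proof collapses to a one-line invocation: combine \Cref{thm:CoverTW} with \cite{JLNRS05}.
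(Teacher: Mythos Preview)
Your proposal is correct and matches the paper's approach exactly: the paper derives the corollary in one line by combining \Cref{thm:CoverTW} with the Jia et al.\ \cite{JLNRS05} (explicit in \cite{Fil20}) reduction from sparse covers to sparse partitions, precisely as you describe.
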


Previously, it was only known that graphs with treewidth $\tw$ admit  $\left(O(\tw^2),2^{\tw}\right)$-sparse partition scheme \cite{FT03,Fil20}.  Thus, \Cref{cor:sparsePartition} implies an exponential improvement in the dependency on $\tw$.  For further reading on sparse partitions, see \cite{JLNRS05,BDRRS12,Fil20,CJFKVY23,Filtser24}.

\subsection{Universal Steiner Tree and Universal TSP}

We consider the problem of designing a network that allows a server to broadcast a message to a single set of clients. If sending a message over a link incurs some cost, then designing the best broadcast network is classically modeled as the Steiner tree problem \cite{HR92}. However, if the server has to solve this problem repeatedly with different client sets, it is desirable to construct a single network that will optimize the cost of the broadcast for every possible subset of clients. This setting motivates the \emph{Universal Steiner Tree} (\UST) problem.

Given a metric space $(X,d_X)$ and root  $r \in X$, a $\rho$-approximate \UST is a weighted tree $T$ over $X$ such that for every $S \subseteq X$ containing $r$, we have
\begin{align*}
	w(T\{S\}) \leq \rho \cdot \OPT_S
\end{align*}
where $T\{S\} \subseteq T$ is the minimal subtree of $T$ connecting $S$, and $\OPT_S$ is the minimum weight Steiner tree connecting $S$ in $X$.

A closely related problem to \UST is the \emph{Universal Traveling Salesman Problem} (\UTSP).
Consider a postman providing post service for a set $X$ of clients with $n$ different locations (with distance measure $d_X$). Each morning, the postman receives a subset $S\subset X$ of the required deliveries for the day. In order to minimize the total tour length, one solution may be to compute each morning an (approximation of an) Optimal \TSP tour for the set $S$. An alternative solution will be to compute a \emph{Universal \TSP} (\UTSP) tour $R$ containing all the points $X$. Given a subset $S$, $R\{S\}$ is the tour visiting all the points in $S$ w.r.t. the order induced by $R$.
Given a tour $T$, denote its length by $|T|$. The \emph{stretch} of $R$ is the maximum ratio among all subsets $S\subseteq X$ between the length of $R\{S\}$ and the length of the optimal \TSP tour on $S$, $\max_{S\subseteq X}\frac{|R\{S\}|}{|\mbox{Opt}(S)|}$.

Jia \etal \cite{JLNRS05} showed that for every $n$-point metric space that admits $(\sigma,\tau)$-sparse partition scheme, there is a polynomial time algorithm that given a root $\rt\in V$ computes a  \UST with stretch $O(\tau\sigma^2\log_\tau n)$.
In addition, Jia \etal \cite{JLNRS05} also showed that such a metric admit a \UTSP with stretch  $O(\tau\sigma^2\log_\tau n)$.
Using our \Cref{cor:sparsePartition}, we conclude:
\begin{corollary}\label{cor:UST}
	Consider an $n$-point graph with treewidth $\tw$. Then its shortest path metric admits a solution to both universal Steiner tree and universal TSP with stretch $\poly(\tw)\cdot\log n$. 
\end{corollary}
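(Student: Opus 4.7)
The plan is to invoke the two ingredients already assembled in the section: the sparse partition scheme for bounded-treewidth graphs provided by \Cref{cor:sparsePartition}, and the black-box reduction of Jia \etal\ \cite{JLNRS05} from sparse partition schemes to \UST and \UTSP. Thus, the proof is essentially a parameter computation, with no further graph-theoretic work.

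First, I would apply \Cref{cor:sparsePartition} to the shortest-path metric of $G$ to obtain a $(\sigma,\tau)$-sparse partition scheme with $\sigma=O(1)$ and $\tau=\poly(\tw)$. Next, I would feed these parameters into the Jia \etal\ bound: for any root $\rt \in V$, their algorithm produces (in polynomial time) a \UST with stretch
\[
O\!\left(\tau\,\sigma^{2}\,\log_{\tau} n\right),
\]
and likewise a \UTSP tour with the same stretch guarantee. Substituting $\sigma=O(1)$ and $\tau=\poly(\tw)$, the logarithm base becomes $\log \tau = \Theta(\log \tw)$, so
\[
O\!\left(\tau\,\sigma^{2}\,\log_{\tau} n\right)=O\!\left(\poly(\tw)\cdot\frac{\log n}{\log \tw}\right)=\poly(\tw)\cdot \log n,
\]
which is exactly the bound claimed in the corollary. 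The same computation handles \UTSP.

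There is no genuine obstacle in this proof: both the sparse partition scheme and the reduction are already stated as usable black boxes earlier in the paper or in the cited literature. The only minor bookkeeping is to check that the $\poly(\tw)$ factor coming from \Cref{cor:sparsePartition} is the same $\tau$ that appears inside the logarithm (so that the $\log_\tau n$ simplification is legitimate), and to note that the algorithm of \cite{JLNRS05} runs in polynomial time given the sparse partition scheme, which itself is polynomial-time constructible via \Cref{thm:CoverTW} and \Cref{lem:NetMainLemma}. Both points follow directly from the corresponding statements, so no additional argument is required.
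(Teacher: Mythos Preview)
Your proposal is correct and matches the paper's approach exactly: the paper derives the corollary by plugging the $(O(1),\poly(\tw))$-sparse partition scheme of \Cref{cor:sparsePartition} into the $O(\tau\sigma^2\log_\tau n)$ bound of Jia \etal\ \cite{JLNRS05}, precisely as you outline.
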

The best-known previous result for both problems had a stretch of $\exp(\tw)\cdot\log n$ \cite{Fil20}, which is exponentially larger than ours in terms of the $\tw$ dependency. For further reading on the \UTSP and \TSP problems, see \cite{platzman1989spacefilling,bertsimas1989worst,JLNRS05,GHR06,HKL06,schalekamp2008algorithms,gorodezky2010improved,bhalgat2011optimal,BDRRS12,BLT14,Fil20, Filtser24}. Interestingly, our solution to the \UST problem in \Cref{cor:UST} produces a solution which is not a subgraph of the low treewidth input graph $G$. If one requires that the \UST will be a subgraph of $G$, the current state of the are is by Busch \etal \cite{BCFHHR23} who obtained stretch $O(\log^7 n)$. To date, this is also the best known upper bound for graph with bounded treewidth.

\subsection{Steiner Point Removal}
Given a graph $G=(V,E,w)$, and a subset of terminals $K\subseteq V$, in the Steiner point removal problem, we are looking for a graph $H=(K,E_H,w_H)$, which is a minor of $G$. We say that $H$ has stretch $t$, if for every $u,v\in K$, $d_G(u,v)\le d_H(u,v)\le t\cdot d_G(u,v)$. 
Englert \etal \cite{EGKRTT14} showed that given a graph $G$ which admits a $(\beta,\frac1\beta)$-padded decomposition scheme, one can compute a distribution $\calD$ over minors $H=(K,E_H,w_H)$, such that for every $u,v\in K$, and $H\in\supp(\calD)$, $d_G(u,v)\le d_H(u,v)$, and $\E_{H\sim\calD}[d_H(u,v)]\le O(\beta\cdot\log\beta)\cdot d_G(u,v)$. That is, the minor has expected stretch $O(\beta \log(\beta))$. Thus, by \Cref{thm:paddedTW}, we obtain:

\begin{corollary}\label{cor:SPR}
	Given a weighted graph $G=(V,E,w)$ with treewidth $\tw$, and a subset of terminals $K\subseteq V$, one can efficiently sample a minor $H=(K,E_H,c_H)$ of $G$ such that for every $u,v\in K$, $d_G(u,v)\le d_H(u,v)$, and $\E[d_H(u,v)]\le O(\log\tw\log\log\tw)\cdot d_G(u,v)$.
\end{corollary}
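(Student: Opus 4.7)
The plan is to combine \Cref{thm:paddedTW} with the black-box reduction of Englert \etal~\cite{EGKRTT14} from padded decomposition schemes to minor-based stochastic embeddings. First, I would apply \Cref{thm:paddedTW} to the input graph $G$ of treewidth $\tw$ to obtain a $(\beta, \Omega(1))$-padded decomposition scheme with padding parameter $\beta = O(\log \tw)$, samplable in polynomial time. I would check that this implies the precondition of the Englert \etal\ reduction, namely a $(\beta, \tfrac{1}{\beta})$-padded decomposition scheme: since the padding guarantee at radius $\gamma \Delta$ is monotone in $\delta$ and we have $\delta = \Omega(1) \ge \tfrac{1}{\beta}$ whenever $\tw$ is larger than an absolute constant (the small-$\tw$ case is trivial), the two notions coincide up to constants for our parameter regime.

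Second, I would invoke the Englert \etal\ construction, which given such a padded scheme, efficiently samples a minor $H=(K, E_H, w_H)$ of $G$ such that $d_G(u,v) \le d_H(u,v)$ for all $u,v\in K$, and $\E[d_H(u,v)] \le O(\beta \log \beta)\cdot d_G(u,v)$. Substituting $\beta = O(\log \tw)$ yields expected stretch
\[
O(\beta \log \beta) \;=\; O(\log \tw \cdot \log \log \tw),
\]
which is exactly the bound claimed. Polynomial-time samplability of $H$ follows because both (i) the padded decomposition from \Cref{thm:paddedTW} and (ii) the Englert \etal\ reduction are polynomial-time procedures.

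I do not expect a substantive obstacle here: the corollary is a clean plug-in of the improved padding parameter into the Englert \etal\ machinery. The only point deserving care is verifying that the two parameterizations of padded decompositions line up (i.e., $\delta = \Omega(1)$ versus $\delta = 1/\beta$), and handling the degenerate small-$\tw$ regime separately so that the $\log \log \tw$ factor is well-defined.
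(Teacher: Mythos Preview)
Your proposal is correct and matches the paper's own argument: the corollary is obtained by plugging the $(O(\log\tw),\Omega(1))$-padded decomposition scheme of \Cref{thm:paddedTW} into the Englert \etal~\cite{EGKRTT14} reduction, which yields expected stretch $O(\beta\log\beta)=O(\log\tw\log\log\tw)$. Your care about checking $\delta=\Omega(1)\ge 1/\beta$ and the small-$\tw$ edge case is appropriate and consistent with the paper's implicit treatment.
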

The best previously known result had expected stretch $O(\tw\log\tw)$ \cite{EGKRTT14,AGGNT19}. 
For the classic Steiner point removal, where we are looking for a single minor, the stretch is $2^{O(\tw\log\tw)}$~\cite{CCLMST23}. For further reading on the Steiner point removal problem, see \cite{G01,CXKR06,KKN15,Che18,Fil19,FKT19,Fil20,HL22,CCLMST23}.

\subsection{Zero Extension}

In the \ZEX problem, the input is a set $X$, a terminal set $K\subseteq X$, 
a metric $d_K$ on $K$, and a cost function 
$c\colon \binom{X}{2} \to \mathbb{R}_{\ge 0}$.

The goal is to find a \emph{retraction} $f:X\rightarrow K$ 
that minimizes $\sum_{\{x,y\}\in \binom{X}{2}} c(x,y)\cdot d_K(f(x),f(y))$.

A retraction is a surjective function $f:X\rightarrow K$ that satisfies $f(x)=x$ for all $x\in K$. The \ZEX problem, first proposed by Karzanov~\cite{Kar98}, 
generalizes the \ProblemName{Multiway Cut} problem \cite{DJPSY92} 
by allowing $d_K$ to be any discrete metric (instead of a uniform metric).

Lee and Naor \cite{LN05} (see also \cite{AFHKTT04,CKR04}) showed that for the case where the metric $(K,d_K)$ on the terminals admits a $(\beta,\frac1\beta)$-padded-decomposition, there is an $O(\beta)$ upper bound. By \Cref{thm:paddedTW}, we get:

\begin{corollary}\label{cor:0Extension}
	Consider an instance of the \ZEX problem $\left(K\subseteq X,d,_k,c:\binom{X}{2}\rightarrow \mathbb{R}_+\right)$, where the metric $(K,d_K)$ is a sub-metric of a shortest path metric of a graph with treewidth $\tw$. Then, one can efficiently find a solution with cost at most $O(\log\tw)$ times the cost of the optimal.\\
	In particular, there is a $O(\log \tw)$-approximation algorithm for the multiway cut problem for graphs of treewidth $\tw$.	
\end{corollary}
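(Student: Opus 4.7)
The plan is to obtain \Cref{cor:0Extension} as a direct consequence of \Cref{thm:paddedTW} combined with the Lee--Naor framework \cite{LN05} (see also \cite{AFHKTT04,CKR04}), which converts a padded decomposition of the terminal metric into an approximation algorithm for \ZEX. The main steps are: (i) invoke \Cref{thm:paddedTW} on the underlying treewidth-$\tw$ graph $G$ to obtain a $(O(\log\tw),\Omega(1))$-padded decomposition scheme on the vertex set of $G$; (ii) restrict each sampled partition to the terminal set $K$ to produce a padded decomposition scheme of $(K,d_K)$ with the same parameters; (iii) plug this scheme into Lee--Naor's rounding algorithm to get an $O(\log\tw)$-approximation.

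The small subtlety---and the only thing that is not strictly black-box---is step (ii). Formally, given a graph $H$ whose shortest-path metric extends $d_K$ (so $K\subseteq V(H)$ and $d_H(x,y)=d_K(x,y)$ for all $x,y\in K$), a sampled $\Delta$-bounded partition $\mathcal{P}$ of $V(H)$ induces a partition $\mathcal{P}_K=\{P\cap K : P\in\mathcal{P},\ P\cap K\ne\emptyset\}$ of $K$. Because $d_K$ is the restriction of $d_H$, every $\mathcal{P}_K$-cluster has $d_K$-diameter at most $\Delta$, and for every $x\in K$ and $\gamma\in[0,\delta]$, the $d_K$-ball $B_K(x,\gamma\Delta)=B_H(x,\gamma\Delta)\cap K$ is contained in $\mathcal{P}(x)\cap K$ whenever $B_H(x,\gamma\Delta)\subseteq \mathcal{P}(x)$. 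Hence the padding probability is preserved, so $(K,d_K)$ inherits a $(O(\log\tw),\Omega(1))$-padded decomposition scheme. Taking $H=G$ (if $(K,d_K)$ is a sub-metric of $d_G$ for a treewidth-$\tw$ graph $G$) completes step (ii).

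For step (iii), Lee--Naor show that whenever $(K,d_K)$ admits a $(\beta,1/\beta)$-padded decomposition scheme there is an efficient algorithm that rounds the natural metric-LP relaxation of \ZEX within a factor of $O(\beta)$. Since $\Omega(1)$ padding radius is stronger than $1/\beta=\Omega(1/\log\tw)$, their algorithm applies with $\beta=O(\log\tw)$, yielding the claimed $O(\log\tw)$-approximation for \ZEX. The \textsc{Multiway Cut} problem on a graph of treewidth $\tw$ is the special case of \ZEX in which $(K,d_K)$ is the uniform metric on the designated terminals embedded inside the input graph (whose shortest-path metric is trivially a sub-metric of itself), so the same bound applies and gives the ``in particular'' clause.

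There is no real obstacle: the only mild point to verify is that restricting a padded decomposition to a terminal subset preserves both the diameter bound and the padding probability, which follows immediately from the fact that $d_K$ is the restriction of $d_G$. Everything else is a direct citation.
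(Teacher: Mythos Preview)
Your argument for the main \ZEX claim is correct and follows the paper's approach exactly: the paper simply invokes Lee--Naor's $O(\beta)$ bound together with \Cref{thm:paddedTW}. Your explicit verification that restricting a padded decomposition of $G$ to the terminal set $K$ preserves both the diameter bound and the padding probability is a correct and helpful elaboration that the paper leaves implicit.

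One minor slip in your handling of the ``in particular'' clause: in the standard reduction of \textsc{Multiway Cut} to \ZEX the terminal metric $d_K$ is the \emph{uniform} metric, which is \emph{not} in general a sub-metric of the shortest-path metric of the input treewidth-$\tw$ graph $G$. Hence your sentence ``whose shortest-path metric is trivially a sub-metric of itself'' does not justify applying the first part of the corollary with parameter $\tw=\tw(G)$. A clean way to recover the stated bound is to observe that the uniform metric on any finite set is realized by a star (treewidth $1$), so the \ZEX bound in fact yields an $O(1)$-approximation for \textsc{Multiway Cut}, which is a fortiori $O(\log\tw)$. (The paper does not spell this step out either.)
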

The best previously known result was $O(\tw)$ approximation \cite{CKR04,AGGNT19}. For further reading on the \ZEX problem, see \cite{CKR04,FHRT03,AFHKTT04,LN05,EGKRTT14,FKT19}.

\subsection{Lipschitz Extension}

For a function $f:X\rightarrow Y$ between metric spaces $(X,d_X),(Y,d_Y)$, set $\|f\|_{\Lip}=\sup_{x,y\in X}\frac{d_Y(f(x),f(y))}{d_X(x,y)}$ to be the Lipschitz parameter of the function. In the Lipschitz extension problem, we are given a map $f:Z\rightarrow Y$ from a subset $Z$ of $X$. The goal is to extend $f$ to a function $\tilde{f}$ over the entire space $X$, while minimizing $\|\tilde{f}\|_{\Lip}$ as a function of $\|f\|_{\Lip}$.
Lee and Naor \cite{LN05}, proved that if a space admits a $(\beta,\frac1\beta)$-padded decomposition scheme, then given a function $f$ from a subset $Z\subseteq X$ into a closed convex set $C$ in some Banach space, one can extend $f$ into $\tilde{f}:X\rightarrow C$
such that $\|\tilde{f}\|_{\Lip}\le O(\beta)\cdot \|f\|_{\Lip}$.
By \Cref{thm:paddedTW}, we obtain:

\begin{corollary}[Lipschitz Extension]\label{thm:LipschitzExtension}
	Consider a graph $G=(V,E,w)$ with treewidth $\tw$, and let $f:V'\rightarrow C$ be a map from a subset $V'\subseteq V$ into $C$, where $C$ is a convex closed set of some Banach space.
	Then there is an extension $\tilde{f}:X\rightarrow C$ such that $\|\tilde{f}\|_{\Lip}\le O(\log\tw)\cdot \|f\|_{\Lip}$.
\end{corollary}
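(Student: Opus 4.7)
The plan is a direct combination of \Cref{thm:paddedTW} with the Lee--Naor Lipschitz extension machinery \cite{LN05}, following the same template used for all the preceding corollaries in this section. There is no new ingredient required beyond what has already been established.

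First, I would invoke \Cref{thm:paddedTW} to obtain for the shortest-path metric of $G=(V,E,w)$ a $(\beta,\delta)$-padded decomposition scheme with $\beta = O(\log \tw)$ and $\delta = \Omega(1)$. After rescaling $\beta$ by a constant factor (which costs only a constant in the padding parameter), we may assume that the scheme is in fact $(\beta,1/\beta)$-padded, which is the normalized form required by the Lee--Naor hypothesis. This is standard: increasing $\beta$ only makes the probabilistic padding guarantee weaker, and shrinking $\delta$ down to $1/\beta$ is allowed since the guarantee must hold for all $\gamma \in [0,\delta]$.

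Second, I would apply the Lee--Naor Lipschitz extension theorem in the form quoted in the paragraph preceding the corollary: whenever a metric space $(X,d_X)$ admits a $(\beta,1/\beta)$-padded decomposition scheme, any Lipschitz map $f : V' \to C$ from a subset $V' \subseteq X$ into a closed convex subset $C$ of a Banach space admits an extension $\tilde f : X \to C$ with $\|\tilde f\|_{\Lip} \le O(\beta)\cdot \|f\|_{\Lip}$. Substituting the bound $\beta = O(\log \tw)$ from the previous step yields the claimed $\|\tilde f\|_{\Lip} \le O(\log \tw)\cdot \|f\|_{\Lip}$.

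The main (and only) obstacle was already resolved by \Cref{thm:paddedTW}; there is no additional combinatorial or analytic work needed here. I would not reproduce the Lee--Naor argument itself, but simply cite \cite{LN05} as a black box. I might add a short remark that this yields an exponential improvement over the previously best-known bound $O(\tw)$, which followed from the earlier $O(\tw)$ padded-decomposition parameter for treewidth-$\tw$ graphs via $K_{\tw+2}$-minor-freeness and the results of \cite{AGGNT19}.
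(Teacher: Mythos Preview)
Your proposal is correct and matches the paper's approach exactly: the corollary is stated immediately after quoting the Lee--Naor result and is derived by plugging the $O(\log\tw)$ padding parameter from \Cref{thm:paddedTW} into that black box, with no additional argument.
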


\subsection{Embedding into \texorpdfstring{$\ell_p$}{lp} spaces}
Metric embedding is a map between two metric spaces that preserves all pairwise distances up to a small stretch. We say that embedding $f:X\rightarrow Y$ between $(X,d_X)$ and $(Y,d_Y)$ has distortion $t$ if for every $x,y\in X$ it holds that $d_X(x,y)\le d_Y(f(x),f(y))\le t\cdot d_X(x,y)$. Krauthgamer, Lee, Mendel and Naor \cite{KLMN04} (improving over Rao \cite{Rao99}) showed that every $n$-point metric space that admits $(\beta,\frac1\beta)$-padded decomposition scheme can be embedded into an $\ell_p$ space with distortion $O(\beta^{1-\frac1p}\cdot(\log n)^{\frac1p})$.
Previously, it was known that the shortest path metric of an $n$ point graph with treewidth $\tw$ (or more generally $K_{\tw}$-minor free) embeds into $\ell_p$ space with distortion $O((\tw)^{1-\frac1p}\cdot(\log n)^{\frac1p})$ \cite{KLMN04,AGGNT19}.
By \Cref{thm:paddedTW}, we conclude:

\begin{corollary}\label{cor:EmbeddingLp}
	Let $G$ be an $n$-point weighted graph with treewidth $\tw$. Then there exist embedding of $G$ into $\ell_p$ ($1\le p\le\infty$) with distortion $O((\log \tw)^{1-\frac1p}\cdot(\log n)^{\frac1p})$.
\end{corollary}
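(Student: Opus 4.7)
The proof will be a direct combination of our padded decomposition result with the framework of Krauthgamer, Lee, Mendel, and Naor~\cite{KLMN04}. The plan is as follows. First, I would invoke \Cref{thm:paddedTW} to conclude that the shortest path metric of $G$ admits a $(\beta,\delta)$-padded decomposition scheme with $\beta = O(\log \tw)$ and $\delta = \Omega(1)$. Since the KLMN embedding hypothesis is stated for $(\beta,\tfrac{1}{\beta})$-padded decomposition schemes, I would observe that our scheme trivially satisfies this requirement (up to adjusting constants) because $\delta = \Omega(1) \geq \tfrac{1}{\beta}$ whenever $\beta$ is at least a constant, so one may use the scheme with the restricted padding range $\gamma \in [0, \tfrac{1}{\beta}]$ without loss.

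Next, I would apply the KLMN embedding theorem verbatim: for any $n$-point metric admitting a $(\beta, \tfrac{1}{\beta})$-padded decomposition scheme, there is an embedding into $\ell_p$ of distortion $O\!\bigl(\beta^{1-1/p} \cdot (\log n)^{1/p}\bigr)$. Substituting $\beta = O(\log \tw)$ yields distortion
\[
O\!\Bigl((\log \tw)^{1-1/p} \cdot (\log n)^{1/p}\Bigr),
\]
as claimed. The implication is immediate at each $p \in [1,\infty]$, with the endpoint cases $p=1$ (distortion $O(\log n)$) and $p=\infty$ (distortion $O(\log \tw)$) recovered by taking the appropriate limit.

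There is essentially no obstacle here beyond the bookkeeping of constants in matching our scheme's parameters $(O(\log\tw),\Omega(1))$ to the $(\beta, 1/\beta)$ normalization used in \cite{KLMN04}; this is standard and just requires rescaling $\beta$ by an absolute constant. The only conceptual content in the corollary is the improved padding parameter delivered by \Cref{thm:paddedTW}, which replaces the previously known $O(\tw)$ with $O(\log \tw)$ and hence gives an exponential improvement in the $\tw$-dependence of the distortion. I would therefore present the proof as a single short paragraph: quote \Cref{thm:paddedTW}, cite the KLMN embedding theorem, substitute, and conclude.
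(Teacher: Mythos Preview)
Your proposal is correct and follows exactly the paper's approach: the corollary is stated as an immediate consequence of plugging the padding parameter $\beta = O(\log \tw)$ from \Cref{thm:paddedTW} into the KLMN embedding theorem, with the observation that a $(\beta,\Omega(1))$-scheme is in particular a $(\beta,\tfrac{1}{\beta})$-scheme once $\beta$ exceeds a constant. The paper gives no additional argument beyond this.
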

Since every finite subset of the Euclidean space $\ell_2$ embed isometrically into $\ell_p$ (for $1\le p\le\infty$), it follows that for $1\le p\le\infty$ such $G$ can be embedded into $\ell_p$ space (in particular $\ell_1$) with distortion $O(\sqrt{\log \tw\cdot \log n})$.

A norm space of special interest is $\ell_\infty$, as every finite metric space embeds into $\ell_\infty$ isometrically (i.e. with distortion $1$, this is the so called Fr\'echet embedding). However, the dimension of such embedding is very large: $\Omega(n)$.
Krauthgamer \etal \cite{KLMN04} proved that every graph with treewidth $\tw$ (or more generally $K_{\tw}$-minor free) embeds into $\ell^d_\infty$ with dimension $d=\tilde{O}(3^\tw)\cdot\log n$ and distortion $O(\tw^2)$.
This was recently improved by Filtser \cite{Filtser24} who showed that every graph with treewidth $\tw$ (or more generally $K_{\tw}$-minor free) embeds into $\ell^d_\infty$ with dimension $d=\tilde{O}(\tw^2)\cdot\log n$  and distortion $O(\tw)$ (alternatively distortion $3+\eps$ and dimension $d=\tilde{O}(\frac{1}{\eps})^{\tw+1}\cdot\log n$).
More generally, Filtser \cite{Filtser24} proved that if a graph $G$ admits a $(\beta,s)$-padded partition cover scheme, then $G$ embeds into $\ell^d_\infty$ with distortion 
$(1+\eps)\cdot2\beta$ and dimension $d=O\left(\frac{s}{\eps}\cdot\log\frac{\beta}{\eps}\cdot\log(\frac{n\cdot\beta}{\eps})\right)$. Using our \Cref{thm:PaddedPartitionCoverTW}, we conclude:
\begin{corollary}\label{cor:EmbeddingLinfty}
	Let $G$ be an $n$-point weighted graph with treewidth $\tw$. Then there exist embedding of $G$ into $\ell^d_\infty$ with distortion $O(1)$ and dimension $d=\poly(\tw)\cdot\log n$.
\end{corollary}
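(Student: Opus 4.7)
The plan is to directly combine our \Cref{thm:PaddedPartitionCoverTW} with Filtser's black-box reduction \cite{Filtser24} from padded partition cover schemes to $\ell_\infty$ embeddings, which is already quoted in the paragraph immediately preceding the corollary. No new analysis should be needed; the entire work is bookkeeping of the parameters.

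First, I would invoke \Cref{thm:PaddedPartitionCoverTW} to conclude that $G$ admits a $(12,\poly(\tw))$-padded partition cover scheme. In the notation of Filtser's statement, this gives $\beta=12$ and $s=\poly(\tw)$. Then I would apply Filtser's theorem with a constant choice of $\eps$, say $\eps=1$: this produces an embedding of $G$ into $\ell^d_\infty$ with distortion $(1+\eps)\cdot 2\beta = 4\beta = 48 = O(1)$, and with dimension
\[
d \;=\; O\!\left(\tfrac{s}{\eps}\cdot \log\tfrac{\beta}{\eps}\cdot \log\tfrac{n\beta}{\eps}\right)
\;=\; O\!\left(\poly(\tw)\cdot \log(12)\cdot \log(12n)\right)
\;=\; \poly(\tw)\cdot \log n,
\]
which is exactly the bound claimed in the corollary.

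There is essentially no obstacle: the two main ingredients (the padded partition cover scheme in \Cref{thm:PaddedPartitionCoverTW} and Filtser's reduction) do all the work, and the only remaining care is to verify that Filtser's quoted statement applies verbatim to shortest-path metrics of weighted graphs (which it does, since padded partition covers are defined for arbitrary metric spaces in \Cref{def:PaddedPartitionCover}). If one wanted a cleaner constant in the distortion, one could take $\eps$ to be any small constant, trading off against the hidden constant inside the $\poly(\tw)\cdot\log n$ dimension bound; since both the distortion and the dimension are stated up to $O(\cdot)$, this choice is immaterial.
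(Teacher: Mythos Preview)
Your proposal is correct and matches the paper's own argument essentially verbatim: the paper derives \Cref{cor:EmbeddingLinfty} in exactly this way, by plugging the $(12,\poly(\tw))$-padded partition cover scheme of \Cref{thm:PaddedPartitionCoverTW} into Filtser's black-box reduction \cite{Filtser24} (quoted immediately before the corollary) with a constant choice of $\eps$. There is nothing to add.
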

Note that \Cref{cor:EmbeddingLinfty} provides an exponential improvement in the dependence on $\tw$ compared to the previous best known embedding into $\ell_\infty$ with constant distortion.
For further reading about metric embedding into $\ell_p$ spaces see \cite{OS81,Bou85,LLR95,Rao99,KLMN04,GNRS04,LR10,AFGN22,KLR19,Fil19,Kumar22,Filtser24}.

\subsection{Stochastic decomposition for minor-free graphs - a reduction from additive stretch embeddings}\label{subsec:stocdecomp}

A major open question is to determine the padding parameter of $ K_r$-minor-free graphs. The current state of the art is $O(r)$ \cite{AGGNT19} (see also \cite{Fil19Approx}), while the natural conjecture is exponentially smaller  $O(\log r)$. A somewhat weaker guarantee which we call $(t,p,\Delta)$-\emph{stochastic decomposition}, is a distribution over partitions with diameter $\Delta$ such that every pair of vertices at distance at most $\frac{\Delta}{t}$ is clustered together with probability at least $p$. Compared to padded decomposition, the guarantee here is over pairs (instead of balls), and there is a threshold distance for the guarantee (instead of a linear dependence as in padded decomposition). Nonetheless, for many applications, a stochastic decomposition is good enough. Moreover, in most cases, the parameters of padded and stochastic decompositions are the same. The only case that we are aware of where they are different is high-dimensional Euclidean spaces. In particular, for minor-free graphs, the best-known result is a  $(O(r),\frac12,\Delta)$-stochastic decomposition \cite{AGGNT19}. Here, we argue that if the parameters of recently studied stochastic embeddings with additive distortion are improved, then our \Cref{thm:paddedTW} will imply much better stochastic decompositions for minor-free graphs.

\begin{definition}[Stochastic embedding with additive distortion] Consider an $n$-vertex weighted graph $G=(V,E,w)$ with diameter $D$. A
	stochastic additive embedding of $G$ into a distribution over embeddings $f$ into graphs $H$ with treewidth $\tau$ and expected additive distortion $\eps\cdot D$, is a distribution $\calD$ over maps $f$ from $G$ into graphs $H$, such that
	\begin{itemize}
		\item Small treewidth: Any graph $H$ in the support has treewidth at most $\tau$.
		\item Dominating: for every $u, v\in V$, and $(f,H)\in\supp(\calD)$, $d_G(u, v)\le d_H(f(u),f(v))$.
		\item Expected additive distortion: for every $u, v\in V$, $\E_{(f,H)\in\supp(\calD)}\left[d_H(f(u),f(v))\right]\le d_G(u,v)+\eps\cdot D$.
	\end{itemize}  
\end{definition}
Recently, Filtser and Le \cite{FL22} (see also \cite{FKS19,CFKL20,FL21}) constructed a stochastic embedding of every $n$-vertex $K_r$-minor free graph $G$ with diameter $D$, into a distribution over graphs with treewidth $O(\frac{(\log\log n)^2}{\eps^2})\cdot \chi(r)$ and expected additive distortion $\eps\cdot D$. Here $\chi(r)$ is an extremely fast-growing function of $r$ (an outcome of the minor-structure theorem \cite{RS03}). 
If one would construct such stochastic embedding into treewidth $\poly(\frac r\eps)$, then \Cref{lem:AdditiveEmbeddingToDecomposition} bellow will imply stochastic decomposition with parameter $O(\log r)$. Furthermore, even allowing treewidth $\poly(\frac {r\cdot\log n}{\eps})$ will imply stochastic decomposition with parameter $O(\log r+\log\log n)$ which is yet unknown as well.
\begin{lemma}\label{lem:AdditiveEmbeddingToDecomposition}
	Suppose that there is a coordinate monotone function $\pi(\eps,r,n)$, such that every $n$-vertex $K_r$-minor free graph with diameter $D$ admits a stochastic embedding into a distribution over graphs with treewidth $\pi(\eps,r,n)$ and expected additive distortion $\eps\cdot D$.
	Then for every $\Delta>0$, every $n$-vertex $K_r$-minor free admits a $(t,\frac12,\Delta)$-stochastic decomposition, for $t=O(\log\pi(\Omega(\frac{1}{r}),r,n))$.
\end{lemma}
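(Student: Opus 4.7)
The plan is a two-level decomposition. The outer step uses the known $(O(r),\Omega(1/r))$-padded decomposition of $K_r$-minor-free graphs from~\cite{AGGNT19}, applied at a scale $D$ chosen much larger than $\Delta$. The inner step, applied within each outer cluster, combines the hypothesized stochastic embedding into small treewidth with our \Cref{thm:paddedTW} to sub-partition down to diameter $\Delta$. The whole argument is a careful choice of parameters so that the additive distortion $\eps D$ from the embedding matches the target pair-distance $\Delta/t$ rather than the much larger outer diameter $D$.

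Fix $\Delta>0$. Set $\eps = c/r$ for a sufficiently small constant $c$, let $\tau = \pi(\eps,r,n)$, and take $t = C\log\tau$ for a large enough constant $C$. First I would sample a strong-diameter outer padded decomposition of $G$ from~\cite{AGGNT19} at scale $D = \Theta(r\Delta/t)$. Each resulting cluster $C$ is $K_r$-minor-free with $\diam(G[C])\le D$, and for any pair $(u,v)$ with $d_G(u,v)\le \Delta/t$ the (linear form of the) padding guarantee gives $\Pr[u,v\text{ separated by outer}]\le O(r\cdot(\Delta/t)/D)=O(1/C)\le 1/10$.

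Within each outer cluster $C$, I would apply the hypothesized stochastic embedding to $G[C]$ with parameter $\eps$, obtaining a random graph $H_C$ of treewidth $\tau$ with expected additive distortion $\eps D = O(\Delta/t)$. Then I would apply \Cref{thm:paddedTW} to $H_C$ at scale $\Delta$ to sample an $(O(\log\tau),\Omega(1),\Delta)$-padded decomposition, and pull this partition back to $C$ via $f$. The union over outer clusters is the final partition of $V(G)$. Because the embedding is dominating ($d_G\le d_{H_C}$ on images of pairs), the $d_G$-diameter of every final cluster is at most $\Delta$.

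For a pair $(u,v)$ with $d_G(u,v)\le\Delta/t$, conditioning on the outer step keeping them together, $\E[d_{H_C}(f(u),f(v))]\le d_G(u,v)+\eps D = O(\Delta/t)$. Combining the linear padding bound $\Pr[\text{inner sep.}\mid f,H_C]\le O(\log\tau)\cdot d_{H_C}/\Delta$ within the padding range $\Omega(1)$, with Markov's inequality to control the tail beyond that range, yields $\Pr[\text{inner sep.}]\le O(\log\tau\cdot \E[d_{H_C}]/\Delta)=O(\log\tau/t)=O(1/C)\le 1/10$. A union bound on the two steps gives $\Pr[u,v\text{ together in final partition}]\ge 1-2/10\ge 1/2$, as required, establishing the $(t,1/2,\Delta)$-stochastic decomposition with $t=O(\log\tau)=O(\log\pi(\Omega(1/r),r,n))$. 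The only delicate point, and the main obstacle to resolve, is the choice $D=\Theta(r\Delta/t)$: any larger $D$ makes $\eps D$ exceed $\Delta/t$ and breaks the inner estimate, whereas any smaller $D$ makes the outer separation probability too large; the balance is possible exactly because one may take $\eps=\Theta(1/r)$.
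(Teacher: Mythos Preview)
Your proposal is correct and follows essentially the same approach as the paper: the same two-level construction (outer $O(r)$-padded decomposition at scale $\Theta(r\Delta/t)$, then stochastic embedding with $\eps=\Theta(1/r)$ inside each cluster, then \Cref{thm:paddedTW} at scale $\Delta$), with the same parameter balance. The only cosmetic difference is bookkeeping: the paper defines three events $\Psi_1,\Psi_2,\Psi_3$ (outer ball padded, additive distortion $\le 5\eps\Delta'$ via Markov, inner ball padded) and multiplies their $\ge 4/5$ conditional probabilities, whereas you fold Markov directly into an expected-separation bound $\Pr[\text{inner sep.}]\le O(\log\tau)\cdot\E[d_{H_C}]/\Delta$ and finish with a union bound; both are valid.
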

\begin{proof}
	Consider an $n$-vertex, $K_r$-minor-free graph $G=\left(V,E,w\right)$. 
	Fix $\Delta>0$. We will construct a distribution $\mathcal{D}$ over partitions of $G$ such that every $\mathcal{P}\in\supp(\mathcal{D})$ will be weakly $\Delta$-bounded and for every pair $u,v$ at distance $\frac{\Delta}{t}$, $u,v$ will be clustered together with probability at least $\frac12$.
	
	The partition will be constructed as follows:
	\begin{enumerate}
		\item Use Filtser's \cite{Fil19Approx} strong padded decomposition scheme with diameter parameter $\Delta'=O(\frac{r}{t})\cdot\Delta$ to sample a partition $\cP'$ of $V$. Note that every cluster $C\in\cP'$ is a $K_r$-minor free graph of diameter $\Delta'$.
		\item Fix $\eps=\frac{\Delta}{5\cdot t\cdot \Delta'}=\Omega(\frac1r)$. For every cluster $C\in\cP'$, sample an embedding $f_C$ of $G[C]$ into a graph $H_C$ with treewidth $\pi(\eps,r,|C|)\le \pi(\Omega(\frac1r),r,n)$ and expected additive distortion $\eps\cdot\Delta'$. We will abuse notation and use $v\in C$ to denote both $v$ and $f_C(v)$.
		\item For every $C\in \cP'$, use \Cref{thm:paddedTW} to sample a 
		$\left(O(\log( \pi(\Omega(\frac1r),r,n))),\Omega(1),\Delta\right)$-padded decomposition $\cP_C$ of $H_C$.
	\end{enumerate}
	The final partition we return is the union of all the created partitions $\cP=\cup_{C\in\cP'}(\cP_C\cap V)$. We claim that $\cP$ has a weak diameter at most $\Delta$. Consider a cluster $P\in\cP$ and $u,v\in P$. There is some cluster $C\in\cP'$ containing both $u$ and $v$. Then, it holds that $d_G(u,v)\le  d_{G[C]}(u,v)\le d_{H_C}(u,v)\le\Delta$.
	
	Consider a pair of vertices $u,v$ at distance $\le\frac\Delta t$.  Denote by $\Psi_1$ the event that the ball of radius $\frac\Delta t$
	around $v$ belongs to a single cluster in $\cP'$. Denote by $\Psi_2$ the event that conditioned on $\Psi_1$ occurring, $d_{H_C}(u,v)\le d_{G}(u,v)+5\cdot\eps\cdot\Delta'$. Denote by $\Psi_3$ the event that conditioned on $\Psi_2$ occurring, $u$ and $v$ belong to the same cluster of $\cP_C$.
	
	Let $C$ denote the cluster containing $v$ in $\cP'$. By choosing the constant in $\Delta'$ large enough, it holds that 
	\[
	\Pr\left[\Psi_1\right]=\Pr\left[\ball_{G}(v,\frac{\Delta}{t})\subseteq C\right]=\Pr\left[\ball_{G}(v,\frac{\Delta}{t\Delta'}\cdot\Delta')\subseteq C\right]\ge e^{-O(r)\cdot\frac{\Delta}{t\Delta'}}\ge\frac{4}{5}~.
	\]
	Note that if $\ball_{G}(v,\frac{\Delta}{t})\subseteq C$, that the entire shortest path from $u$ to $v$ is contained in $C$, and hence $d_{G[C]}(u,v)=d_{G}(u,v)$.
	Using Markov, the probability that the additive distortion by $f_C$ is too large is bounded by:
	\begin{align*}
		\Pr\left[\overline{\Psi_{2}}\mid\Psi_{1}\right] & =\Pr\left[d_{H_{C}}(u,v)> d_{G[C]}(u,v)+5\cdot\epsilon\cdot\Delta'\right]\\
		& \le\frac{\E\left[d_{H_{C}}(u,v)-d_{G[C]}(u,v)\right]}{5\cdot\epsilon\cdot\Delta'}\le\frac{1}{5}~.
	\end{align*}
	We conclude $\Pr\left[\Psi_{2}\mid\Psi_{1}\right]\ge\frac{4}{5}$.
	Note that if $\Psi_{2}$ indeed occurred, then  $d_{H_{C}}(u,v)\le d_{G[C]}(u,v)+5\cdot\epsilon\cdot\Delta'=d_{G[C]}(u,v)+\frac{\Delta}{t}\le \frac{2\Delta}{t}$.
	Finally in step 3 we sample a padded decomposition $C_\cP$ of $H_C$. 
	As $H_C$ has treewidth $\pi(\Omega(\frac1r),r,n)$, the probability that $u$ and $v$ are clustered together is bounded by:
	\[
	\Pr\left[\Psi_{3}\mid\Psi_{1}\wedge\Psi_{2}\right]\ge\Pr\left[\ball_{H_{C}}(v,\frac{2\Delta}{t})\subseteq\cP_{C}(c)\right]\ge e^{-O(\log\pi(\Omega(\frac{1}{r}),r,n))\cdot\frac{2}{t}}\ge\frac{4}{5}~,
	\]
	where the last inequality holds for a large enough constant in the definition of $t$.
	We conclude
	\[
	\Pr\left[\cP(v)=\cP(u)\right]\ge\Pr\left[\Psi_{1}\wedge\Psi_{2}\wedge\Psi_{3}\right]\ge\left(\frac{4}{5}\right)^{3}>\frac{1}{2}~.
	\]
\end{proof}

\printbibliography

\end{document}